\theoremstyle{plain}
\newtheorem{lemma}{Lemma}[section]
\newtheorem{theorem}{Theorem}[section]
\newtheorem{corollary}{Corollary}[section]
\theoremstyle{remark}
\newtheorem{remark}{Remark}[section]
\newtheorem{assumption}{Assumption}[section]
\newtheorem{condition}{Condition}[section]
\newtheorem{definition}{Definition}[section]
\renewcommand\vec{\bm}
\newcommand\Lip{\mathrm{Lip}}
\newcommand{\Wass}{\mathcal{W}_1}
\newcommand{\cond}{\mathrm{cond}}
\DeclarePairedDelimiter\abs{\lvert}{\rvert}%
\DeclarePairedDelimiter\norm{\lVert}{\rVert}%
\DeclareMathOperator{\diag}{diag}
\newcommand{\al}{\alpha}
\newcommand{\bt}{\beta}
\newcommand{\ga}{\gamma}
\newcommand{\ep}{\epsilon}
\newcommand{\la}{\lambda}
\newcommand{\Om}{\Omega}
\newcommand{\grad}{\nabla}
\newcommand{\bbE}{\mathbb{E}}
\newcommand{\bbR}{\mathbb{R}}
\newcommand{\bbQ}{\mathbb{Q}}
\newcommand{\bbZ}{\mathbb{Z}}
\newcommand{\calA}{\mathcal{A}}
\newcommand{\calD}{\mathcal{D}}
\newcommand{\calH}{\mathcal{H}}
\newcommand{\calN}{\mathcal{N}}
\newcommand{\calO}{\mathcal{O}}
\newcommand{\calS}{\mathcal{S}}
\newcommand{\calX}{\mathcal{X}}
\newcommand{\Vvec}{\vec{V}}
\newcommand{\Uvec}{\vec{U}}
\newcommand{\Wvec}{\vec{W}}
\newcommand{\Xvec}{\vec{X}}
\newcommand{\Xvectil}{\tilde{\vec{X}}}
\newcommand{\Yvec}{\vec{Y}}
\newcommand{\Zvec}{\vec{Z}}
\newcommand{\cvec}{\vec{c}}
\newcommand{\lvec}{\vec{l}}
\newcommand{\svec}{\vec{s}}
\newcommand{\vvec}{\vec{v}}
\newcommand{\xvec}{\vec{x}}
\newcommand{\yvec}{\vec{y}}
\newcommand{\zvec}{\vec{z}}
\newcommand{\alvec}{\vec{\alpha}}
\newcommand{\tauvec}{\vec{\tau}}
\newcommand{\lavec}{\vec{\lambda}}
\newcommand{\gabar}{\bar{\ga}}
\newcommand{\labar}{\bar{\lambda}}
\newcommand{\lavecbar}{\bar{\vec{\lambda}}}
\newcommand{\xivec}{\vec{\xi}}
\author{Theodore W. Grunberg\thanks{Department of Electrical Engineering and Computer Science, MIT, Cambridge, MA 02139 USA 
  (\texttt{grunberg@mit.edu}).}
\and Domitilla Del Vecchio\thanks{Department of Mechanical Engineering, MIT, Cambridge, 
MA 02139 USA 
  (\texttt{ddv@mit.edu}).}
}
\begin{document}

\title{A Stein's Method Approach to the Linear Noise Approximation for Stationary Distributions of Chemical Reaction Networks}

\maketitle

\begin{abstract}
Stochastic Chemical Reaction Networks are continuous time Markov chain models that describe the time evolution of the molecular counts of species interacting stochastically via discrete reactions. Such models are ubiquitous in systems and synthetic biology, but often have a large or infinite number of states, and thus are not amenable to computation and analysis. Due to this, approximations that rely on the molecular counts and the volume being large are commonly used, with the most common being the Reaction Rate Equations and the Linear Noise Approximation. For finite time intervals, Kurtz established the validity of the Reaction Rate Equations and Linear Noise Approximation, by proving law of large numbers and central limit theorem results respectively. However, the analogous question for the stationary distribution of the Markov chain model has remained mostly unanswered, except for chemical reaction networks with special structures or bounded molecular counts. In this work, we use Stein's Method to obtain sufficient conditions for the stationary distribution of an appropriately scaled Stochastic Chemical Reaction Network to converge to the Linear Noise Approximation as the system size goes to infinity. Our results give non asymptotic bounds on the error in the Reaction Rate Equations and in the Linear Noise Approximation as applied to the stationary distribution. As a special case, we give conditions under which the global exponential stability of an equilibrium point of the Reaction Rate Equations is sufficient to obtain our error bounds, thus permitting one to obtain conclusions about the Markov chain by analyzing the deterministic Reaction Rate Equations.
\end{abstract}

\section{Introduction}\label{sec:introduction}
\subsection{Background and existing results}
Systems composed of a set of chemical species interacting via a finite set of reactions are most often modeled as a Stochastic Chemical Reaction Network (SCRN), a continuous time Markov chain (CTMC) model for how the molecular counts of each species change over time. 
Such models have found widespread use in systems and synthetic biology as a method to capture the counts of relevant biomolecules within a single cell, where due to the small numbers of molecules present, the effect of stochasticity in the system's evolution can be profound~\cite{wilkinson2009stochastic,tian2006stochastic,paulsson2000random,mcadams1997stochastic,arkin1998stochastic,elowitz2002stochastic,weinberger2005stochastic}.
The time evolution of the probability distribution of molecular counts can be computed via the Kolmogorov forward equations, often called the Chemical Master Equation \cite{gillespie1992rigorous}, or by a Monte Carlo approach using the Doob-Gillespie Algorithm \cite{doob1945markoff,gillespie1976general}. However, due to the very large, and often countably infinite, number of states in the Markov chain, simpler approximations have been proposed, the most prominent of which exploit the fact that for many physical systems the volume in which the reactions occur and the molecular counts of all species are large. These approximations are formalized by considering a family of Markov chains $\Xvec_\Om(t)$, where $\Xvec_\Om(t)$ is the vector of molecular counts at time $t$ and $\Om$ is the volume. The Reaction Rate Equation (RRE) model is an Ordinary Differential Equation (ODE), whose solution $\vvec(t)$ is a deterministic approximation for $\Xvec_\Om(t)/\Om$, rigorously justified by Kurtz on finite time intervals in the sense that under mild technical conditions, $\Xvec_\Om(t)/\Om \rightarrow_p \vvec(t)$ uniformly on any finite time interval \cite{kurtz1972relationship}. Due to $\vvec(t)$ being the solution to an ODE, the computational and analytical simplicity of the RRE model has lead to its widespread use in both modeling in systems biology \cite{Alon:2006aa} and design in synthetic biology \cite{BFS,gardner2000construction,elowitz2000synthetic}. In essence, the RRE model approximates the concentrations by a constant at each point in time, neglecting any stochastic fluctuation away from this value. In order to account for these fluctuations, Van Kampen expanded the Forward Kolmogorov Equations in powers of $1/\sqrt{\Om}$ \cite{kampen1961power,van1992stochastic}. The first order approximation he derived in this manner, called the Linear Noise Approximation (LNA), adds a correction term to the RREs, specifically a Gaussian approximation to fluctuations of $\Xvec_\Om(t)/\Om$ about $\vvec(t)$. The LNA was rigorously justified by Kurtz in the sense of giving technical conditions for the central limit theorem result $\frac{1}{\sqrt{\Om}}\left(\Xvec_\Om(t)-\Om \vvec(t)\right) \rightarrow_p \calN(0, P(t))$ to hold uniformly on finite time intervals with an appropriate covariance matrix $P(t)$. The LNA is similar to the RREs in that the approximating distribution can be easily computed from the solution to an ODE, and due to this simplicity it has been used widely to quantify noise in natural biological systems, \cite{paulsson2004summing,paulsson2005models,elf2003fast,tao2007effect}, as well as for design \cite{singh2009optimal,singh2011negative}, and inference \cite{komorowski2009bayesian,fearnhead2014inference,finkenstadt2013quantifying} in synthetic biology.

However, a key weakness of the theoretical justification for the RREs and the LNA is that the results given in \cite{kurtz1972relationship} do not say anything about the relationship between the stationary distribution of the Markov chain model and the steady state behavior of the RREs or LNA, even when the stationary distribution exists uniquely for all $\Om$ and the RREs/LNA has a well defined limiting behavior. In a later work, Kurtz gave a sufficient condition for the uniform in time convergence of $\Xvec_\Om(t)/\Om$ to $\vvec(t)$ and of $\frac{1}{\sqrt{\Om}}\left(\Xvec_\Om(t)-\Om \vvec(t)\right)$ to $\calN(0,P(t))$~\cite{kurtz1976limit}. However, that result requires that the state space of $\Xvec_\Om(t)/\Om$ be bounded uniformly in $\Om$, which rules out the analysis of many biological systems of interest where the molecular counts of certain species can grow unbounded, and thus for all $\Om$, $\Xvec_\Om(t)$ evolves on a countably infinite subset of the integer lattice. This technical difficulty in using the RREs or LNA to approximate the stationary distribution of an SCRN has long been understood \cite{van1992stochastic}, and in fact it is often the case that the Markov chain and RREs/LNA models have different long term behavior \cite{ioannis2008noise}.
In this work we give sufficient conditions under which there exists a constant $C$ such that for all sufficiently large $\Om$,
\begin{equation}\label{eq:general_result}
    \Wass\left(\frac{1}{\sqrt{\Om}}\left(\Xvec_\Om^\infty-\Om \vvec^*\right), \Yvec^\infty\right) \leq C\frac{\ln{\Om}}{\sqrt{\Om}},
\end{equation}
where $\Xvec_\Om^\infty$ is distributed according to the stationary distribution of $\Xvec_\Om(t)$, $\vvec^*$ is an appropriate equilbrium point of the RRE, $\Yvec^\infty \sim \calN(0, P^\infty)$ with $P^\infty$ a covariance matrix computed via the LNA, and $\Wass$ is the 1-Wasserstein distance between the laws of the two random variables. This result immediately implies both that the Markov chain's stationary distribution converges to an equilibrium point of the RRE in the sense $\Xvec_\Om^\infty/\Om \rightarrow_d \vvec^*$, where $\vvec^*$ is an appropriate equilbrium of the RRE, and that $\frac{1}{\sqrt{\Om}}\left(\Xvec_\Om(t)-\Om \vvec(t)\right) \rightarrow_d \calN(0, P^\infty)$~\cite{gibbs2002choosing}. We give sufficient conditions in terms of second moment of $\frac{1}{\sqrt{\Om}}\left(\Xvec_\Om^\infty-\Om \vvec^*\right)$ for \eqref{eq:general_result} to hold, and additionally give sufficient conditions on only the RRE model for \eqref{eq:general_result} to hold, which include global exponential stability of $\vvec^*$. Interestingly, the convergence rate given by \eqref{eq:general_result} is identical to the convergence rate shown by Kurtz on finite time intervals \cite{kurtz1978strong}.

The primary technique we use to prove our results is Stein's Method~\cite{chatterjee2014short}. Stein's Method is a powerful technique for proving central limit theorems introduced by Stein in 1972~\cite{stein1972bound,stein1986approximate}, and used in a wide variety of situations since then, such as when the limiting distribution is Poisson \cite{barbour1988stein} or Binomial \cite{ehm1991binomial}, as well as for convergence to a diffusion process \cite{barbour1990stein}. The most closely related work to our study comes from queueing theory, where Braverman et al. using Stein's Method, and Gurvich using a closely related technique, studied diffusion approximations of certain queueing systems in the heavy traffic regime~\cite{braverman2017stein,braverman2017steinMPH,gurvich2014diffusion}. Our application of Stein's method follows that of \cite{braverman2017stein} at a high level, where the key difference is that our family of CTMCs and limiting process is in $n$ dimensions instead of one. In order to handle this, we use results on Stein Factors from \cite{gorham2019measuring}.

\subsection{Related work}
Although rigorous work on approximating the stationary distribution of SCRNs in the large volume limit has been limited, a variety of related ideas have been presented in the literature. In particular, when the microscopic rates are all affine, it is known that the LNA matches the first and second moments of $\Xvec_\Om(t)$ for all time \cite{van1992stochastic,ioannis2008noise}. This idea has been extended to a restricted class of SCRNs with nonlinear propensities in \cite{grima2015linear-noise}. Additionally, Sontag and Singh gave a special class of SCRNs where the moments can be exactly computed even when they may not match the moments of the LNA exactly~\cite{sontag2015exact}. However, in these cases it is still unclear if the stationary distribution of $\frac{1}{\sqrt{\Om}}\left(\Xvec_\Om(t)-\Om v^*\right)$ converges to the Gaussian given by the LNA in the large volume limit. In fact, the results we present allow one to rigorously go from convergence of $\frac{1}{\sqrt{\Om}}\left(\Xvec_\Om^\infty-\Om v^*\right)$ to $\calN(0,P^\infty)$ with respect to the first and second moments, to convergence in 1-Wasserstein distance. We can thus strengthen the results of e.g. \cite{grima2015linear-noise} with limited additional work. We note that methods for establishing convergence of the stationary distributions of a family of Markov Processes to the stationary distribution of a limiting process are given in \cite{ethier2009markov}. However, not only have these methods not been successfully applied to SCRNs, but also our Stein's Method approach yields non-asymptotic bounds on the approximation error in 1-Wasserstein distance.

In certain special cases, such as detailed or complex balance, the stationary distribution of $\Xvec_\Om(t)$ can be computed in closed form, and thus in principle one could directly analyze the approximation error when using the LNA~\cite{whittle1986systems,anderson2010product,meng2017recursively}. However, for most SCRNs of interest in biology, such results are not applicable. For example, a simple two step transcription-translation model for protein expression is not detailed or complex balanced \cite{BFS}, and thus the results in \cite{whittle1986systems,anderson2010product} cannot be used to obtain a closed form expression for the stationary distribution, whereas the results of \cite{meng2017recursively} are only exact for systems where $\Xvec_\Om(t)$ has finitely many states.

We note that the LNA is only one diffusion approximation for SCRNs, and in fact the Chemical Langevin Equation (CLE) is often a more accurate approximation to $\Xvec_\Om(t)/\Om$ for moderately large volumes~\cite{gillespie2000chemical}. The convergence of $\Xvec_\Om(t)/\Om$ to the CLE on finite time intervals was established by Kurtz \cite{kurtz1978strong}, but using the CLE to approximate the stationary distribution of $\Xvec_\Om(t)/\Om$ has remained mostly unstudied. In fact, the CLE often exhibits finite time breakdown and thus does not have a well defined stationary distribution, a problem that has been tackled by allowing complex values in the Stochastic Differential equation~\cite{schnoerr2014complex}, as well as imposing the physical constraint that concentrations are nonnegative \cite{leite2019constrained}.

\subsection{Outline of the rest of the paper}
The rest of the paper is as follows: In Section \ref{sec:problem_setting} we give notation and mathematical background, and then introduce the three relevant models. In Section \ref{sec:main_results} we state our main results, the proofs of which appear in Section \ref{sec:main_proof}. In Section \ref{sec:examples} we illustrate our results with several examples. In Section \ref{sec:conclusion} we provide concluding remarks and future research directions.
\section{Problem Setting}\label{sec:problem_setting}
In this section we first introduce the notation we will use, and then introduce SCRNs and the associated CTMC model along with and the Reaction Rate Equations and Linear Noise Approximation models. We also introduce the 1-Wasserstein distance, which we will use to measure the distance between probability distributions.
\subsection{Notation}
We denote (column) vectors by bold symbols $\xvec,\Zvec$, with the elements indexed as $\xvec = [x_1,\dots,x_n]^T$. Let $\bbZ_{\geq 0}^n$ [$\bbZ_{>0}^n$] denote the nonnegative [positive] integer lattice in $n$ dimensions. Let $\bbQ$ denote the set of rational numbers. Let $\bbR^n$, $\bbR_{\geq 0}^n$, and $\bbR^n_{>0}$ denote n-dimensional Euclidian space, and the nonnegative and positive orthants of the same respectively. For $\xvec \in \bbR^n$ we denote the standard 2-norm by $\norm{\xvec}$, and for a matrix $A\in \bbR^{n\times m}$ we denote by $\norm*{A}$ the induced 2-norm $\norm*{A} = \sup_{\xvec \neq 0} \frac{\norm*{A\xvec}}{\norm{\xvec}}$, and denote by $\mathrm{cond}(A)$ the condition number of $A$. When all eigenvalues of $A\in \bbR^{n\times n}$ have strictly negative real parts we say that $A$ is Hurwitz. Given a function $g:\bbR^n \rightarrow\bbR$, we denote by $\frac{\partial g}{\partial \zvec}$, $\grad g$, and $\grad^2g$ the row vector of partial derivatives, the gradient, and the Hessian of $g$ respectively. We denote the third (directional) derivative of $g$ by $\grad^3g[\xvec_1,\xvec_2,\xvec_3]$, along the directions $\xvec_1,\xvec_2,\xvec_3$, and define its induced norm as $\sup_{\norm{\xvec_1}=\norm{\xvec_2}=\norm{\xvec_3}=1}\norm{\grad^3g [\xvec_1,\xvec_2,\xvec_3]}$. For two functions $f$ and $g$ from $\bbR^n$ to $\bbR$, we define the convolution of $f$ and $g$ as $(f\star g)(\xvec) = \int_{\bbR^n} f(\xvec-\svec)g(\svec)\mu(d\svec)$, with $\mu$ the Lebesgue measure on $\bbR^n$. We denote the set of k-continuously differentiable functions on a domain $\calD$ by $C^k(\calD)$, where we omit the domain if it is clear from context.

\subsection{Stochastic Chemical Reaction Networks}\label{sec:SCRN}
We consider Stochastic Chemical Reaction Networks parameterized by $\Om\in\calO \subseteq \bbR_{>0}$, composed of $n$ species $\mathrm{X}_1,\mathrm{X}_2,\dots,\mathrm{X}_n$, interacting according to $r$ reactions, each characterized by a nonzero reaction vector $\xivec_i \in \bbZ^n$ and a microscopic propensity function $\la_i(\xvec)$, which depends on $\Om$. The reaction vector describes the change in the counts on the species when reaction $i$ fires, and the microscopic propensity function describes the rate at which reaction $i$ fires. Thus, for each $\Om \in \calO$, the Stochastic Chenical Reaction Network describes a CTMC $\{\Xvec_\Om(t),t \geq 0\}$, where $\Xvec_\Om(t) = [X_1(t),X_2(t),\dots,X_n(t)]^T \in \calX_\Om \subseteq \bbZ_{\geq 0}^n$ represents the molecular counts of the species. The state space $\calX_\Om$ can be chosen as any subset of $\bbZ_{\geq 0}^n$ such that 1) $\la_i(\xvec) \geq 0$ for all $\xvec \in \calX_\Om$ and 2) for all $\xvec \in \calX_\Om$ and all $i\in \{1,\dots,r\}$ such that $\xvec+\xivec_i \notin \calX_\Om$, $\la_i(\xvec)=0$. The infinitesimal transition rates from $\xvec$ to $\xvec'$ are given by $q_{\xvec,\xvec'} = \sum_{i:\xvec+\xivec_i=\xvec'}\la_i(\xvec)$. We refer the interested reader to \cite{anderson2015stochastic} for a more thorough background on SCRNs. Throughout this work we will need to impose the following assumption on our SCRNs:
\begin{assumption}\label{assum:propensities}
The propensity functions have the form
\begin{equation}
    \la_i(\xvec) = \Om \labar_i\left(\frac{\xvec}{\Om}\right).
\end{equation}
Additionally, there exists $\calX = \prod_{j=1}^n I_j \subseteq \bbR_{\geq 0}^n$ where for each $j$, $I_J = [0,x_j^{max}]$ for some $x_j^{max} \in \bbQ_{>0}$, or $I_j = [0,\infty)$, such that:
\begin{enumerate}
    \item For all $i\in\{1,\dots,r\}$, $\labar_i(\vvec) >0$ for all $\vvec \in \calX$ such that there exists $\eta >0$ satisfying $\vvec + \eta \xivec_i \in \calX$.
    \item For all $i\in\{1,\dots,r\}$, $\labar_i(\vvec) = 0$ for all $\vvec \in \calX$ such that for all $\eta >0$, $\vvec + \eta \xivec_i \notin \calX$.
\end{enumerate}
\end{assumption}
\begin{remark}
    The functions $\labar_i(\vvec)$ are the \emph{macroscopic propensities} of each reaction.
\end{remark}
Under Assumption \ref{assum:propensities}, we can define $\calO = \left\{\Om \in \bbR_{>0} \middle| \forall j,\; \Om x_j^{max} \in \bbZ_{> 0} \right\}$. Note that our assumption that $x_j^{max} \in \bbQ$ ensures that $\sup \calO = +\infty$, i.e. our family of Markov chains includes those with arbitrarily large volume. Under Assumption \ref{assum:propensities} we will always choose to define the state space of our CTMC as $\calX_\Om = \bbZ_{\geq 0}^n \cap \Om \calX$. In order to apply our Stein's Method technique, we will need the following assumption:
\begin{assumption}\label{assum:C2b}
    For all $i \in \{1,\dots,r\}$, $\labar_i(\vvec)$ is twice continuously differentiable on some open set $\calD$ containing $\calX$, and has bounded second derivatives on $\calX$.
\end{assumption}
One form of $\la_i(\xvec)$ which satisfies Assumptions \ref{assum:propensities} and \ref{assum:C2b} is mass action kinetics \cite{van1992stochastic}, which means that the propensities are of the form
\begin{equation}\label{eq:masssaction}
    \la_i(\xvec) = k_i\Om^{1-\sum_j \xi_{ri,j}} \prod_{j=1}^n x_j^{\xi_{ri,j}},
\end{equation}
where $\xivec_i = \xivec_{pi} - \xivec_{ri}$ with $\xivec_{pi},\xivec_{ri} \in \bbZ_{\geq 0}^n$ is the reaction vector of the chemical reaction
\begin{equation}
    \schemestart
    \subscheme{$\sum_{j}\xi_{ri,j}\mathrm{X}_j$}
    \arrow(x1--x1){->[$k_i$]}[0]
    \subscheme{$\sum_{j}\xi_{pi,j}\mathrm{X}_j$}
    \schemestop.
\end{equation}
The requirement that $\sum \xivec_{pi} \leq 2$ must be imposed for $\la_i(\xvec)$ to satisfy Assumption \ref{assum:C2b}. See \cite{gillespie1992rigorous} for a discussion of the physical assumptions necessary to reach such propensities. We note that \eqref{eq:masssaction} requires that no reactant appears twice on the left hand side of the reaction. Having a reactant $\mathrm{X}_j$ appear twice on the left hand side of a reaction creates a propensity proportional to $x_j^2$ instead of the physically accurate $x_j(x_j-1)$ \cite{gillespie1992rigorous}. Here $k_i$ is the reaction rate constant of reaction $i$ and $\xi_{ri,j}$ is the $j$\textsuperscript{th} element of $\xivec_{ri}$. Another form of $\la_i(\xvec)$ that satisfies Assumptions \ref{assum:propensities} and \ref{assum:C2b} is
\begin{equation}
    \la_i(\xvec) = \Om\frac{k^i_1 (x_{j_i}/\Om)^p}{k^i_2 + (x_{j_i}/\Om)^q},
\end{equation}
where $j_i \in \{1,\dots,n\}$ and $p,q \in \bbZ_{>0}$ such that $p+q \leq 3$ or $p=q$. Such ``Hill-type propensities'' can arise through timescale separation in systems of chemical reactions with mass action kinetics \cite{kang2013separation,melykuti2014equilibrium,hirsch2023error}. For simplicity, we will impose the following assumption:
\begin{assumption} \label{assum:stoich_full_dim}
    The stoichiometric subspace, $\mathrm{span} \{\xivec_1,\dots,\xivec_r\}$, has dimension $n$, and for each $\Om \in \calO$, $\Xvec_\Om(t)$ is irreducible.
\end{assumption}
Though Assumption \ref{assum:stoich_full_dim} appears to rule out SCRNs that have conservation laws, i.e. a linear combination of species counts that remains unchanged by all reactions, we will show through an example in Section \ref{sec:examples} that Assumption \ref{assum:stoich_full_dim} does not fundamentally prevent the analysis of SCRNs with conservation laws, as long as a change of coordinates exist where the system can be represented in a lower dimensional lattice $\bbZ_{\geq 0}^{n'}$ as an SCRN with stoichiometric subspace of dimension $n'$. For $f:\bbR^n\rightarrow\bbR$ the generator of $\Xvec_\Om(t)$ is
\begin{equation}
    G_{\Xvec_\Om}f(\xvec) = \sum_{i=1}^r \la_i(\xvec) \left(f(\xvec+\xivec_i)-f(\xvec)\right).
\end{equation}
We will also consider two shifted and scaled version of $\Xvec_\Om(t)$ defined by
\begin{equation}
    \Xvectil_\Om (t) = \frac{1}{\sqrt{\Om}}\left(\Xvec_\Om(t) - \Om \vvec^*\right),
\end{equation}
and
\begin{equation}
    \tilde{\Xvectil}_\Om(t) = \frac{1}{\Om}\Xvec_\Om(t) - \vvec^*,
\end{equation}
where $\vvec^*$ is a specified point in $\calX$. For $f:\bbR^n\rightarrow\bbR$ the generator of $\Xvectil_\Om(t)$ is given by
\begin{equation}\label{eq:scaled_generator}
    G_{\Xvectil_\Om}f(\xvec) = \sum_{i=1}^r \la_i(\sqrt{\Om}\xvec + \Om \vvec^*) \left(f(\xvec+\frac{1}{\sqrt{\Om}}\xivec_i)-f(\xvec)\right),
\end{equation}
and the generator of $\tilde{\Xvectil}_\Om(t)$ is given by
\begin{equation}
    G_{ \tilde{\Xvectil}_\Om}f(\xvec) = \sum_{i=1}^r \Om\labar_i(\xvec + \vvec^*) \left(f(\xvec+\frac{1}{\Om}\xivec_i)-f(\xvec)\right).
\end{equation}
When $\Xvec_\Om(t)$ has a unique stationary distribution $\pi$, we denote by $\Xvec_\Om^\infty$ a random variable having law $\pi$. We likewise define $\Xvectil_\Om^\infty$ and $\tilde{\Xvectil}_\Om^\infty$.

\subsection{Reaction Rate Equations}
The Reaction Rate Equations (RREs) are an Ordinary Differential Equation Model which is used to approximate $\Xvec_\Om(t)/\Om$. Defining $\lavecbar(\vvec) = \begin{bmatrix} \labar_1(\vvec), & \labar_2(\vvec), & \dots,& \labar_r(\vvec)\end{bmatrix}^T$ and $S = 
    \begin{bmatrix}
    \xivec_{1} & \xivec_2 & \dots & \xivec_r
    \end{bmatrix}$, the reaction rate equations are
\begin{align}\label{eq:RRE}
    \frac{d}{dt}\vvec(t) & = F(\vvec(t)), & \vvec(0) & = \vvec_0
\end{align}
where $F(\vvec) = S \lavecbar(\vvec)$ and $\vvec \in \calX \subseteq \bbR^n_{\geq 0}$. An equilibrium point of the RREs is any point $\vvec^* \in \calX$ satisfying $0 = F(\vvec^*)$. We note that under Assumption \ref{assum:propensities}, $\calX$ will be positive invariant with respect to \ref{eq:RRE}.

\subsection{Linear Noise Approximation}
The Linear Noise Approximation (LNA) is a diffusion approximation obtained by expanding the Chemical Master Equation using the ansatz $\Xvec_\Om(t) \approx \Om \vvec(t) + \sqrt{\Om}\Yvec'(t)$ \cite{van1992stochastic}. The terms $\vvec(t)$ and $\Yvec'(t)$ are given by
\begin{subequations}\label{eq:LNA}
\begin{align}
    \frac{d}{dt}\vvec(t) & = F(\vvec(t)), & \vvec(0) & = \vvec_0,\label{eq:LNA_RRE} \\
    d\Yvec'(t) & = \frac{\partial F(\vvec)}{\partial \vvec} \Yvec'(t) dt + S \diag \sqrt{\lavecbar(\vvec)}d\Wvec(t), & \Yvec'(0) & = \Yvec'_0,\label{eq:LNA_FLUC}
\end{align}
\end{subequations}
\noeqref{eq:LNA_RRE}\noeqref{eq:LNA_FLUC}where evidently \eqref{eq:LNA_RRE} is the RRE \eqref{eq:RRE}. Here $\Wvec(t)$ is a unit covariance Wiener Process. Let $\vvec^*$ be an equilibrium point of \eqref{eq:LNA_RRE}. We note that $\labar_i(\vvec) \geq 0$ must hold for \eqref{eq:LNA_FLUC} to make sense, which is guaranteed by e.g. Assumption \ref{assum:propensities}. We denote by $\Yvec(t)$ the solution to \eqref{eq:LNA_FLUC} with $\vvec(t) = \vvec^*$ and $\Yvec'(0) = \Yvec_0$. The generator of $\Yvec(t)$ is
\begin{equation}
    G_{\Yvec}f(\xvec) = \frac{\partial f(\xvec)}{\partial \xvec}\cdot \frac{\partial F(\vvec^*)}{\partial \vvec}\xvec + \frac{1}{2}\sum_{i=1,j=1}^n D_{ij} \frac{\partial^2 f(\xvec)}{\partial x_i \partial x_j},
\end{equation}
where $D =  S \diag \left(\lavecbar(\vvec^*)\right) S^T$. We remark that $\Yvec(t)$ is an Ornstein–Uhlenbeck process, and thus $\Yvec(t)$ is Gaussian as long as $\Yvec_0$ is. The stationary distribution of $\Yvec(t)$ is a zero mean Gaussian with covariance matrix $P^\infty \in \bbR^{n\times n}$ given as the solution to the Lyapunov Equation
\begin{equation}
    \frac{\partial F(\vvec^*)}{\partial \vvec}P^\infty + P^\infty \frac{\partial F(\vvec^*)}{\partial \vvec}^T = - S \diag \lavecbar(\vvec^*) S^T,
\end{equation}
which has a unique positive definite solution as long as $\frac{\partial F(\vvec^*)}{\partial \vvec}$ is Hurwitz. In this case, we denote by $\Yvec^\infty$ a random variable distributed according to $\calN(0,P^\infty)$.

\subsection{1-Wasserstein Distance}
In this work we measure the distance between probability distributions using the 1-Wasserstein distance \cite{gibbs2002choosing}. By a slight abuse of notation, we define the 1-Wasserstein distance between two random variables as the 1-Wasserstein distance between the laws of the random variables, formally:
\begin{definition}
    Given two random variables $\Zvec_1$ and $\Zvec_2$, both taking values in $\bbR^n$, we define the \emph{1-Wasserstein distance} between $\Zvec_1$ and $\Zvec_2$ as
    \begin{equation}
        \Wass\left(\Zvec_1,\Zvec_2\right) = \sup_{h\in\Lip(1)} \abs*{\bbE\left[h\left(\Zvec_1\right)\right] - \bbE\left[h\left(\Zvec_2\right)\right]},
    \end{equation}
    where $\Lip(1) = \left\{h:\bbR^n\rightarrow\bbR\middle|\forall \xvec,\xvec'\in\bbR^n,\;\norm*{h(\xvec)-h(\xvec')}\leq \norm*{\xvec-\xvec'}\right\}$.
\end{definition}
We remark that this definition of 1-Wasserstein distance uses the dual formulation \cite{villani2009optimal}, and is a slight abuse of notation since the 1-Wasserstein distance is usually defined between two probability distributions. Our definition is equivalent to taking the 1-Wasserstein distance between the laws of $\Zvec_1$ and $\Zvec_2$.
\section{Approximating Stationary Distributions of SCRNs}\label{sec:main_results}
Here we state the main results of this work. To begin, we present the technical conditions that will guarantee that the stationary distribution of $\Xvectil_\Om(t)$ converges to $\calN(0,P^\infty)$. We then give our main result, showing that when the first and second moments of $\Xvectil_\Om^\infty$ are controlled, we can bound the 1-Wasserstein distance between $\Xvectil_\Om^\infty$ and $\Yvec^\infty$. We next show that this result can be used to analyze the RRE approximation as well. Finally, we give a method to check that the second moment of $\Xvectil_\Om^\infty$ is controlled by relating the global stability properties of the RRE to the required moment bound.

\begin{condition}\label{cond:k_1_moment_finite}
    There exists $p(\xvec)$, a polynomial of degree $\kappa$, such that for all $i$ and all $\xvec \in \calX$, $\labar_i(\xvec) \leq p(\xvec)$. Furthermore, there exists $\Om_0$ such that for all $\Om\in \left\{\Om \in \calO \middle| \Om \geq \Om_0\right\}$, $\Xvectil_\Om(t)$ has a unique stationary distribution $\tilde{\pi}$, which satisfies
    \begin{equation}
        \bbE_{\Xvectil_\Om^\infty \sim \tilde{\pi}}\left[\norm{\Xvectil_\Om^\infty}^{\kappa+1}\right] < \infty.
    \end{equation}
\end{condition}
Condition \ref{cond:k_1_moment_finite} does not require any type of uniformity in $\Om$, and therefore the condition $\bbE\left[\norm{\Xvec_\Om^\infty}^{\kappa+1}\right] < \infty$ can be checked instead.
\begin{condition}[Uniform moment bounds]\label{cond:moment}
    There exist constants $M_1$, $M_2$, and $\Om_0$ such that for all $\Om\in \left\{\Om \in \calO \middle| \Om \geq \Om_0\right\}$, $\Xvectil_\Om(t)$ has a unique stationary distribution $\tilde{\pi}$, which satisfies
    \begin{equation}\label{eq:moment_bound_first}
        \bbE_{\Xvectil_\Om^\infty \sim \tilde{\pi}}\left[ \norm{\Xvectil_\Om^\infty}\right] \leq M_1,
    \end{equation}
    and
    \begin{equation}\label{eq:moment_bound_second}
        \bbE_{\Xvectil_\Om^\infty \sim \tilde{\pi}}\left[ \norm{\Xvectil_\Om^\infty}^2\right] \leq M_2.
    \end{equation}
\end{condition}

\begin{remark}\label{rem:just_second_moment}
In Condition \ref{cond:moment}, the bound on $\bbE_{\Xvectil_\Om^\infty \sim \tilde{\pi}}\left[ \norm{\Xvectil_\Om^\infty}^2\right]$ implies the existence of a bound on $\bbE_{\Xvectil_\Om^\infty \sim \tilde{\pi}}\left[ \norm{\Xvectil_\Om^\infty}\right]$. This can be seen from the fact that by Jensen's inequality, $\bbE_{\Xvectil_\Om^\infty \sim \tilde{\pi}}\left[ \norm{\Xvectil_\Om^\infty}\right]^2 \leq \bbE_{\Xvectil_\Om^\infty \sim \tilde{\pi}}\left[ \norm{\Xvectil_\Om^\infty}^2\right]$, and thus $\bbE_{\Xvectil_\Om^\infty \sim \tilde{\pi}}\left[ \norm{\Xvectil_\Om^\infty}^2\right] \leq M_2$ implies that $\bbE_{\Xvectil_\Om^\infty \sim \tilde{\pi}}\left[ \norm{\Xvectil_\Om^\infty}\right] \leq \sqrt{M_2}$. However, we state Condition \ref{cond:moment} as it is to allow for the possibility of using a stronger bound on $\bbE_{\Xvectil_\Om^\infty \sim \tilde{\pi}}\left[ \norm{\Xvectil_\Om^\infty}\right]$ than the one derived from $M_2$.
\end{remark}

\begin{remark}\label{rem:concentration_moments}
    Recalling that $\tilde{\Xvectil}_\Om = \frac{1}{\Om}\left(X_\Om - \Om \vvec^*\right)$ is the ``concentration scaled'' Markov chain shifted to $\vvec^*$, Condition \ref{cond:moment} is equivalent to the existence of $\Om_0$, $M_1$, and $M_2$ such that for all $\Om \in \left\{\Om\in\calO\middle|\Om \geq \Om_0\right\}$, there exists a unique stationary distribution of $\tilde{\Xvectil}_\Om(t)$, $\tilde{\tilde{\pi}}$, and it satisfies
    \begin{equation}
        \bbE_{\tilde{\Xvectil}_\Om^\infty \sim \tilde{\tilde{\pi}}}\left[ \norm{\tilde{\Xvectil}_\Om^\infty}\right] \leq \frac{1}{\sqrt{\Om}}M_1,
    \end{equation}
    and
    \begin{equation}
        \bbE_{\tilde{\Xvectil}_\Om^\infty \sim \tilde{\tilde{\pi}}}\left[ \norm{\tilde{\Xvectil}_\Om^\infty}^2\right] \leq \frac{1}{\Om}M_2.
    \end{equation}
\end{remark}
We are now ready to state our main result.
\begin{theorem}\label{thm:moments_imply_convergence}
    Consider an SCRN $\Xvec_\Om (t)$ satisfying Assumptions \ref{assum:propensities}, \ref{assum:C2b}, and \ref{assum:stoich_full_dim} and let $\vvec^* \in \mathrm{int}(\calX)$ be an equilibrium point of \eqref{eq:LNA_RRE} such that $\frac{\partial F}{\partial \vvec}(\vvec^*)$ is Hurwitz. If Conditions \ref{cond:k_1_moment_finite} and \ref{cond:moment} hold, then there exist $C$ and $\Omega_0'$ such that for all $\Om\in \left\{\Om \in \calO \middle| \Om \geq \Om_0'\right\}$,
    \begin{equation}
        \Wass\left(\tilde{\Xvec}_\Om^\infty, \Yvec^\infty\right)
        \leq C \frac{\ln\Omega}{\sqrt{\Omega}}.
    \end{equation}
\end{theorem}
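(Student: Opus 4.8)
The plan is to run Stein's method with the Ornstein--Uhlenbeck generator $G_{\Yvec}$ as the comparison operator, exploiting that $\calN(0,P^\infty)$ is its unique stationary law (guaranteed here because $\frac{\partial F}{\partial \vvec}(\vvec^*)$ is Hurwitz). Fix a test function $h \in \Lip(1)$ and let $f_h$ solve the associated Stein (Poisson) equation
\[
    G_{\Yvec} f_h(\xvec) = h(\xvec) - \bbE[h(\Yvec^\infty)],
\]
for which the OU semigroup furnishes the representation $f_h(\xvec) = -\int_0^\infty \left( \bbE[h(\Yvec_{\xvec}(t))] - \bbE[h(\Yvec^\infty)] \right) dt$, with $\Yvec_{\xvec}(t)$ the OU process started at $\xvec$; the Hurwitz hypothesis makes the integral converge. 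Since $\Xvectil_\Om^\infty$ is stationary for $G_{\Xvectil_\Om}$, I would use $\bbE[G_{\Xvectil_\Om} f_h(\Xvectil_\Om^\infty)] = 0$ to write
\[
    \bbE[h(\Xvectil_\Om^\infty)] - \bbE[h(\Yvec^\infty)] = \bbE\left[ \left( G_{\Yvec} - G_{\Xvectil_\Om} \right) f_h(\Xvectil_\Om^\infty) \right],
\]
so that passing to the supremum over $h \in \Lip(1)$ reduces the theorem to a uniform-in-$h$ bound on the expected generator discrepancy.

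The heart of the argument is a Taylor expansion of $G_{\Xvectil_\Om} f_h$. Using $\la_i(\sqrt\Om\,\xvec + \Om \vvec^*) = \Om \labar_i(\vvec^* + \xvec/\sqrt\Om)$, I would expand $f_h(\xvec + \frac{1}{\sqrt\Om}\xivec_i)$ to third order in the jump and $\labar_i(\vvec^* + \xvec/\sqrt\Om)$ to second order. The apparent $O(\sqrt\Om)$ contribution vanishes because $\sum_i \labar_i(\vvec^*)\xivec_i = S\lavecbar(\vvec^*) = F(\vvec^*) = 0$, and the $O(1)$ contributions reassemble exactly into $G_{\Yvec}f_h$, using $\sum_i \xivec_i \grad\labar_i(\vvec^*)^T = \frac{\partial F}{\partial \vvec}(\vvec^*)$ for the drift and $\sum_i \labar_i(\vvec^*)\xivec_i\xivec_i^T = S \diag(\lavecbar(\vvec^*)) S^T = D$ for the diffusion. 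What survives, $(G_{\Yvec} - G_{\Xvectil_\Om})f_h(\xvec)$, is a sum of three residual terms: (A) $\Om\sum_i \labar_i(\vvec^*)$ times the third-order Taylor remainder of $f_h$ at step $\xivec_i/\sqrt\Om$; (B) a cross term of order $\frac{1}{\sqrt\Om}\norm{\xvec}\,\norm{\grad^2 f_h}$ pairing $\grad\labar_i(\vvec^*)\cdot\xvec$ with $\xivec_i^T\grad^2 f_h\,\xivec_i$; and (C) a term of order $\frac{1}{\sqrt\Om}\norm{\xvec}^2\,\norm{\grad f_h}$ from the second-order remainder of $\labar_i$ (bounded via Assumption \ref{assum:C2b}) paired with $\grad f_h\cdot\xivec_i$.

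To bound the expectation of these residuals I would invoke the multidimensional Stein factor estimates for the OU process from \cite{gorham2019measuring}, which bound $\norm{\grad f_h}$ and $\norm{\grad^2 f_h}$ by constants depending only on the Lipschitz constant of $h$, on $P^\infty$, and on $\frac{\partial F}{\partial \vvec}(\vvec^*)$. Together with the uniform bounds $\bbE\norm{\Xvectil_\Om^\infty} \le M_1$ and $\bbE\norm{\Xvectil_\Om^\infty}^2 \le M_2$ of Condition \ref{cond:moment}, residuals (B) and (C) are each $O(1/\sqrt\Om)$. Residual (A) is the delicate one: for merely Lipschitz $h$ the pointwise third derivative $\grad^3 f_h$ is not uniformly bounded, since differentiating the semigroup representation three times leaves a nonintegrable $1/t$ singularity at $t=0$. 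Rather than a pointwise bound, I would control (A) as a genuine third-order finite difference of $f_h$ at the jump scale $\norm{\xivec_i}/\sqrt\Om$, splitting the semigroup time integral at $t \sim 1/\Om$ (the squared jump scale): for $t \gtrsim 1/\Om$ the difference is comparable to $\Om^{-3/2}$ times the $1/t$ third-derivative estimate, and $\int_{1/\Om} e^{-\beta t} t^{-1}\,dt \sim \ln\Om$, while for $t \lesssim 1/\Om$ a direct finite-difference estimate contributes at the same order. The third-order remainder is thus $O(\Om^{-3/2}\ln\Om)$, and the generator prefactor $\Om$ turns it into a contribution of order $\frac{\ln\Om}{\sqrt\Om}$, which dominates and fixes the rate. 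Throughout, Condition \ref{cond:k_1_moment_finite} — the polynomial bound $\labar_i \le p$ of degree $\kappa$ with finiteness of the $(\kappa+1)$-st moment — is what justifies that $f_h$ lies in the domain of $G_{\Xvectil_\Om}$ and that $\bbE[G_{\Xvectil_\Om} f_h(\Xvectil_\Om^\infty)] = 0$, via dominated-convergence control of the (possibly infinite) generator sum against $\tilde\pi$.

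The main obstacle I anticipate is precisely residual (A). Extracting a $\ln\Om$ factor, rather than the $\Om^{-1/4}$ that a naive mollification of $h$ would produce, requires the time-integral truncation above, tuned so that the smooth-part third-derivative growth and the rough-part finite-difference error balance at the jump scale; verifying that the short-time contribution really is controlled uniformly in $h \in \Lip(1)$ is the crux. A secondary difficulty is making the Stein identity $\bbE[G_{\Xvectil_\Om} f_h(\Xvectil_\Om^\infty)] = 0$ rigorous on the countably infinite, unbounded state space where the propensities grow polynomially; here the moment hypotheses of Conditions \ref{cond:k_1_moment_finite} and \ref{cond:moment}, the bounded second derivatives of $\labar_i$ (Assumption \ref{assum:C2b}), and the interior location $\vvec^* \in \mathrm{int}(\calX)$ together do the essential work. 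Once both points are secured, the bound is uniform over $h \in \Lip(1)$, and passing to the supremum gives $\Wass(\Xvectil_\Om^\infty, \Yvec^\infty) \le C\frac{\ln\Om}{\sqrt\Om}$ for all $\Om \ge \Om_0'$.
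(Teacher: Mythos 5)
Your proposal follows the same skeleton as the paper's proof: solve the Stein equation for the Ornstein--Uhlenbeck generator, justify the basic adjoint relationship $\bbE[G_{\Xvectil_\Om}f_h(\Xvectil_\Om^\infty)]=0$ via the polynomial growth of the propensities and Condition \ref{cond:k_1_moment_finite} (the paper does this with Lemma \ref{lem:glynn}), Taylor-expand the generator discrepancy so that the $O(\sqrt\Om)$ term cancels by $F(\vvec^*)=0$ and the $O(1)$ terms reassemble into $G_{\Yvec}f_h$, and close with the moment bounds of Condition \ref{cond:moment} and the Stein factors of \cite{gorham2019measuring}. Where you genuinely diverge is the treatment of the critical third-order remainder. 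You propose splitting the semigroup time integral in $f_h=-\int_0^\infty(P_th-\bbE[h(\Yvec^\infty)])\,dt$ at $t\sim 1/\Om$, using the $t^{-1/2}$ bound on $\grad^2P_th$ below the cut and the $t^{-1}$ bound on $\grad^3P_th$ above it, so that $\int_{1/\Om}t^{-1}e^{-\beta t}\,dt\sim\ln\Om$ produces the logarithm. The paper instead stops the Taylor expansion at second order and packages the same semigroup estimates (its Lemmas \ref{lem:M2_bound} and \ref{lem:M3_bound}) into a $(1-\zeta)$-H\"older bound $\norm{\grad^2f_h(\xvec)-\grad^2f_h(\xvec')}\le C_3(\zeta)\norm{\xvec-\xvec'}^{1-\zeta}$ with $C_3(\zeta)\sim 1/\zeta$ via seminorm interpolation, then optimizes $\zeta\sim1/\ln\Om$. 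These are two equivalent ways of balancing the smooth and rough contributions at the jump scale, and both yield $\delta\ln(\delta^{-1})=\tfrac12\ln\Om/\sqrt\Om$; your version is somewhat more direct, the paper's has the advantage of isolating the regularity statement as a reusable lemma about $f_h$ alone. One point you should not skip: the regularity theory you invoke for $f_h$ (Theorem 5 of \cite{gorham2019measuring}) is stated for $h\in C^3$ with bounded second and third derivatives, not for raw $h\in\Lip(1)$, so you must either first reduce the supremum defining $\Wass$ to such smooth test functions --- this is exactly the paper's mollification Lemma \ref{lem:wasserstein_C3}, which costs nothing in the rate --- or verify directly that the OU semigroup regularizes Lipschitz $h$ enough for your representation of $f_h$ and its derivative bounds to hold. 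With that step added, your argument goes through.
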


\begin{proof}
    See Section \ref{sec:proof_main_thm}.
\end{proof}
\begin{remark}
    The conclusion of Theorem \ref{thm:moments_imply_convergence} implies that
    \begin{equation}
        \lim_{\Omega\rightarrow\infty}\Wass\left(\tilde{\Xvec}_\Om^\infty, \Yvec^\infty\right) = 0,
    \end{equation}
    from which we can conclude convergence in distribution of $\tilde{\Xvec}_\Om^\infty$ to $\Yvec^\infty$~\cite{villani2009optimal}.
\end{remark}
\begin{remark}
    The constants $C$ and $\Omega_0'$ in Theorem \ref{thm:moments_imply_convergence} depend on the propensities and stoichiometry vectors, as well as $M_1$, $M_2$, and $\Omega_0$ from Condition \ref{cond:moment}. We note that an explicit computation of $C$ and $\Omega_0'$ is possible, as will be seen in the proof, where we give an explicit upper bound that is dominated by a term proportional to $\frac{\ln \Omega}{\sqrt{\Omega}}$.
\end{remark}
Though our Stein's Method approach does not require the preliminary step of establishing a law of large numbers type result relating $\tilde{\Xvectil}_\Om^\infty$ to $\vvec^*$, an equilibrium point of \eqref{eq:LNA_RRE}, we can conclude such a result whenever the assumptions of Theorem \ref{thm:moments_imply_convergence} are satisfied.
\begin{corollary}\label{coro:conc_wass_dist}
    Consider an SCRN $\Xvec_\Om (t)$ satisfying Assumptions \ref{assum:propensities}, \ref{assum:C2b}, and \ref{assum:stoich_full_dim} and let $\vvec^* \in \mathrm{int}(\calX)$ be an equilibrium point of \eqref{eq:LNA_RRE} such that $\frac{\partial F}{\partial \vvec}(\vvec^*)$ is Hurwitz. If Conditions \ref{cond:k_1_moment_finite} and \ref{cond:moment} hold, then there exist $\tilde{C}$ and $\tilde{\Omega}_0'$ such that for all $\Om\in \left\{\Om \in \calO \middle| \Om \geq \tilde{\Om}_0'\right\}$,
    \begin{equation}
        \Wass\left(\tilde{\Xvectil}_\Om^\infty,0\right)
        \leq \tilde{C} \frac{1}{\sqrt{\Omega}}.
    \end{equation}
\end{corollary}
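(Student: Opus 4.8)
The plan is to exploit the fact that the $1$-Wasserstein distance from a random variable to the point mass at the origin is exactly its expected norm, so that the bound follows immediately from the first-moment part of Condition~\ref{cond:moment}.

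First I would record the elementary identity that for any integrable $\bbR^n$-valued random variable $\Zvec$,
\begin{equation}
    \Wass(\Zvec, 0) = \bbE\left[\norm{\Zvec}\right].
\end{equation}
The upper bound is immediate from the dual formulation of the $1$-Wasserstein distance: for every $h \in \Lip(1)$ we have $\abs{h(\Zvec) - h(0)} \leq \norm{\Zvec}$, so $\abs{\bbE[h(\Zvec)] - h(0)} \leq \bbE[\norm{\Zvec}]$, and taking the supremum over $h$ preserves this bound. The matching lower bound is achieved by the admissible test function $h(\xvec) = \norm{\xvec}$, which lies in $\Lip(1)$ by the reverse triangle inequality and satisfies $h(0) = 0$, so that $\bbE[h(\Zvec)] - h(0) = \bbE[\norm{\Zvec}] \geq 0$.

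Next I would apply this identity with $\Zvec = \tilde{\Xvectil}_\Om^\infty$. Under the hypotheses of the corollary, Condition~\ref{cond:moment} holds, and by Remark~\ref{rem:concentration_moments} it is equivalent to the statement that for all $\Om$ with $\Om \geq \Om_0$ the chain $\tilde{\Xvectil}_\Om(t)$ has a unique stationary distribution satisfying $\bbE[\norm{\tilde{\Xvectil}_\Om^\infty}] \leq M_1/\sqrt{\Om}$. Combining this with the identity above yields
\begin{equation}
    \Wass(\tilde{\Xvectil}_\Om^\infty, 0) = \bbE\left[\norm{\tilde{\Xvectil}_\Om^\infty}\right] \leq \frac{M_1}{\sqrt{\Om}},
\end{equation}
so the claim holds with $\tilde{C} = M_1$ and $\tilde{\Om}_0' = \Om_0$.

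There is essentially no obstacle here: the entire content of the corollary resides in Condition~\ref{cond:moment}, which is assumed, together with the observation that only the first-moment bound \eqref{eq:moment_bound_first} is needed, after rescaling to the concentration coordinates via $\tilde{\Xvectil}_\Om = \Om^{-1/2}\Xvectil_\Om$. In particular, neither Theorem~\ref{thm:moments_imply_convergence}, Condition~\ref{cond:k_1_moment_finite}, nor the Hurwitz hypothesis on $\frac{\partial F}{\partial \vvec}(\vvec^*)$ is actually required for this bound, although they are inherited from the stated hypotheses. As an alternative route one could pass through Theorem~\ref{thm:moments_imply_convergence} and a triangle inequality, writing $\Wass(\tilde{\Xvectil}_\Om^\infty, 0) \leq \Om^{-1/2}\Wass(\Xvectil_\Om^\infty, \Yvec^\infty) + \Om^{-1/2}\bbE[\norm{\Yvec^\infty}]$ and bounding the first term by $C\ln\Om/\Om$ while noting that $\bbE[\norm{\Yvec^\infty}]$ is a finite constant; this also yields the $1/\sqrt{\Om}$ rate for large $\Om$, but the direct argument is cleaner.
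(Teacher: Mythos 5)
Your proof is correct, and it takes a genuinely more direct route than the paper's. The paper proves this corollary by inserting $\frac{1}{\sqrt{\Om}}\Yvec^\infty$ as an intermediate point, applying the triangle inequality, invoking Theorem \ref{thm:moments_imply_convergence} to bound $\Wass\bigl(\tilde{\Xvectil}_\Om^\infty,\frac{1}{\sqrt{\Om}}\Yvec^\infty\bigr)$ by $C\frac{\ln\Om}{\Om}$, and absorbing that term into a $\frac{1}{\sqrt{\Om}}$ bound with constant $C+\Wass(\Yvec^\infty,0)$ --- this is exactly the ``alternative route'' you sketch at the end. Your main argument instead observes that $\Wass(\Zvec,0)=\bbE[\norm{\Zvec}]$ (both directions of which you justify correctly: the upper bound from the Lipschitz property of the test functions and the lower bound from the admissible test function $h(\xvec)=\norm{\xvec}$), and then reads off the bound from the first-moment estimate of Condition \ref{cond:moment} after the rescaling recorded in Remark \ref{rem:concentration_moments}. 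This yields the cleaner explicit constant $\tilde{C}=M_1$ and correctly identifies that Theorem \ref{thm:moments_imply_convergence}, Condition \ref{cond:k_1_moment_finite}, and the Hurwitz hypothesis are not actually needed for this particular conclusion, whereas the paper's route inherits all of them through its appeal to Theorem \ref{thm:moments_imply_convergence}. What the paper's presentation buys is a uniform template (the same triangle-inequality decomposition around the LNA limit) that ties the law-of-large-numbers statement to the central-limit statement; what yours buys is economy of hypotheses and a sharper, more interpretable constant.
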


\begin{proof}
    From the triangle inequality, we have that
    \begin{equation}\label{eq:wass_tri_X_Y_0}
        \Wass\left(\tilde{\Xvectil}_\Om^\infty,0\right) \leq \Wass\left(\tilde{\Xvectil}_\Om^\infty,\frac{1}{\sqrt{\Om}}\Yvec^\infty\right) + \Wass\left(\frac{1}{\sqrt{\Om}}\Yvec^\infty, 0\right).
    \end{equation}
    To bound the right hand side, notice that $\tilde{\Xvectil}^\infty_\Om = \frac{1}{\sqrt{\Om}}\Xvectil^\infty_\Om$. Thus,
    \begin{equation}\label{eq:W1_scaling}
        \Wass\left(\tilde{\Xvectil}_\Om^\infty,\frac{1}{\sqrt{\Om}}\Yvec^\infty\right) = \frac{1}{\sqrt{\Om}}\Wass\left(\Xvectil^\infty_\Om, \Yvec^\infty\right).
    \end{equation}
    Under the assumptions of the corollary, we can apply Theorem \ref{thm:moments_imply_convergence} to conclude that there exist $C$ and $\Om_0'$ such that for all $\Om\in \left\{\Om \in \calO \middle| \Om \geq \Om_0'\right\}$,
    \begin{equation}
        \Wass\left(\tilde{\Xvec}_\Om^\infty, \Yvec^\infty\right)
        \leq C \frac{\ln(\Omega)}{\sqrt{\Omega}},
    \end{equation}
    and thus from \eqref{eq:W1_scaling}, we have that for all $\Om\in \left\{\Om \in \calO \middle| \Om \geq \Om_0'\right\}$,
    \begin{equation}\label{eq:wass_conc_bound}
        \Wass\left(\tilde{\Xvectil}_\Om^\infty,\frac{1}{\sqrt{\Om}}\Yvec^\infty\right) \leq C\frac{\ln \Om}{\Om}.
    \end{equation}
    Observing that $\Wass\left(\frac{1}{\sqrt{\Om}}\Yvec^\infty, 0\right) = \frac{1}{\sqrt{\Om}}\Wass\left(\Yvec^\infty, 0\right)$, we have from \eqref{eq:wass_tri_X_Y_0} and \eqref{eq:wass_conc_bound} that for all $\Om\in \left\{\Om \in \calO \middle| \Om \geq \Om_0'\right\}$,
    \begin{align}
        \Wass\left(\tilde{\Xvectil}_\Om^\infty,0\right) & \leq C\frac{\ln \Om}{\Om} + \frac{1}{\sqrt{\Om}}\Wass\left(\Yvec^\infty, 0\right),\\
        & \leq \left(C+\Wass\left(\Yvec^\infty, 0\right)\right)\frac{1}{\sqrt{\Om}},
    \end{align}
    which completes the proof.
\end{proof}
A special case where Conditions \ref{cond:k_1_moment_finite} and \ref{cond:moment} are easy to verify is an SCRN with mass action kinetics and only zeroth and first order reactions. This idea is formalized in Corollary \ref{coro:zeroth_first_only}.
\begin{corollary}\label{coro:zeroth_first_only}
    Consider an SCRN $\Xvec_\Om(t)$ satisfying Assumption \ref{assum:stoich_full_dim} where for all $i$, $\labar_i(\xvec)$ is either of the form $\labar_i(\xvec) = k_i$ or $\labar_i(\xvec) = k_i x_{j_i}$ for some $k_i >0$ and $j_i \in \{1,\dots,n\}$. Let $\vvec^* \in \mathrm{int}(\calX)$ be an equilibrium point of \eqref{eq:LNA_RRE} such that $\frac{\partial F(\vvec^*)}{\partial \vvec}$ is Hurwitz. Then, there exist $C$ and $\Om_0'$ such that for all $\Om \geq \Om_0'$,
    \begin{equation}
        \Wass\left(\tilde{\Xvec}_\Om^\infty, \Yvec^\infty\right)
        \leq C \frac{\ln\Omega}{\sqrt{\Omega}}.
    \end{equation}
\end{corollary}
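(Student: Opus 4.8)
The plan is to verify the hypotheses of Theorem~\ref{thm:moments_imply_convergence}, from which the stated bound follows directly. The propensities $\labar_i(\xvec)=k_i$ and $\labar_i(\xvec)=k_i x_{j_i}$ are exactly the zeroth- and first-order mass-action propensities, so taking $\calX=\bbR_{\geq 0}^n$ they satisfy Assumptions~\ref{assum:propensities} and~\ref{assum:C2b} (they are affine, hence smooth with vanishing second derivatives, and nonnegative on $\calX$), while Assumption~\ref{assum:stoich_full_dim} is assumed. Because each $\labar_i$ is affine, $F(\vvec)=S\lavecbar(\vvec)$ is affine with constant Jacobian $A:=\frac{\partial F}{\partial\vvec}(\vvec^*)$, which is Hurwitz by hypothesis, and the dominating polynomial in Condition~\ref{cond:k_1_moment_finite} may be taken of degree $\kappa=1$. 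Thus everything reduces to exhibiting, for all large $\Om$, a unique stationary distribution whose second moment $\bbE[\norm{\Xvectil_\Om^\infty}^2]$ is finite and bounded uniformly in $\Om$; the first-moment bound of Condition~\ref{cond:moment} then follows by Remark~\ref{rem:just_second_moment}.

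For the moment bounds I would use a Foster--Lyapunov drift argument. Since $A$ is Hurwitz, choose $Q\succ 0$ solving $A^TQ+QA=-I$ and set $V_\Om(\xvec)=(\xvec-\Om\vvec^*)^TQ(\xvec-\Om\vvec^*)$. Writing $\yvec=\xvec-\Om\vvec^*$ and using that affineness makes $\la_i(\xvec)=\Om\labar_i(\vvec^*)+\grad\labar_i\cdot\yvec$ hold exactly on all of $\calX_\Om$, a direct expansion of $G_{\Xvec_\Om}V_\Om(\xvec)=\sum_i\la_i(\xvec)\left(2\yvec^TQ\xivec_i+\xivec_i^TQ\xivec_i\right)$ separates into three parts: the term quadratic in $\yvec$ equals $\yvec^T(A^TQ+QA)\yvec=-\norm{\yvec}^2$ by the choice of $Q$; the term of order $\Om\norm{\yvec}$ collapses to $2\Om\yvec^TQ\,S\lavecbar(\vvec^*)=2\Om\yvec^TQF(\vvec^*)=0$ because $\vvec^*$ is an equilibrium; and the remainder is a linear form $(\evec')^T\yvec$ with $\Om$-independent $\evec'=\sum_i\grad\labar_i\,(\xivec_i^TQ\xivec_i)$ plus the constant $\Om c_0$, where $c_0=\trace(QD)\geq 0$ and $D=S\diag(\lavecbar(\vvec^*))S^T$. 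Completing the square on $(\evec')^T\yvec$ and using $\norm{\yvec}^2\geq V_\Om/\lambda_{\max}(Q)$ yields the geometric drift condition $G_{\Xvec_\Om}V_\Om\leq -c_1 V_\Om+b_\Om$ with $c_1=\tfrac{1}{2}\lambda_{\max}(Q)^{-1}$ and $b_\Om=\tfrac12\norm{\evec'}^2+\Om c_0$, valid on all of $\calX_\Om$ (no boundary truncation is needed, since the affine expression is the propensity everywhere).

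Combined with the irreducibility of Assumption~\ref{assum:stoich_full_dim}, a standard Foster--Lyapunov criterion for continuous-time chains then gives, for each such $\Om$, non-explosivity, positive recurrence, a unique stationary distribution $\tilde\pi$, and the bound $\bbE[V_\Om(\Xvec_\Om^\infty)]\leq b_\Om/c_1$. Since $V_\Om\geq\lambda_{\min}(Q)\norm{\yvec}^2$ and $\Xvectil_\Om^\infty=\yvec/\sqrt{\Om}$, this gives $\bbE[\norm{\Xvectil_\Om^\infty}^2]=\tfrac1\Om\bbE[\norm{\yvec}^2]\leq \frac{b_\Om}{c_1\lambda_{\min}(Q)\,\Om}$, and because $b_\Om/\Om\leq \tfrac12\norm{\evec'}^2+c_0$ is bounded for $\Om\geq 1$, the right-hand side is at most a constant $M_2$ independent of $\Om$. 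This delivers the finite $(\kappa+1)$-th moment of Condition~\ref{cond:k_1_moment_finite} and the uniform bound~\eqref{eq:moment_bound_second}, with~\eqref{eq:moment_bound_first} following via $M_1=\sqrt{M_2}$; Theorem~\ref{thm:moments_imply_convergence} then applies and gives the claim.

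The main obstacle I anticipate is the rigorous ergodicity step: establishing non-explosivity and positive recurrence for an infinite-state chain with unbounded (linearly growing) rates, and in particular tracking that the drift constant $b_\Om$ grows only linearly in $\Om$ so that it is exactly cancelled by the $1/\Om$ arising from the $\sqrt{\Om}$-scaling in $\Xvectil_\Om$, leaving a uniform second-moment bound. As an alternative one may exploit that for affine propensities the first two stationary moments solve closed linear equations, forcing $\bbE[\Xvectil_\Om^\infty]=0$ and $\mathrm{Cov}(\Xvectil_\Om^\infty)=P^\infty$ exactly once finiteness is known; but the drift argument above already supplies the finiteness and the uniform bounds that the hypotheses actually require.
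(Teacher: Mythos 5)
Your proposal is correct, but it reaches Conditions \ref{cond:k_1_moment_finite} and \ref{cond:moment} by a genuinely different route than the paper. The paper's proof is a two-line citation argument: finiteness of all stationary moments follows from the Hurwitz hypothesis via Proposition 7 of \cite{gupta2014scalable}, and then the classical fact that for affine propensities the first and second moment dynamics of $\Xvec_\Om(t)$ close exactly and coincide with those of the LNA \cite{ioannis2008noise}, so that $\bbE[\tilde{\Xvectil}_\Om(t)]=0$ and $\bbE[\tilde{\Xvectil}_\Om(t)\tilde{\Xvectil}_\Om^T(t)]=\frac{1}{\Om}\bbE[\Yvec(t)\Yvec^T(t)]$ for all time, which gives the uniform bounds via Remarks \ref{rem:just_second_moment} and \ref{rem:concentration_moments}. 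You instead build an explicit quadratic Foster--Lyapunov function $V_\Om(\xvec)=(\xvec-\Om\vvec^*)^TQ(\xvec-\Om\vvec^*)$ from the Lyapunov equation $A^TQ+QA=-I$, exploit the fact that the affine expansion of $\la_i$ is exact (so there is no Taylor remainder and no boundary truncation), and extract a geometric drift inequality with $b_\Om=O(\Om)$; this is essentially a sharpened, special-case instance of the machinery the paper develops in Lemmas \ref{lem:FL_condition} and \ref{lem:bounded_hessian_lyap} for the proof of Theorem \ref{thm:RRE_GES}. Your computation checks out: the $O(\Om\norm{\yvec})$ cross term vanishes because $F(\vvec^*)=0$, the quadratic term collapses to $-\norm{\yvec}^2$ by the choice of $Q$, and the leftover $\Om c_0$ is exactly cancelled by the $1/\Om$ from the $\sqrt{\Om}$-scaling. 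What each approach buys: the paper's is shorter and yields the stronger statement that the stationary mean and covariance match the LNA \emph{exactly} (not just up to a uniform bound), while yours is self-contained, avoids the external moment-finiteness citation, and produces explicit constants $M_2$ in terms of $Q$ and the reaction data. One small caveat applicable to both proofs: the corollary as stated only invokes Assumption \ref{assum:stoich_full_dim}, so both you and the paper implicitly assume the compatibility of the affine propensities with the boundary of $\calX$ required by Assumption \ref{assum:propensities} (e.g.\ that a constant-rate reaction does not point out of $\bbR_{\geq 0}^n$); you at least state this assumption explicitly.
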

\begin{proof}
    The proof relies on the fact that for SCRNs with only zeroth and first order reactions, the LNA gives exactly the first and second moments of $\Xvec_\Om^\infty$. Specifically, the fact that $\frac{\partial F(\vvec^*)}{\partial \vvec}$ is Hurwitz implies that we can apply Proposition 7 in \cite{gupta2014scalable} to conclude that all moments of $\Xvec_\Om^\infty$ are finite, and thus Condition \ref{cond:k_1_moment_finite} holds. Additionally, since all of the propensities are affine, a computation of the first and second moment dynamics of $\Xvec_\Om(t)$ reveals that for all time~\cite{ioannis2008noise}, $\bbE\left[\tilde{\Xvectil}_\Om(t)\right] = 0$ and $\bbE\left[ \tilde{\Xvectil}_\Om(t) \tilde{\Xvectil}_\Om^T(t) \right] = \frac{1}{\Om}\bbE\left[\Yvec(t)\Yvec^T(t)\right]$. Therefore, by Remarks \ref{rem:just_second_moment} and \ref{rem:concentration_moments}, Condition \ref{cond:moment} is satisfied. The result then follows from Theorem \ref{thm:moments_imply_convergence}.
\end{proof}
We now give a result that gives sufficient conditions on the RREs for the conclusion of Theorem \ref{thm:moments_imply_convergence} to hold. This result is substantially easier to use in practice than Theorem \ref{thm:moments_imply_convergence}, since it requires only analyzing an ODE with $n$ variables, and not the stationary distribution of a CTMC over $\calX_\Om$.
\begin{theorem}\label{thm:RRE_GES}
    Consider an SCRN $\Xvec_\Om(t)$ satisfying Assumptions \ref{assum:propensities}, \ref{assum:C2b}, and \ref{assum:stoich_full_dim}, and let $\vvec^* \in \mathrm{int}(\calX)$. Suppose that there exist $K, \ga_1 >0$ such that for all $\vvec_0 \in \calX$,
    \begin{equation}
        \norm*{\vvec(t)-\vvec^*} \leq K e^{-\ga_1 t}\norm*{\vvec_0-\vvec^*},
    \end{equation}
    where $\vvec(t)$ is the solution to \eqref{eq:LNA_RRE}, and that there exists $\cvec \in \bbR^n_{>0}$ and $d,\ga_2>0$ such that for all $\vvec \in \left\{\vvec \in \calX \middle| \cvec^T\vvec \geq d \right\}$,
    \begin{equation}
        \cvec^TF(\vvec) \leq -\ga_2 \cvec^T \vvec.
    \end{equation}
    Then, there exist $C$ and $\Omega_0'$ such that for all $\Om\in \left\{\Om \in \calO \middle| \Om \geq \Om_0'\right\}$,
    \begin{equation}
        \Wass\left(\tilde{\Xvec}_\Om^\infty, \Yvec^\infty\right)
        \leq C \frac{\ln\Omega}{\sqrt{\Omega}}.
    \end{equation}
\end{theorem}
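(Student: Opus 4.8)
The plan is to reduce the entire statement to Theorem \ref{thm:moments_imply_convergence}. Since Assumptions \ref{assum:propensities}, \ref{assum:C2b}, and \ref{assum:stoich_full_dim} and the Hurwitz hypothesis on $A := \frac{\partial F}{\partial\vvec}(\vvec^*)$ are already granted, it suffices to show that the two stability hypotheses imply Conditions \ref{cond:k_1_moment_finite} and \ref{cond:moment}, after which the desired bound is immediate. The two hypotheses play complementary roles: the linear drift inequality $\cvec^T F(\vvec) \le -\ga_2\cvec^T\vvec$ controls the far field, delivering existence and uniqueness of the stationary law and finiteness of its moments for each fixed $\Om$, while global exponential stability controls the bulk and produces the sharp $\Om^{-1}$ scaling of the stationary variance that Condition \ref{cond:moment} requires.

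First I would establish Condition \ref{cond:k_1_moment_finite}. Note that the bounded second derivatives in Assumption \ref{assum:C2b}, together with convexity of $\calX = \prod_j I_j$, force each $\labar_i$ to grow at most quadratically on $\calX$, so the polynomial $p$ may be taken of degree $\kappa = 2$ and it remains only to produce a unique stationary law with finite third moment for each large $\Om$. Applying the generator of $\Xvec_\Om$ to the linear function $\cvec^T\xvec$ gives $G_{\Xvec_\Om}(\cvec^T\xvec) = \Om\,\cvec^T F(\xvec/\Om)$, so the drift hypothesis yields $G_{\Xvec_\Om}(\cvec^T\xvec) \le -\ga_2\,\cvec^T\xvec$ whenever $\cvec^T\xvec \ge \Om d$. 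Upgrading this to an exponential Lyapunov function $e^{\theta\cvec^T\xvec}$ with $\theta$ small (the jumps $\xivec_i$ are bounded, so $e^{\theta\cvec^T\xivec_i}-1 = \theta\cvec^T\xivec_i + O(\theta^2)$ and the leading term reproduces the drift) gives a geometric Foster--Lyapunov drift outside a compact set; combined with the irreducibility in Assumption \ref{assum:stoich_full_dim} this yields positive recurrence, a unique stationary distribution, and finiteness of all polynomial moments of $\Xvec_\Om^\infty$, hence Condition \ref{cond:k_1_moment_finite} via the remark following it.

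The heart of the argument is Condition \ref{cond:moment}. By Remarks \ref{rem:just_second_moment} and \ref{rem:concentration_moments} it is enough to show $\bbE[\norm{\tilde{\Xvectil}_\Om^\infty}^2] \le M_2/\Om$ uniformly for large $\Om$. I would use a Lyapunov function $U$ on $\calX$ that near $\vvec^*$ agrees with the quadratic $\xvec^T M\xvec$, where $M\succ 0$ solves $A^T M + M A = -I$, and apply $G_{\tilde{\Xvectil}_\Om}$ in the shifted coordinate $\xvec = \vvec - \vvec^*$. A second-order Taylor expansion of the jump increments with Lagrange remainder gives
\begin{equation}
G_{\tilde{\Xvectil}_\Om} U(\xvec) = \grad U(\xvec)\cdot F(\xvec+\vvec^*) + \frac{1}{2\Om}\sum_{i=1}^r \labar_i(\xvec+\vvec^*)\,\xivec_i^T \grad^2 U(\zeta_i)\,\xivec_i,
\end{equation}
whose leading term is the RRE drift. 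The mechanism is that $\grad U(\xvec)\cdot F(\xvec+\vvec^*)$ is negative definite — globally equal to $-\norm{\xvec}^2$ if $U$ is taken as the converse-Lyapunov integral $\int_0^\infty \norm{\phi_s(\vvec)-\vvec^*}^2\,ds$ supplied by global exponential stability, and at least $-\tfrac14\norm{\xvec}^2$ near $\vvec^*$ for the linearized quadratic — while the second term is $O(1/\Om)$ since the $\xivec_i$ are fixed, $\grad^2 U$ is bounded, and $\labar_i$ grows at most quadratically. For $\Om$ large the correction is absorbed, yielding $G_{\tilde{\Xvectil}_\Om}U(\xvec) \le -c\norm{\xvec}^2 + b/\Om$ with $c,b$ independent of $\Om$; testing this against the stationary law (justified by the finite moments from Condition \ref{cond:k_1_moment_finite}, through a Dynkin/truncation argument giving $\bbE[G_{\tilde{\Xvectil}_\Om}U(\tilde{\Xvectil}_\Om^\infty)] = 0$) gives $\bbE[\norm{\tilde{\Xvectil}_\Om^\infty}^2] \le b/(c\Om)$. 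The first-moment bound then follows by Jensen as in Remark \ref{rem:just_second_moment}, completing Condition \ref{cond:moment} and hence, by Theorem \ref{thm:moments_imply_convergence}, the claim.

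The main obstacle I anticipate is making the negative-drift inequality hold globally with an $\Om$-uniform constant despite $F$ growing polynomially, so that $F$ is not globally Lipschitz and the textbook converse-Lyapunov construction does not apply off the shelf. The quadratic $\xvec^T M\xvec$ gives negative drift only on a ball $\{\norm{\xvec}\le\rho\}$, because the cubic remainder $2\xvec^T M\left(F(\xvec+\vvec^*)-A\xvec\right)$ eventually dominates; outside this ball the sign must be recovered from the linear drift hypothesis. Reconciling the two — either by patching the quadratic and linear Lyapunov functions across the transition region, or by using the exponential Lyapunov function of the first step to show the stationary mass on $\{\norm{\xvec}>\rho\}$ is small enough that its contribution to the second moment is negligible — while keeping every constant independent of $\Om$, is the delicate step. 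The alternative route, in which global exponential stability alone closes the bulk estimate, instead requires a global bound on $\grad^2 U$ for the converse-Lyapunov integral, obtained by combining the bounded second derivatives of Assumption \ref{assum:C2b} with the confinement provided by the linear drift.
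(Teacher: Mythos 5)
Your overall architecture coincides with the paper's: verify Conditions \ref{cond:k_1_moment_finite} and \ref{cond:moment} and invoke Theorem \ref{thm:moments_imply_convergence}, using the linear drift hypothesis for the far field and global exponential stability for the $O(1/\Om)$ second-moment bound via a Foster--Lyapunov argument on a glued Lyapunov function with bounded Hessian. However, there is a concrete error in your verification of Condition \ref{cond:k_1_moment_finite}. With the test function $e^{\theta\cvec^T\xvec}$ the generator is $e^{\theta\cvec^T\xvec}\sum_{i}\la_i(\xvec)\left(e^{\theta\cvec^T\xivec_i}-1\right)$; the first-order term indeed reproduces $\theta\,\Om\,\cvec^TF(\xvec/\Om)\leq-\theta\ga_2\cvec^T\xvec$, but the $O(\theta^2)$ remainder is weighted by $\sum_i\la_i(\xvec)$, which under Assumption \ref{assum:C2b} may grow \emph{quadratically} in $\norm{\xvec}$ (e.g.\ the bimolecular propensity $k_5x_1x_2/\Om$ in the antithetic example), while the drift term only decays \emph{linearly}. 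For any fixed $\theta>0$ the remainder therefore dominates at large $\norm{\xvec}$, and no geometric Foster--Lyapunov inequality holds outside a compact set; the step ``the leading term reproduces the drift'' fails exactly when second-order reactions are present. The fix is what the paper does: use the polynomial test function $(\cvec^T\vvec)^{\kappa+1}=(\cvec^T\vvec)^3$ on the concentration-scaled chain, for which the correction terms carry explicit factors of $1/\Om$ and $1/\Om^2$ and are genuinely lower order relative to the cubic drift $3(\cvec^T\vvec)^2\cvec^TF(\vvec)$.

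For Condition \ref{cond:moment} you have the right mechanism (second-order Taylor expansion of $G_{\tilde\Xvectil_\Om}U$, bounded Hessian, negativity of $\grad U\cdot F$, then test against the stationary law), and you correctly identify the delicate point: neither the linearized quadratic $\xvec^TM\xvec$ nor the converse-Lyapunov integral works globally, because $F$ is not globally Lipschitz and the quadratic upper bound and Hessian bound on $\int_0^\infty\norm{\phi_s(\vvec)-\vvec^*}^2ds$ are only available on sublevel sets. But you leave the reconciliation as an anticipated obstacle rather than resolving it, and this is where the bulk of the paper's proof lives: a converse Lyapunov function $W_1$ on $\{\cvec^T\vvec\leq d'\}$ (Lemma \ref{lem:converse_lyap}), the explicit far-field function $W_2(\vvec)=2\bar b_1\left((\cvec^T\vvec)^2-(d')^2\right)$ whose derivative along $F$ is controlled by the linear drift hypothesis, the pointwise maximum of the two, a mollification in the transition annulus to restore differentiability of the max, and a partition-of-unity gluing with careful sign bookkeeping on five overlapping regions to preserve $\frac{\partial W_s}{\partial\vvec}F\leq-c\norm{\vvec-\vvec^*}^2$ everywhere while keeping $\grad^2W_s$ bounded. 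Your alternative route (using an exponential Lyapunov function to show the far-field stationary mass is negligible) inherits the same quadratic-propensity problem noted above, so the patching route is the one that must be carried out, and as written your proposal does not yet do so.
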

\begin{proof}
    See Section \ref{sec:RRE_GES_proof}.
\end{proof}
\section{Proofs of the Main Results}\label{sec:main_proof}
In this section we provide the proofs of the results presented in Section \ref{sec:main_results}.
\subsection{Proof of Theorem \ref{thm:moments_imply_convergence}}\label{sec:proof_main_thm}
We require the following Lemma from \cite{glynn2008bounding}, which gives sufficient conditions for the the Basic Adjoint Relationship (BAR) to hold.
    \begin{lemma}[Proposition 3 in \cite{glynn2008bounding}]\label{lem:glynn}
        Consider a CTMC over state space $\calS$ with rate matrix having entries $G(\xvec,\xvec')$ and stationary distribution $\pi(\xvec)$. If
        \begin{equation}
            \sum_{\xvec\in \calS} \pi(\xvec)\abs*{G(\xvec,\xvec)}\abs*{f(\xvec)} < \infty,
        \end{equation}
        then
        \begin{equation}
            \bbE_{\Xvec^\infty \sim \pi}\left[Gf(\Xvec^\infty)\right] = 0.
        \end{equation}
    \end{lemma}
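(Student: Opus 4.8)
The plan is to recognize the claim as the global balance equations $\sum_{\xvec}\pi(\xvec)G(\xvec,\xvec')=0$ tested against $f$, and to make the formal manipulation rigorous by justifying an interchange of summation. Writing the generator as $Gf(\xvec) = \sum_{\xvec'\in\calS} G(\xvec,\xvec')f(\xvec')$, the quantity of interest is the double sum
\begin{equation}
    \bbE_{\Xvec^\infty \sim \pi}\left[Gf(\Xvec^\infty)\right] = \sum_{\xvec \in \calS} \pi(\xvec)\sum_{\xvec' \in \calS} G(\xvec,\xvec')f(\xvec').
\end{equation}
If this sum converges absolutely, then Fubini's theorem permits swapping the order of summation, and the inner sum $\sum_{\xvec \in \calS}\pi(\xvec)G(\xvec,\xvec')$ vanishes for every fixed $\xvec'$ by the stationarity of $\pi$, giving the result. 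Thus the entire proof reduces to establishing absolute convergence of the double sum, after which the conclusion is immediate.

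To that end, I would split each inner sum into its diagonal and off-diagonal parts, using that $G(\xvec,\xvec) = -\sum_{\xvec' \neq \xvec} G(\xvec,\xvec') = -\abs*{G(\xvec,\xvec)}$ and $G(\xvec,\xvec') \geq 0$ for $\xvec \neq \xvec'$. The diagonal contribution is $\sum_{\xvec}\pi(\xvec)\abs*{G(\xvec,\xvec)}\abs*{f(\xvec)}$, which is exactly the hypothesis and hence finite. For the off-diagonal contribution, every summand $\pi(\xvec)G(\xvec,\xvec')\abs*{f(\xvec')}$ with $\xvec \neq \xvec'$ is nonnegative, so Tonelli's theorem lets me sum over $\xvec$ first:
\begin{equation}
    \sum_{\xvec' \in \calS}\abs*{f(\xvec')}\sum_{\xvec \neq \xvec'}\pi(\xvec)G(\xvec,\xvec') = \sum_{\xvec' \in \calS}\abs*{f(\xvec')}\,\pi(\xvec')\abs*{G(\xvec',\xvec')},
\end{equation}
where the inner sum is evaluated using the balance relation $\sum_{\xvec\neq\xvec'}\pi(\xvec)G(\xvec,\xvec') = \pi(\xvec')\abs*{G(\xvec',\xvec')}$. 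The right-hand side is again the hypothesis sum, and is therefore finite. Combining the two parts shows the double sum converges absolutely.

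The main obstacle is the off-diagonal bound in the second step: the hypothesis controls $\abs*{f(\xvec)}$ weighted by the \emph{exit} rate $\abs*{G(\xvec,\xvec)}$ out of $\xvec$, whereas the off-diagonal terms naturally pair $\abs*{f(\xvec')}$ with the \emph{entrance} rates $G(\xvec,\xvec')$ into $\xvec'$, which the hypothesis does not directly bound. The resolution is precisely the stationarity identity ``rate in equals rate out'' at each state $\xvec'$, which converts the total entrance rate $\sum_{\xvec\neq\xvec'}\pi(\xvec)G(\xvec,\xvec')$ back into $\pi(\xvec')\abs*{G(\xvec',\xvec')}$ and thereby recovers the hypothesis sum. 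With absolute convergence established, Fubini's theorem justifies the interchange of summation, and the conclusion follows from applying $\sum_{\xvec}\pi(\xvec)G(\xvec,\xvec') = 0$ term by term in $\xvec'$.
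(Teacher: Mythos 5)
The paper does not prove this lemma: it is imported verbatim as Proposition 3 of Glynn and Zeevi \cite{glynn2008bounding}, so there is no in-paper proof to compare against. Your argument is, in outline, the standard one (and essentially the one in that reference): establish absolute convergence of the double sum $\sum_{\xvec}\sum_{\xvec'}\pi(\xvec)G(\xvec,\xvec')f(\xvec')$ by splitting into diagonal and off-diagonal parts, then interchange the order of summation and apply the global balance equations column by column. The bookkeeping is right: the diagonal contribution is exactly the hypothesis, and Tonelli plus ``rate in equals rate out'' converts the off-diagonal contribution into the same quantity, so the total absolute mass is twice the hypothesis sum and Fubini applies.

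The one step you should not wave at is the balance relation $\sum_{\xvec\neq\xvec'}\pi(\xvec)G(\xvec,\xvec') = \pi(\xvec')\abs{G(\xvec',\xvec')}$ itself. Observe that this identity is precisely the conclusion of the lemma for $f=\bbone_{\{\xvec'\}}$, a choice for which the hypothesis is automatically satisfied; so your proof reduces the general statement to this special case and then treats the special case as known ``by the stationarity of $\pi$.'' For a CTMC, stationarity means $\pi P_t=\pi$ for the transition semigroup, and deriving $\pi G=0$ from that requires differentiating under an infinite sum at $t=0$; Fatou gives only the inequality $\sum_{\xvec\neq\xvec'}\pi(\xvec)G(\xvec,\xvec')\le\pi(\xvec')\abs{G(\xvec',\xvec')}$ for free, the reverse inequality needs regularity (non-explosiveness) of the chain or, equivalently, the fact that $\nu(\xvec)=\pi(\xvec)\abs{G(\xvec,\xvec)}$ is invariant for the embedded jump chain, and the identity can fail for explosive chains. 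In the setting of this paper the chains are irreducible and shown to be non-explosive elsewhere, so the balance equations do hold and your reduction is sound; but a self-contained proof of the lemma as stated must establish that step rather than assert it.
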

    It will be convenient for us to consider the 1-Wasserstein distance computed by taking a supremum over not the 1-Lipschitz functions, but the 1-Lipschitz functions that are additionally in $C^3$ with bounded derivatives. To this end, let
    \begin{equation}
        \calH = \{h \in C^3(\bbR^n)| h\in \Lip(1),\;\forall k \in \{1,2,3\},\;\sup_{\xvec\in\bbR^n}\norm{\grad^k h(\xvec)} < \infty \}.
    \end{equation}
    We have the following Lemma, which is proved in Section \ref{sec:C3_proof}.
    \begin{lemma}\label{lem:wasserstein_C3}
    The following holds for any two probability distributions $\nu$ and $\rho$ over $\bbR^n$:
    \begin{equation}
        \sup_{h \in \Lip(1)} \abs*{\bbE_{\Xvec\sim \nu}[h(\Xvec)] - \bbE_{\Yvec\sim\rho}[h(\Yvec)]} = \sup_{h \in \calH}\abs*{\bbE_{\Xvec\sim \nu}[h(\Xvec)] - \bbE_{\Yvec\sim\rho}[h(\Yvec)]}.
    \end{equation}
    \end{lemma}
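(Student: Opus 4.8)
Since every $h\in\calH$ is in particular $1$-Lipschitz, the inclusion $\calH\subseteq\Lip(1)$ gives the trivial inequality $\sup_{h\in\calH}\abs*{\bbE_{\Xvec\sim\nu}[h(\Xvec)]-\bbE_{\Yvec\sim\rho}[h(\Yvec)]}\le\sup_{h\in\Lip(1)}\abs*{\bbE_{\Xvec\sim\nu}[h(\Xvec)]-\bbE_{\Yvec\sim\rho}[h(\Yvec)]}$, so the entire content of the lemma is the reverse inequality. The plan is to show that an arbitrary $h\in\Lip(1)$ can be approximated uniformly on $\bbR^n$ by functions in $\calH$, and that this uniform approximation forces the corresponding differences of expectations to converge. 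The tool I would use is mollification: fix a smooth mollifier $\phi$ — for concreteness the density of a standard $n$-dimensional Gaussian, which is smooth with all derivatives integrable — set $\phi_\sigma(\svec)=\sigma^{-n}\phi(\svec/\sigma)$, and define $h_\sigma = h\star\phi_\sigma$. I would then argue that $h_\sigma\in\calH$ for each $\sigma>0$ and that $h_\sigma\to h$ uniformly as $\sigma\to 0$.

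First I would check membership in $\calH$. Smoothness of $h_\sigma$ is inherited from $\phi_\sigma$. The function $h_\sigma$ is $1$-Lipschitz because $\phi_\sigma$ is a probability density: for all $\xvec,\xvec'$,
\[
\abs*{h_\sigma(\xvec)-h_\sigma(\xvec')}\le\int\abs*{h(\xvec-\svec)-h(\xvec'-\svec)}\phi_\sigma(\svec)\,d\svec\le\norm*{\xvec-\xvec'}.
\]
For the second and third derivatives I would move a single derivative onto $h$, which as a Lipschitz function has a weak gradient bounded by $1$ almost everywhere, and the remaining derivatives onto the smooth factor, giving $\grad^m h_\sigma=(\grad h)\star\grad^{m-1}\phi_\sigma$ and hence $\sup_{\xvec}\norm*{\grad^m h_\sigma(\xvec)}\le\norm*{\grad^{m-1}\phi_\sigma}_{L^1}<\infty$ for $m=1,2,3$. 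These bounds blow up as $\sigma\to 0$, but that is harmless: membership in $\calH$ requires only finiteness for each fixed $\sigma$, not any uniformity.

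Finally I would establish the convergence and conclude. Using that $h$ is $1$-Lipschitz,
\[
\norm*{h_\sigma-h}_\infty=\sup_{\xvec}\abs*{\int\bigl(h(\xvec-\svec)-h(\xvec)\bigr)\phi_\sigma(\svec)\,d\svec}\le\int\norm*{\svec}\phi_\sigma(\svec)\,d\svec=\sigma\,\bbE\bigl[\norm*{\Zvec}\bigr],
\]
with $\Zvec$ standard Gaussian, so $h_\sigma\to h$ uniformly as $\sigma\to0$. Since adding a constant alters neither $\bbE_{\Xvec\sim\nu}[h(\Xvec)]-\bbE_{\Yvec\sim\rho}[h(\Yvec)]$ nor the relevant function classes, I would normalize $h(0)=0$, so that $\abs{h(\xvec)}\le\norm{\xvec}$; to keep the dual expression unambiguous I assume — as holds in every application in this paper — that $\nu$ and $\rho$ have finite first moments, making all the expectations finite. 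Uniform convergence then gives $\abs*{\bbE_{\Xvec\sim\nu}[h_\sigma(\Xvec)]-\bbE_{\Xvec\sim\nu}[h(\Xvec)]}\le\norm*{h_\sigma-h}_\infty\to0$, and the analogous bound under $\rho$, so the difference of expectations for $h_\sigma$ converges to that for $h$. As each $h_\sigma\in\calH$, the left-hand supremum dominates $\abs*{\bbE_{\Xvec\sim\nu}[h(\Xvec)]-\bbE_{\Yvec\sim\rho}[h(\Yvec)]}$ for every $h\in\Lip(1)$, and taking the supremum over such $h$ yields the reverse inequality. The hard part is arranging smoothness, bounded derivatives, and uniform proximity at once — which the mollifier delivers — together with the integrability bookkeeping ensuring the expectations and their limits are well defined; I expect the latter to be the only genuinely delicate point.
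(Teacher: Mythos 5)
Your proposal is correct and follows essentially the same route as the paper: both prove the nontrivial inequality by mollifying an arbitrary $1$-Lipschitz $h$, checking the mollification lands in $\calH$, and using the uniform bound $\norm{h_\sigma - h}_\infty \to 0$ to transfer the difference of expectations (the paper uses a compactly supported bump and phrases it as a contradiction, you use a Gaussian kernel and argue directly, but these are cosmetic differences). Your explicit attention to integrability is if anything more careful than the paper's; the only slip is the phrase ``left-hand supremum'' near the end, where the argument clearly requires the $\calH$-supremum (the right-hand side) to dominate $\abs{\bbE_{\nu}[h]-\bbE_{\rho}[h]}$.
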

    We now present the derivative bounds that we will need. The proof of Lemma \ref{lem:derivative_bounds} uses results from \cite{gorham2019measuring}, and is provided in \ref{sec:derivative_bounds_proof}.
    \begin{lemma}[Derivative Bounds]\label{lem:derivative_bounds}
        Consider the Stein Equation
        \begin{equation}
            G_{\Yvec} f_h(\xvec) = h(\xvec) - \bbE\left[h(\Yvec^\infty)\right],
        \end{equation}
        for $h\in \calH$. Assume that there exists a symmetric matrix $H>0$ and a real number $\phi>0$ such that
        \begin{equation}
            H\frac{\partial F(\vvec^*)}{\partial \vvec} + \left(\frac{\partial F(\vvec^*)}{\partial \vvec}\right)^TH \leq -2\phi H.
        \end{equation}
        Assume also that $S \diag\sqrt{\bar{\lavec}(\vvec^*)}$ has a right inverse. Then, there exists a solution $f_h$ to the Stein Equation, which satisfies
        \begin{equation}\label{eq:derivative_bound_first}
            \forall \xvec\in \bbR^n,\; \norm*{\grad f_h(\xvec)} \leq C_1,
        \end{equation}
        \begin{equation}\label{eq:derivative_bound_second}
            \forall \xvec\in \bbR^n,\; \norm*{\grad^2 f_h(\xvec)} \leq C_2,
        \end{equation}
        and for all $0 < \zeta < 1$, there exists $C_3(\zeta)$ such that
        \begin{equation}
            \forall \xvec,\xvec'\in \bbR^n,\; \norm*{\grad^2 f_h(\xvec) - \grad^2 f_h(\xvec')} \leq C_3(\zeta)\norm{\xvec-\xvec'}^{1-\zeta},
        \end{equation}
        where $C_1$, $C_2$, and $C_3$ are given by
        \begin{align}
            C_1 & = \int_0^\infty \norm*{e^{At}}dt,\\
            C_2 & = \norm*{\Sigma^{-1}}\cond (H) \left(2 + e^{-\phi}C_1\right),\\
            C_3 & = \frac{2C_3'}{\zeta} + e^{-\phi}C_3'C_1,
        \end{align}
        where $A = \frac{\partial F(\vvec^*)}{\partial \vvec}$, $\Sigma^{-1}$ is the right inverse of $S \diag\sqrt{\bar{\lavec}(\vvec^*)}$, and 
        \begin{equation}
            C_3' = 2 \max\{\norm*{\Sigma^{-1}},\norm*{\Sigma^{-1}}^2\}\cond (H)^{\tfrac{5}{2}}.
        \end{equation}
    \end{lemma}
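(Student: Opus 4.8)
The plan is to represent the Stein solution through the semigroup $(P_t)_{t\ge 0}$ of the Ornstein–Uhlenbeck process $\Yvec$ and to read off the three bounds by differentiating under the time integral. Writing $\pi(h) = \bbE[h(\Yvec^\infty)]$, I would set
\begin{equation}
    f_h(\xvec) = -\int_0^\infty\left(P_t h(\xvec) - \pi(h)\right)dt, \qquad P_t h(\xvec) = \bbE\left[h(\Yvec_\xvec(t))\right],
\end{equation}
where $\Yvec_\xvec(t)$ is the process started at $\xvec$. Since $\Yvec$ solves a linear SDE, $\Yvec_\xvec(t) = e^{At}\xvec + Z_t$ with $A = \frac{\partial F(\vvec^*)}{\partial\vvec}$ and $Z_t\sim\calN(0,Q(t))$, $Q(t) = \int_0^t e^{As}D e^{A^Ts}ds$, where $Z_t$ is independent of $\xvec$ and $D = \Sigma\Sigma^T$ with $\Sigma = S\diag\sqrt{\lavecbar(\vvec^*)}$. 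The Lyapunov inequality $HA + A^TH\le -2\phi H$ yields the contraction $\norm{e^{At}}\le\sqrt{\cond(H)}\,e^{-\phi t}$ (differentiate $\xvec^T e^{A^Tt}He^{At}\xvec$ and compare), so $A$ is Hurwitz, $P_t h\to\pi(h)$ geometrically, and the integral converges. That $f_h$ solves the Stein equation is then the standard identity $G_\Yvec f_h = -\int_0^\infty\frac{d}{dt}P_t h\,dt = h - \pi(h)$, using $G_\Yvec P_t h = \frac{d}{dt}P_t h$ and $P_0 h = h$.

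For the gradient bound I would differentiate under the integral. Because the noise $Z_t$ is independent of $\xvec$, the spatial derivative falls only on $e^{At}\xvec$, giving $\grad P_t h(\xvec) = (e^{At})^T\bbE[\grad h(\Yvec_\xvec(t))]$. As $h\in\Lip(1)$ forces $\norm{\grad h}\le 1$ pointwise, $\norm{\grad P_t h(\xvec)}\le\norm{e^{At}}$, and integrating yields \eqref{eq:derivative_bound_first} with $C_1 = \int_0^\infty\norm{e^{At}}dt$, finite by the contraction estimate.

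The Hessian bound is where the diffusion nondegeneracy enters and is the crux of the argument. A naive second differentiation would place $\grad^2 h$ inside the expectation, which is not controlled for $h\in\calH$; instead I would transfer one of the two derivatives onto the Gaussian transition kernel via the Gaussian integration-by-parts identity $\bbE[\grad\psi(Z_t)] = \bbE[\psi(Z_t)\,Q(t)^{-1}Z_t]$ applied after the shift, producing
\begin{equation}
    \grad^2 P_t h(\xvec) = (e^{At})^T\,\bbE\!\left[\grad h(\Yvec_\xvec(t))\,(Q(t)^{-1}Z_t)^T\right]e^{At}.
\end{equation}
Writing $Z_t = Q(t)^{1/2}\wvec$ with $\wvec\sim\calN(0,I)$ gives $\norm{\grad^2 P_t h(\xvec)}\le\norm{e^{At}}^2\,\norm{Q(t)^{-1/2}}$, since $\bbE[\grad h\,\wvec^T]$ has operator norm at most $1$ (by Cauchy–Schwarz, as $\norm{\grad h}\le 1$ and $\wvec$ is standard normal). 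The factor $\norm{Q(t)^{-1/2}}$ is the source of the constant $\norm{\Sigma^{-1}}$: because $Q(t) = Dt + O(t^2)$ near $0$ and $D=\Sigma\Sigma^T$, one has $\norm{Q(t)^{-1/2}}\lesssim\norm{\Sigma^{-1}}/\sqrt{t}$ (note $\norm{\Sigma^{-1}} = \norm{D^{-1/2}}$ for the right inverse), an integrable $t^{-1/2}$ singularity whose integral over the short-time window contributes the ``$2$''; the $\cond(H)$ comes from the two $\norm{e^{At}}^2\le\cond(H)e^{-2\phi t}$ factors, and the long-time tail, handled through the semigroup identity $P_t = P_1 P_{t-1}$ together with the gradient bound on $P_{t-1}h$, contributes the term proportional to $e^{-\phi}C_1$. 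Assembling the short- and long-time pieces gives \eqref{eq:derivative_bound_second} with $C_2$ of the stated form.

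The Hölder estimate on $\grad^2 f_h$ is the main obstacle, and the appearance of the exponent $1-\zeta$ together with the $1/\zeta$ blow-up in $C_3$ is the tell-tale sign of a critical logarithm that must be traded for a power. Here I would additionally transfer both derivatives onto the Gaussian kernel to obtain $\norm{\grad^3 P_t h(\xvec)}\lesssim\norm{e^{At}}^3\norm{Q(t)^{-1/2}}^2$, which behaves like $t^{-1}$ near $0$, alongside the $t^{-1/2}$ Hessian bound, and combine them as
\begin{equation}
    \norm{\grad^2 P_t h(\xvec) - \grad^2 P_t h(\xvec')}\le\min\!\left(2\sup_{\yvec}\norm{\grad^2 P_t h(\yvec)},\;\sup_{\yvec}\norm{\grad^3 P_t h(\yvec)}\,\norm{\xvec-\xvec'}\right).
\end{equation}
I would split $\int_0^\infty$ at the crossover time $t_*\propto\norm{\xvec-\xvec'}^2$, using the $t^{-1/2}$ bound for $t<t_*$ and the $t^{-1}\norm{\xvec-\xvec'}$ bound for $t>t_*$. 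The borderline pairing of $t^{-1/2}$ and $t^{-1}$ produces a term of order $\norm{\xvec-\xvec'}\ln(1/\norm{\xvec-\xvec'})$, which I would convert to $\norm{\xvec-\xvec'}^{1-\zeta}$ via the elementary inequality $x^\zeta\ln(1/x)\le 1/(e\zeta)$; this is precisely what manufactures the $\tfrac{2C_3'}{\zeta}$ term, while the exponentially decaying tail yields $e^{-\phi}C_3'C_1$. The delicate points are a uniform lower bound on $\lambda_{\min}(Q(t))$ as $t\to 0$, where the diffusion degenerates, and the bookkeeping of constants through the interpolation; alternatively, one may simply verify that the contraction rate $\phi$ (from the Lyapunov inequality) and the nondegeneracy of $\Sigma$ place the process within the hypotheses of the Stein-factor bounds of \cite{gorham2019measuring} and invoke those directly, which is the route the proof takes.
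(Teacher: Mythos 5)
Your proposal is correct, and its core argument is a genuinely more self-contained route than the one the paper takes. The paper's proof does not differentiate the explicit Gaussian transition kernel; instead it verifies that the Ornstein--Uhlenbeck process has Wasserstein decay rate $\norm{e^{At}}$ (via a coupling of Gaussians with equal covariance), specializes the \emph{derivative flow} machinery of \cite{gorham2019measuring} to the linear-drift, constant-diffusion case (where the first derivative flow is the deterministic $e^{At}\vvec$ and the second derivative flow vanishes identically), and then re-traces Lemmas 15, 19 and 20 of that reference with these sharpened flow bounds --- splitting the time integral at $t=1$, choosing $t_0=t$ on $[0,1]$ and $t_0=1$ on $[1,\infty)$, and using the seminorm interpolation $M_{1-\zeta}(\grad^2 P_t h)\leq 2^\zeta M_2(P_th)^\zeta M_3(P_th)^{1-\zeta}$ integrated in $t$ to produce the $2/\zeta$ term. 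Your version replaces the derivative-flow/Bismut apparatus with direct Gaussian integration by parts on $\Yvec_{\xvec}(t)=e^{At}\xvec+Z_t$, and replaces the seminorm interpolation with a crossover-time split at $t_*\propto\norm{\xvec-\xvec'}^2$ followed by the conversion $x^\zeta\ln(1/x)\leq 1/(e\zeta)$; both mechanisms produce the same $1/\zeta$ blow-up and the same $e^{-\phi}C_1$ tail. What the paper's route buys is that the constants $C_2$, $C_3'$ fall out in exactly the stated form with $\cond(H)$ to the stated powers, since they are inherited from the general theorem; your route would require carrying the lower bound $Q(t)\succeq c\,t\,\Sigma\Sigma^T$ for small $t$ (and the identification $\norm{\Sigma^{-1}}=\norm{(\Sigma\Sigma^T)^{-1/2}}$, which holds for the Moore--Penrose right inverse) through the bookkeeping to recover them, a point you flag but do not execute. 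What your route buys is transparency: for this OU-specific case nothing stochastic beyond the explicit Gaussian law is needed. Since you also correctly identify the paper's actual strategy as the fallback, there is no gap of substance here.
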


    \begin{proof}
        See Section \ref{sec:derivative_bounds_proof}.
    \end{proof}

    \begin{remark}
        The assumption that there exists a symmetric matrix $H>0$ and a real number $\phi>0$ such that
        \begin{equation}
            H\frac{\partial F(\vvec^*)}{\partial \vvec} + \left(\frac{\partial F(\vvec^*)}{\partial \vvec}\right)^TH \leq -2\phi H
        \end{equation}
        is equivalent to the assumption that $\frac{\partial F(\vvec^*)}{\partial \vvec}$ is Hurwitz.
    \end{remark}
    
    Our goal is to analyze
    \begin{equation}
        \Wass\left(\tilde{\Xvec}_\Om^\infty, \Yvec^\infty\right) = \sup_{h\in\Lip(1)} \abs*{\bbE\left[h\left(\Xvectil_\Om^\infty\right)\right] - \bbE\left[h\left(\Yvec^\infty\right)\right]}.
    \end{equation}
    We begin by observing that per Lemma \ref{lem:wasserstein_C3} we can obtain $\Wass\left(\tilde{\Xvec}_\Om^\infty, \Yvec^\infty\right)$ by taking the supremum over $h\in \calH$ instead of $h\in\Lip(1)$, and thus we can analyze
    \begin{equation}
        \Wass\left(\tilde{\Xvec}_\Om^\infty, \Yvec^\infty\right) = \sup_{h\in\calH} \abs*{\bbE\left[h\left(\Xvectil_\Om^\infty\right)\right] - \bbE\left[h\left(\Yvec^\infty\right)\right]}.
    \end{equation}
    To do this we use Stein's Method following \cite{braverman2017stein}. The Stein Equation is
    \begin{equation}
        G_{\Yvec} f_h(\xvec) = h(\xvec) - \bbE\left[h(\Yvec^\infty)\right],
    \end{equation}
    which in our case is the Poisson equation
    \begin{equation}\label{eq:stein_equation}
        \frac{\partial f_h(\xvec)}{\partial \xvec}\cdot \frac{\partial F(\vvec^*)}{\partial \vvec}\xvec + \frac{1}{2}\sum_{i=1,j=1}^n D_{ij} \frac{\partial^2 f_h(\xvec)}{\partial x_i \partial x_j} = h(\xvec) - \bbE\left[h(\Yvec^\infty)\right],
    \end{equation}
    which is guaranteed to have a solution by Lemma \ref{lem:derivative_bounds}. Throughout the rest of this proof, we will denote by $f_h$ a solution to \eqref{eq:stein_equation} with $h\in \calH$ that satisfies the bounds given by Lemma \ref{lem:derivative_bounds}. We take the expectation of the Stein Equation with respect to $\Xvectil_\Om^\infty$ to find
    \begin{equation}\label{eq:stein_averaged}
        \bbE \left[ G_{\Yvec} f_h(\Xvectil_\Om^\infty)\right] = \bbE\left[ h(\Xvectil_\Om^\infty)\right] - \bbE\left[h(\Yvec^\infty)\right].
    \end{equation}
    Before proceeding, we show that $\bbE\left[ G_{\Xvectil_\Om}f_h(\Xvectil_\Om^\infty)\right] = 0$ for all $h\in \calH$. We have that $G_{\Xvectil_\Om}(\xvec,\xvec) = -\sum_{i=1}^r \lambda_i(\sqrt{\Om}\xvec + \Om \vvec^*)$, and so by Condition \ref{cond:k_1_moment_finite}, there exists $K>0$ such that
    \begin{equation}
        \sum_{\xvec} \mathbb{P}[\Xvectil_\Om^\infty = \xvec] \abs{G_{\Xvectil_\Om}(\xvec,\xvec) f_h(\xvec)} \leq \sum_{\xvec} \mathbb{P}[\Xvectil_\Om^\infty = \xvec] (K+\norm{\sqrt{\Om}\xvec + \Om \vvec^*}^\kappa) \abs{f_h(\xvec)},
    \end{equation}
    where the sums are taken over $\xvec$ such that $\sqrt{\Om}\xvec + \Om \vvec^* \in \calX_\Om$. By Lemma \ref{lem:derivative_bounds}, $f_h$ is Lipschitz, and so $\abs{f_h(\xvec)} \leq \abs{f_h(0)} + \norm{\xvec}$. Therefore,
    \begin{align}
         \sum_{\xvec} \mathbb{P}[\Xvectil_\Om^\infty = \xvec] \abs{G_{\Xvectil_\Om}(\xvec,\xvec) f_h(\xvec)} & \hspace{-1.5pt}\leq\hspace{-1.5pt} \sum_{\xvec} \mathbb{P}[\Xvectil_\Om^\infty = \xvec] (K\hspace{-1.5pt}+\hspace{-1.5pt}\norm{\sqrt{\Om}\xvec \hspace{-1.5pt}+\hspace{-1.5pt} \Om \vvec^*}^\kappa) \left(\abs{f_h(0)} \hspace{-1.5pt}+\hspace{-1.5pt} \norm{\xvec}\right)\hspace{-2pt},\\
         \begin{split}
         & \leq \sum_{\xvec} \mathbb{P}[\Xvectil_\Om^\infty = \xvec] \left((K\abs{f_h(0)} + K\norm{\xvec} \vphantom{\max\left\{1,\left(\norm{\sqrt{\Om}\xvec} + \norm{\Om \vvec^*}\right)^{\kappa}\right\}}\right.
         \\ & \left. \quad\quad\quad+ \abs{f_h(0)}\norm{\sqrt{\Om}\xvec + \Om \vvec^*}^\kappa \right. \\ &  \quad\quad\quad \left.+ \max\left\{1,\left(\norm{\sqrt{\Om}\xvec} + \norm{\Om \vvec^*}\right)^{\kappa}\right\} \norm{\xvec}\right),
        \end{split}
    \end{align}
    which is finite due to Condition \ref{cond:k_1_moment_finite}. Therefore, from Lemma \ref{lem:glynn}, $\bbE\left[ G_{\Xvectil_\Om}f_h(\Xvectil_\Om^\infty)\right] = 0$. Thus, for all $h \in \calH$, \eqref{eq:stein_averaged} implies that
    \begin{align}\label{eq:stein_discrep_bound}
        \abs*{\bbE\left[h\left(\Xvectil_\Om^\infty\right)\right] - \bbE\left[h\left(\Yvec^\infty\right)\right]} & = \abs*{\bbE \left[ G_{\Yvec} f_h(\Xvectil_\Om^\infty)\right] - \bbE \left[G_{\Xvectil_\Om}f_h(\Xvectil_\Om^\infty)\right]},\\
        & = \abs*{\bbE \left[ G_{\Yvec} f_h(\Xvectil_\Om^\infty) - G_{\Xvectil_\Om}f_h(\Xvectil_\Om^\infty)\right]},\\
        & \leq \bbE \abs*{ G_{\Yvec} f_h(\Xvectil_\Om^\infty) - G_{\Xvectil_\Om}f_h(\Xvectil_\Om^\infty)},
    \end{align}
    where $f_h$ is a solution to the Stein equation. The existence of $f_h$ is guaranteed by Lemma \ref{lem:derivative_bounds}. For conciseness, let us define $\delta = \frac{1}{\sqrt{\Om}}$. Let $\lvec = \xvec/\delta + \vvec^*/\delta^2$. For a fixed $\lvec$ we take the first order Taylor expansion of $G_{\Xvectil_\Om}f_h(\xvec)$ in $\delta$ using the Lagrange form of the remainder.
    \begin{align}
        G_{\Xvectil_\Om}f_h(\xvec) & = \sum_{i=1}^r \la_i(\lvec) \left(f(\xvec+\delta\xivec_i)-f(\xvec)\right),\\
        & = \delta \sum_{i=1}^r \la_i(\lvec) \frac{\partial f_h(\xvec)}{\partial \xvec} \xivec_i + \delta^2\frac{1}{2}  \sum_{i=1}^r \la_i(l) \xivec_i^T \grad^2{f_h}(\xvec + \ep \xivec_i) \xivec_i,\\
        \begin{split}\label{eq:taylor_expanded}
        & = \delta \sum_{i=1}^r \la_i(\lvec) \frac{\partial f_h(\xvec)}{\partial \xvec} \xivec_i + \delta^2\frac{1}{2}  \sum_{i=1}^r \la_i(l) \xivec_i^T \grad^2{f_h}(\xvec) \xivec_i\\ &\quad\quad + \delta^2\frac{1}{2}  \sum_{i=1}^r \la_i(\lvec) \left( \xivec_i^T \grad^2{f_h}(\xvec + \ep \xivec_i) \xivec_i - \xivec_i^T \grad^2{f_h}(\xvec) \xivec_i \right),
        \end{split}
    \end{align}
    where $0 \leq \ep \leq \delta$, and $\epsilon$ can depend on $\lvec$. We have that $\la_i(\lvec) = \frac{1}{\delta^2}\labar_i(\delta \xvec + \vvec^*)$, and so from Taylor's Theorem we have
    \begin{equation}
        \frac{1}{\delta^2}\labar_i(\delta \xvec + \vvec^*) = \frac{1}{\delta^2}\labar_i(\vvec^*) + \frac{1}{\delta}\frac{\partial \labar_i(\vvec^*)}{\partial \vvec} \xvec + R_i(\xvec),
    \end{equation}
    where $\abs{R_i(\xvec)} \leq \frac{1}{2}\norm{\xvec}^2\sup_{\zvec \in \calX}\norm{\grad^2{\labar_i}(\zvec)}$, in which the supremum over $\calX$ is finite due to our assumption that $\labar_i(\xvec)$ has bounded second derivative. We have that \eqref{eq:taylor_expanded} becomes
    \begin{align}
        \begin{split}
            G_{\Xvectil_\Om}f_h(\xvec) & = \sum_{i=1}^r \left(\frac{1}{\delta^2}\labar_i(\vvec^*) + \frac{1}{\delta}\frac{\partial \labar_i(\vvec^*)}{\partial \vvec} \xvec + R_i(\xvec)\right)\delta \frac{\partial f_h(\xvec)}{\partial \xvec} \xivec_i\\
            & + \sum_{i=1}^r \left(\frac{1}{\delta^2}\labar_i(\vvec^*) + \frac{1}{\delta}\frac{\partial \labar_i(\vvec^*)}{\partial \vvec} \xvec + R_i(\xvec)\right)\delta^2\frac{1}{2}  \xivec_i^T \grad^2{f_h}(\xvec) \xivec_i\\
            & + \sum_{i=1}^r \left(\frac{1}{\delta^2}\labar_i(\vvec^*) + \frac{1}{\delta}\frac{\partial \labar_i(\vvec^*)}{\partial \vvec} \xvec \right)\delta^2\frac{1}{2}  \left( \xivec_i^T \grad^2{f_h}(\xvec + \ep \xivec_i) \xivec_i - \xivec_i^T \grad^2{f_h}(\xvec) \xivec_i \right)\\
            & + \sum_{i=1}^r R_i(\xvec)\delta^2\frac{1}{2}  \left( \xivec_i^T \grad^2{f_h}(\xvec + \ep \xivec_i) \xivec_i - \xivec_i^T \grad^2{f_h}(\xvec) \xivec_i \right).
        \end{split}
    \end{align}
    Collecting terms and using the fact that $F(\vvec^*) = 0$, we obtain
    \begin{align}
        \begin{split}
        &G_{\Xvectil_\Om}f_h(\xvec)  = \frac{\partial f_h(\xvec)}{\partial \xvec}\sum_{i=1}^r  \xivec_i \frac{\partial \labar_i(\vvec^*)}{\partial \vvec} \xvec + \frac{1}{2}\sum_{i=1}^r\labar_i(\vvec^*)\xivec_i^T \grad^2{f_h}(\xvec) \xivec_i\\
        &\quad\quad + \delta \frac{1}{2}\sum_{i=1}^r \left[2R_i(\xvec) \frac{\partial f_h(\xvec)}{\partial \xvec} \xivec_i + \frac{\partial \labar_i(\vvec^*)}{\partial \vvec} \xvec  \xivec_i^T \grad^2{f_h}(\xvec) \xivec_i \right]\\
        &\quad\quad + \delta^2\frac{1}{2}\sum_{i=1}^r R_i(\xvec) \xivec_i^T \grad^2{f_h}(\xvec) \xivec_i\\
        &\quad\quad + \frac{1}{2}\sum_{i=1}^r \left(\labar_i(\vvec^*) + \delta\frac{\partial \labar_i(\vvec^*)}{\partial \vvec} \xvec + \delta^2 R_i(\xvec)\right) \left( \xivec_i^T \grad^2{f_h}(\xvec + \ep \xivec_i) \xivec_i - \xivec_i^T \grad^2{f_h}(\xvec) \xivec_i \right).
        \end{split}
    \end{align}
    Observe that the first two terms are identical to $G_Yf_h(\xvec)$, since
    \begin{equation}
        \frac{1}{2}\sum_{i=1}^r\labar_i(\vvec^*)\xivec_i^T \grad^2{f_h}(\xvec) \xivec_i = \frac{1}{2}\sum_{p=1,q=1}^n D_{pq}\frac{\partial f(\xvec)}{\partial x_p \partial x_q},
    \end{equation}
    with $D= S \diag(\lavecbar(\vvec^*))S^T$. Therefore, for all $\xvec$ such that $\sqrt{\Om}\xvec+\Om\vvec^*\in \calX_\Om$,
    \begin{align}
        \begin{split}
        &\abs*{G_{\Yvec} f_h(\xvec) - G_{\Xvectil_\Om}f_h(\xvec)}  \leq \abs*{\frac{1}{2}\delta \sum_{i=1}^r \left[2R_i(\xvec) \frac{\partial f_h(\xvec)}{\partial \xvec} \xivec_i + \frac{\partial \labar_i(\vvec^*)}{\partial \vvec} \xvec  \xivec_i^T \grad^2{f_h}(\xvec) \xivec_i \right]}\\
         & \quad\quad\quad + \abs*{\frac{1}{2}\delta^2\sum_{i=1}^r R_i(\xvec) \xivec_i^T \grad^2{f_h}(\xvec) \xivec_i}\\
         & \quad\quad + \abs*{\frac{1}{2}\sum_{i=1}^r \left(\labar_i(\vvec^*) + \delta\frac{\partial \labar_i(\vvec^*)}{\partial \vvec} \xvec + \delta^2R_i(\xvec)\right) \left( \xivec_i^T \grad^2{f_h}(\xvec + \ep \xivec_i) \xivec_i - \xivec_i^T \grad^2{f_h}(\xvec) \xivec_i \right)}.
         \end{split}
    \end{align}
    Letting $Q_i = \sup_{\vvec\in\calX} \norm{\grad^2{\labar_i}(\vvec)}$ and invoking Lemma \ref{lem:derivative_bounds} by the fact that Assumptions \ref{assum:propensities} and \ref{assum:stoich_full_dim} ensure that $S \diag \sqrt{\lavecbar(\vvec^*)}$ has a right inverse since $\lavecbar(\vvec^*)>0$ and $S$ has full column rank, we have
    \begin{align}
         \begin{split}
         \abs*{G_{\Yvec} f_h(\xvec) - G_{\Xvectil_\Om}f_h(\xvec)} & \leq \frac{1}{2}\delta \sum_{i=1}^r \left(C_1 Q_i \norm*{\xvec}^2\norm{\xivec_i} + C_2\norm*{\frac{\partial \labar_i(\vvec^*)}{\partial \vvec}}\norm*{\xvec}\norm{\xivec_i}^2\right) \\
         & + \frac{1}{4}\delta^2  C_2\norm*{\xvec}^2 \sum_{i=1}^r Q_i \norm*{\xivec_i}^2\\
         & + \frac{1}{2}\delta^{2-\zeta}  C_3(\zeta) \norm{\xvec} \sum_{i=1}^r \norm*{\frac{\partial \labar_i(\vvec^*)}{\partial \vvec}}\norm*{\xivec_i}^{3-\zeta}\\
         & + \frac{1}{4}\delta^{3-\zeta}  C_3(\zeta) \norm*{\xvec}^2 \sum_{i=1}^r Q_i \norm*{\xivec_i}^{3-\zeta}\\
         & + \frac{1}{2}\delta^{1-\zeta} C_3(\zeta) \sum_{i=1}^r \abs*{\labar_i(\vvec^*)} \norm*{\xivec_i}^{3-\zeta}.
        \end{split}
    \end{align}
    This bound holds for any $h \in \calH$, with $f_h$ chosen as the solution to the Stein equation \eqref{eq:stein_equation} specified by Lemma \ref{lem:derivative_bounds}, and thus from \eqref{eq:stein_discrep_bound} we have
    \begin{equation}
        \begin{split}
        \abs*{\bbE\left[h\left(\Xvectil_\Om^\infty\right)\right] - \bbE\left[h\left(\Yvec^\infty\right)\right]}& \\
        & \hspace{-1.5in} \leq \bbE\left[\frac{1}{2}\delta \sum_{i=1}^r \left(C_1 Q_i \norm*{\Xvectil_\Om^\infty}^2\norm{\xivec_i} + C_2\norm*{\frac{\partial \labar_i(\vvec^*)}{\partial \vvec}}\norm*{\Xvectil_\Om^\infty}\norm{\xivec_i}^2\right) \right.\\
        + \frac{1}{4}\delta^2  C_2\norm*{\Xvectil_\Om^\infty}^2 \sum_{i=1}^r Q_i \norm*{\xivec_i}^2\\
        + \frac{1}{2}\delta^{2-\zeta}  C_3(\zeta) \norm{\Xvectil_\Om^\infty} \sum_{i=1}^r \norm*{\frac{\partial \labar_i(\vvec^*)}{\partial \vvec}}\norm*{\xivec_i}^{3-\zeta}\\
        + \frac{1}{4}\delta^{3-\zeta}  C_3(\zeta) \norm*{\Xvectil_\Om^\infty}^2 \sum_{i=1}^r Q_i \norm*{\xivec_i}^{3-\zeta}\\
        \left.+ \frac{1}{2}\delta^{1-\zeta} C_3(\zeta) \sum_{i=1}^r \abs*{\labar_i(\vvec^*)} \norm*{\xivec_i}^{3-\zeta}\right].
        \end{split}
    \end{equation}
    Using the moment bounds and the linearity of the expectation we have
    \begin{equation}
        \begin{split}
        \abs*{\bbE\left[h\left(\Xvectil_\Om^\infty\right)\right] - \bbE\left[h\left(\Yvec^\infty\right)\right]} \leq \frac{1}{2}\delta \sum_{i=1}^r \left(C_1 Q_i M_2 \norm{\xivec_i} + C_2\norm*{\frac{\partial \labar_i(\vvec^*)}{\partial \vvec}}M_1\norm{\xivec_i}^2\right) \\
        + \frac{1}{4}\delta^2  C_2 M_2 \sum_{i=1}^r Q_i \norm*{\xivec_i}^2\\
        + \frac{1}{2}\delta^{2-\zeta}  C_3(\zeta) M_1 \sum_{i=1}^r \norm*{\frac{\partial \labar_i(\vvec^*)}{\partial \vvec}}\norm*{\xivec_i}^{3-\zeta}\\
        + \frac{1}{4}\delta^{3-\zeta}  C_3(\zeta) M_2 \sum_{i=1}^r Q_i \norm*{\xivec_i}^{3-\zeta}\\
        + \frac{1}{2}\delta^{1-\zeta} C_3(\zeta) \sum_{i=1}^r \abs*{\labar_i(\vvec^*)} \norm*{\xivec_i}^{3-\zeta}.
        \end{split}
    \end{equation}
    To prove the error bound proportional to $\frac{\ln \Omega}{\sqrt{\Omega}}$, we take $\zeta = -(\ln \delta )^{-1}$ for $\Om > e^2$, in which case $C_3(\zeta) = -2\ln(\delta) C_3' + e^{-\phi}C_3'C_1$. Let $\iota > 0$. For all $\Om \geq \max\{\Om_0, e^2+\iota\}$ we have
    \begin{equation}
        \begin{split}
        \abs*{\bbE\left[h\left(\Xvectil_\Om^\infty\right)\right] - \bbE\left[h\left(\Yvec^\infty\right)\right]} \leq \frac{1}{2}\delta \sum_{i=1}^r \left(C_1 Q_i M_2\norm{\xivec_i} + C_2\norm*{\frac{\partial \labar_i(\vvec^*)}{\partial \vvec}}M_1\norm{\xivec_i}^2\right) \\
        + \frac{1}{4} \delta^2 C_2 M_2 \sum_{i=1}^r Q_i \norm*{\xivec_i}^2\\
        + \frac{1}{2}\delta^2 C_3'e\left(2\ln (\delta^{-1}) + e^{-\phi}C_1\right) M_1 \sum_{i=1}^r \norm*{\frac{\partial \labar_i(\vvec^*)}{\partial \vvec}}\max\{\norm*{\xivec_i}^2, \norm*{\xivec_i}^{3}\}\\
        + \frac{1}{4} \delta^{3} C_3'e \left(2\ln (\delta^{-1}) + e^{-\phi}C_1\right) M_2 \sum_{i=1}^r Q_i \max\{\norm*{\xivec_i}^2, \norm*{\xivec_i}^{3}\}\\
        + \frac{1}{2} \delta C_3'e \left(2\ln (\delta^{-1}) + e^{-\phi}C_1\right) \sum_{i=1}^r \abs*{\labar_i(\vvec^*)} \max\{\norm*{\xivec_i}^2, \norm*{\xivec_i}^{3}\}.
        \end{split}
    \end{equation}
    The $\delta \ln (\delta^{-1}) = \frac{1}{2}\delta \ln (\delta^{-2}) = \frac{1}{2}\frac{\ln \Omega}{\sqrt{\Omega}}$ term decays the slowest, which yields the desired result.

\subsection{Proof of Lemma \ref{lem:derivative_bounds}}\label{sec:derivative_bounds_proof}
    Here we prove Lemma \ref{lem:derivative_bounds}. The proof is an application of Theorem 5 in \cite{gorham2019measuring}, with the bounds tightened with more detailed analysis in our special case. For convenience we adopt the following notion from \cite{gorham2019measuring}. For a function $g$ with domain $\bbR^n$, we use the notation
    \begin{equation}
        M_j(g) = \sup_{\xvec,\xvec' \in \bbR^n, \xvec \neq \xvec'} \frac{\norm{\nabla^{\lceil j\rceil-1}g(\xvec) - \nabla^{\lceil j\rceil-1}g(\xvec')}}{\norm{\xvec-\xvec'}^{\{j\}}},
    \end{equation}
    where $\{j\} = j - \lceil j - 1\rceil$, and
    \begin{equation}
        F_j(g) = \sup_{\xvec \in \bbR^n,\norm{v_1}=\norm{v_2}=\dots = \norm{v_j}=1} \norm{\nabla^jg(\xvec)[\zvec_1,\dots,\zvec_j]}_F
    \end{equation}
    Here, for $g:\bbR^n\rightarrow \bbR$, $\nabla^r g(\xvec)[\zvec_1,\dots,\zvec_j]$ denotes the $r$\textsuperscript{th} derivative of $g$ along the directions $\zvec_1,\dots,\zvec_j$. An It\^{o} Diffusion $\Zvec(t)$ is the solution to
    \begin{equation}\label{eq:ito_diffusion}
            d\Zvec_{\xvec}(t) = b(\Zvec_{\xvec}(t))dt + \sigma(\Zvec_{\xvec}(t))d\Wvec(t), \; \Zvec_{\xvec}(t) = \xvec,
    \end{equation}
    where $\Wvec(t)$ is an $m$ dimensional Wiener process. We denote the transition semigroup of $\Zvec(t)$ by $(P_t)_{t\geq 0}$, and the infinitesimal generator of $\Zvec(t)$ by $\calA$. $\Zvec(t)$ has \emph{Wasserstein decay rate r(t)} for a nonincreasing integrable function $r:\bbR_{\geq 0}\rightarrow \bbR$ if for all $\zvec,\zvec'\in\bbR^n$ and all $t\geq 0$,
    \begin{equation}
        \Wass\left(\Zvec(t)|\Zvec(0)=\zvec,\Zvec(t)|\Zvec(0)=\zvec'\right) \leq r(t)\Wass(\zvec,\zvec').
    \end{equation}
     For clarity, we start by stating the following theorem, slightly adapted from \cite{gorham2019measuring}, which ensures the existence of constants $C_1$, $C_2$ and $C_3'$. We will shortly improve these constants in our special case.
    \begin{theorem}[Theorem 5 in \cite{gorham2019measuring}]\label{thm:gorham5}
        Let $h:\bbR^n\rightarrow\bbR$ be Lipschitz. Consider an $n$ dimensional It\^{o} diffusion given by \eqref{eq:ito_diffusion} with Wasserstein decay rate $\bar{r}(t)$ and invariant measure $P$. If $b$ and $\sigma$ have locally Lipschitz second and third derivatives and a right inverse $\sigma^{-1}(\xvec)$ for each $\xvec \in \bbR^n$, and $h \in C^3(\bbR^n)$ with bounded second and third derivatives, then
        \begin{equation}
            f_h  = \int_0^\infty \bbE_P\left[h(\Zvec)\right] - P_t h dt
        \end{equation}
        is twice continuously differentiable, satisfies the Stein equation
        \begin{equation}
            \calA f_h = h - \bbE_P\left[h(\Zvec)\right],
        \end{equation}
        and additionally,
        \begin{equation}
            M_1(f_h) \leq M_1(h) \int_0^\infty \bar{r}(t)dt,
        \end{equation}
        and
        \begin{equation}
            M_2(f_h) \leq M_1(h)(\beta_1 + \beta_2),
        \end{equation}
        where
        \begin{align}
            \beta_1 = \bar{r}(0)\left(2M_0(\sigma^{-1}) + \bar{r}(0)M_1(\sigma)M_0(\sigma^{-1}) + \bar{r}(0) \sqrt{\alpha} \right),\\
            \beta_2 = \bar{r}(0)\left( e^{\ga_2}M_0(\sigma^{-1}) + e^{\ga_2}M_1(\sigma)M_0(\sigma^{-1}) + \frac{2}{3}e^{\ga_4}\sqrt{\al} \right),
        \end{align}
        with $\ga_{\rho} = \rho M_1(b) + \frac{\rho^2-2\rho}{2}M_1(\sigma)^2 + \frac{\rho}{2}F_1(\sigma)^2$ and $\al = \frac{M_2(b)^2}{2M_1(b)+4M_1(\sigma)^2} + 2F_2(\sigma)^2$. Furthermore, for all $\zeta \in (0,1)$,
        \begin{equation}
            M_{3-\zeta}(f_h) \leq M_1(h) \frac{1}{K}\left(\frac{1}{\zeta} + \int_0^\infty \bar{r}(t) dt \right),
        \end{equation}
        for $K>0$ that depends only on $M_{1:3}(\sigma)$, $M_{1:3}(b)$, $M_0(\sigma^{-1})$, and $\bar{r}$.
    \end{theorem}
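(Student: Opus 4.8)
The plan is to realize $f_h$ as the Stein-equation solution built from the diffusion semigroup $(P_t)_{t\ge 0}$ of $\Zvec$, written $P_t h(\xvec) = \bbE[h(\Zvec_{\xvec}(t))]$, and then to read the derivative bounds off (i) the Wasserstein decay rate $\bar{r}$ and (ii) a Bismut--Elworthy--Li integration-by-parts formula combined with moment control of the variation processes of the flow $\xvec\mapsto\Zvec_{\xvec}(t)$; this specializes the cited result to the present It\^{o} diffusion. First I would check that $f_h$ is well defined and solves the Stein equation: since $\calA$ annihilates constants and $\tfrac{d}{dt}P_t h = \calA P_t h$, formally $\calA f_h = -\int_0^\infty \tfrac{d}{dt}P_t h\,dt = h - \bbE_P[h(\Zvec)]$, and the Wasserstein decay rate both forces $P_t h \to \bbE_P[h(\Zvec)]$ and makes $t\mapsto \bbE_P[h(\Zvec)] - P_t h$ integrable, so the integral converges and $\calA$ may be taken inside it. The local Lipschitzness of the derivatives of $b$ and $\sigma$ makes the flow smooth in $\xvec$, and dominated convergence then justifies differentiating under the integral, giving $f_h\in C^2$.

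For the first-order bound, $\grad f_h(\xvec) = -\int_0^\infty \grad P_t h(\xvec)\,dt$, so it suffices to bound $M_1(P_t h)$. By the Kantorovich--Rubinstein dual form of $\Wass$ and the decay rate, for any $\xvec,\xvec'$ we have $\abs{P_t h(\xvec)-P_t h(\xvec')} \le M_1(h)\,\Wass(\Zvec_{\xvec}(t),\Zvec_{\xvec'}(t)) \le M_1(h)\,\bar{r}(t)\,\norm{\xvec-\xvec'}$, i.e. $M_1(P_t h)\le M_1(h)\bar{r}(t)$. Integrating in $t$ gives $M_1(f_h)\le M_1(h)\int_0^\infty \bar{r}(t)\,dt$.

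The main obstacle is the second-order bound, which must depend on $M_1(h)$ alone and not $M_2(h)$. Since $M_2(f_h)\le\int_0^\infty M_2(P_t h)\,dt$, I need the Lipschitz constant of $\grad P_t h$; a naive second differentiation through the flow would introduce $\grad^2 h$, so instead I would use the Bismut--Elworthy--Li formula $\grad_{\vvec}P_s g(\xvec) = \tfrac{1}{s}\bbE\!\left[g(\Zvec_{\xvec}(s))\int_0^s (\sigma^{-1}(\Zvec_{\xvec}(u))J_u\vvec)^T d\Wvec(u)\right]$, where $J_u=\partial\Zvec_{\xvec}(u)/\partial\xvec$ is the first variation process, so that $h$ enters only through $g$. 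Splitting $P_t = P_{t/2}\circ P_{t/2}$ and taking $g=P_{t/2}h$ (Lipschitz with constant $M_1(h)\bar{r}(t/2)$ by the previous step), I would apply the formula to the outer half-step and differentiate once more in $\xvec$. The derivative of $g(\Zvec_{\xvec}(s))$ contributes only $\grad g\cdot J$, hence $M_1(h)$; the derivative of the martingale weight brings in the second variation process $\partial^2\Zvec_{\xvec}/\partial\xvec^2$ and derivatives of $\sigma^{-1}$, $b$, $\sigma$. Gr\"onwall estimates on the first and second variation processes yield the exponential factors $e^{\ga_2}, e^{\ga_4}$ with the stated rates $\ga_\rho$, the second-variation term yields the $\sqrt{\al}$ contribution with $\al$ as given, and the singular $1/s$ prefactor, the martingale scale $\sqrt{s}$, and the decay of $\bar{r}$ combine to stay integrable, separating into the $\bar{r}(0)$-weighted boundary part $\beta_1$ and the moment-growth part $\beta_2$.

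The H\"older estimate $M_{3-\zeta}(f_h)$ uses the same representation to control the $(1-\zeta)$-H\"older modulus of $\grad^2 P_t h$: I would interpolate between the uniform second-derivative bound, which is effective when $\norm{\xvec-\xvec'}$ is large, and a short-time third-derivative estimate relying on the local Lipschitzness of the third derivatives of $b$ and $\sigma$, effective when $\norm{\xvec-\xvec'}$ is small, then optimize the split scale to extract the $\tfrac{1}{\zeta}$ singularity, with $\bar{r}$ again supplying integrability. I expect the delicate points to be the Bismut--Elworthy--Li step that keeps the dependence on $h$ at first order, and the accompanying moment bounds on the variation processes and on $\sigma^{-1}$ against the time-singular weights near $t=0$.
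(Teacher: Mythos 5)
This statement is not proved in the paper: it is imported (``slightly adapted'') as Theorem 5 of \cite{gorham2019measuring}, and the paper only later re-derives its ingredients---derivative-flow moment bounds, the $M_2$ and $M_3$ semigroup estimates, and seminorm interpolation---specialized to the linear-drift, constant-diffusion case in Lemmas \ref{lem:linear_derivative_flow_bounds}--\ref{lem:M3_bound}. Your sketch follows essentially the same route as that reference: the semigroup representation of $f_h$, Wasserstein decay for the $M_1$ bound, a Bismut--Elworthy--Li integration by parts with first and second variation processes so that only $M_1(h)$ enters the $M_2$ bound, and interpolation for the fractional H\"older modulus. The one minor imprecision is the source of the $1/\zeta$ factor: it comes from the seminorm interpolation $M_{1-\zeta}(\grad^2 P_t h)\le 2^{\zeta} M_2(P_t h)^{\zeta} M_3(P_t h)^{1-\zeta}$ integrated in time against the $t^{-1+\zeta/2}$ singularity near $t=0$ (with $\bar r$ supplying integrability at infinity), rather than from optimizing a spatial split scale.
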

    
    We now show that $\Yvec(t)$ has a Wasserstein decay rate of $\norm*{e^{At}}$, meaning that
    \begin{equation}
        \Wass(\Yvec(t)|\Yvec(0) = \yvec, \Yvec(t) | \Yvec(0) = \yvec') \leq \norm*{e^{At}} \Wass(\yvec,\yvec'),
    \end{equation}
    where $\Yvec(t)|\Yvec(0) = \yvec$ denotes $\Yvec(t)$, the solution to \eqref{eq:LNA_FLUC} with $\vvec(t) = \vvec^*$ and $\Yvec'(0) = \yvec$. Observe that $\Yvec(t)|\Yvec(0) = \yvec$ and $\Yvec(t)|\Yvec(0) = \yvec'$ are Gaussian with the same covariance. Therefore, letting $\Yvec_{\yvec} = \Yvec(t)|\Yvec(0) = \yvec$, we have
    \begin{equation}\label{eq:Wasserstein_Gaussian_same_covariance}
        \Wass\left(\Yvec_{\yvec}(t), \Yvec_{\yvec'}(t)\right) = \norm*{\bbE\left[\Yvec_{\yvec}(t)\right] - \bbE\left[\Yvec_{\yvec'}(t)\right]},
    \end{equation}
    which can be shown by considering the lower bound 
    \begin{equation}
        \Wass\left(\Yvec_{\yvec}(t), \Yvec_{\yvec'}(t)\right) \geq \norm*{\bbE\left[\Yvec_{\yvec}(t)\right] - \bbE\left[\Yvec_{\yvec'}(t)\right]}
    \end{equation}
    from Jensen's inequality, and the upper bound
    \begin{equation}
        \Wass\left(\Yvec_{\yvec}(t), \Yvec_{\yvec'}(t)\right) \leq \norm*{\bbE\left[\Yvec_{\yvec}(t)\right] - \bbE\left[\Yvec_{\yvec'}(t)\right]}
    \end{equation}
    as evidenced by the coupling $\Yvec_{\yvec}(t) = \Yvec_{\yvec'}(t) + \bbE\left[\Yvec_{\yvec}(t)\right] - \bbE\left[\Yvec_{\yvec'}(t)\right]$. Theorem \ref{thm:gorham5} then gives us the constant $C_1$. To prove the claimed expressions for $C_2$ and $C_3'$, we must specialize the arguments of \cite{gorham2019measuring} to our diffusion, which has a linear and Hurwitz $b$ and a uniform $\sigma$. As in \cite{gorham2019measuring}, denote by $\Vvec_{\vvec}(t)$ the first \emph{directional derivative flow}, defined as the solution to
    \begin{equation}\label{eq:first_dir_flow}
        d\Vvec_{\vvec}(t) = \grad b(\Zvec_{\xvec}(t))\Vvec_{\vvec}(t) dt + \grad \sigma(\Zvec_{\xvec}(t))d\Wvec(t), \quad \Vvec_{\vvec}(0) = \vvec.
    \end{equation}
    Further, define $\Uvec_{\vvec,\vvec'}(t)$, the \emph{second directional derivative flow}, defined as the solution to
    \begin{equation}\label{eq:second_dir_flow}
        \begin{split}
        d\Uvec_{\vvec,\vvec'}(t) = \left(\grad b(\Zvec_{\xvec}(t))\Uvec_{\vvec,\vvec'} + \grad^2 b(\Zvec_{\xvec}(t))[\Vvec_{\vvec'}(t)]\Vvec_{\vvec}(t)\right)dt + \left(\grad \sigma(\Zvec_{\xvec}(t))\Uvec_{\vvec,\vvec'}(t)\right.\\ \left.+ \grad^2\sigma(\Zvec_{\xvec}(t))[\Vvec_{\vvec'}(t)]\Vvec_{\vvec}(t)\right)d\Wvec(t),\; \Uvec_{\vvec,\vvec'}(0) = 0.
        \end{split}
    \end{equation}
    In our proof the following lemma replaces the derivative flow bounds given in Lemma 16 of \cite{gorham2019measuring}.
    \begin{lemma}\label{lem:linear_derivative_flow_bounds}
        Consider an It\^{o} diffusion given by \eqref{eq:ito_diffusion} with transition semigroup $(P_t)_{t\geq 0}$ and assume that $b(\zvec) = A\zvec$ where $A \in \bbR^{n\times n}$ satisfies
        \begin{equation}
            HA+A^TH \leq -2\phi H,
        \end{equation}
        For a matrix $H >0$ and a real number $\phi>0$. Assume also that $\sigma(\zvec) = \sigma$ is a constant. Then, for all $t \geq 0$, $\rho > 0$, and $\vvec \in \bbR^n$,
        \begin{equation}
            \bbE\left[\norm{\Vvec_{\vvec}(t)}^\rho\right] \leq \mathrm{cond}(H)^{\rho/2}e^{-\rho\phi t}\norm{\vvec}^{\rho}
        \end{equation}
    and for all $t \geq 0$, $\Uvec_{\vvec,\vvec'}(t) = 0$.
    \end{lemma}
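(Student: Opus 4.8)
The plan is to exploit the very special structure of this diffusion, in which $b(\zvec) = A\zvec$ is linear and $\sigma(\zvec) \equiv \sigma$ is constant, to show that both directional derivative flows collapse to deterministic linear ODEs. First I would substitute $\grad b(\Zvec_{\xvec}(t)) = A$ and $\grad \sigma(\Zvec_{\xvec}(t)) = 0$ into the defining SDE \eqref{eq:first_dir_flow} for $\Vvec_{\vvec}(t)$. The diffusion term vanishes identically, leaving the deterministic linear ODE $\tfrac{d}{dt}\Vvec_{\vvec}(t) = A\Vvec_{\vvec}(t)$ with $\Vvec_{\vvec}(0) = \vvec$, whose unique solution is $\Vvec_{\vvec}(t) = e^{At}\vvec$. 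In particular $\Vvec_{\vvec}(t)$ is deterministic, so the expectation in the claimed moment bound is trivial and it suffices to control $\norm{e^{At}\vvec}^\rho$.

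Next, for the second directional derivative flow I would substitute into \eqref{eq:second_dir_flow}. Because $b$ is affine we have $\grad^2 b \equiv 0$, and because $\sigma$ is constant we have $\grad\sigma \equiv 0$ and $\grad^2\sigma \equiv 0$; the only surviving term is the drift $A\Uvec_{\vvec,\vvec'}(t)\,dt$. Thus $\Uvec_{\vvec,\vvec'}(t)$ solves the homogeneous linear ODE $\tfrac{d}{dt}\Uvec_{\vvec,\vvec'}(t) = A\Uvec_{\vvec,\vvec'}(t)$ with initial condition $\Uvec_{\vvec,\vvec'}(0) = 0$, which forces $\Uvec_{\vvec,\vvec'}(t) = 0$ for all $t \geq 0$ by uniqueness, establishing the second claim.

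For the moment bound it remains to prove $\norm{e^{At}\vvec} \leq \cond(H)^{1/2} e^{-\phi t}\norm{\vvec}$ and then raise to the appropriate power. I would use the quadratic Lyapunov function $V(t) = \Vvec_{\vvec}(t)^T H \Vvec_{\vvec}(t)$. Differentiating along the flow and invoking the hypothesis $HA + A^TH \leq -2\phi H$ gives
\begin{equation}
\frac{d}{dt}V(t) = \Vvec_{\vvec}(t)^T(A^TH + HA)\Vvec_{\vvec}(t) \leq -2\phi\, \Vvec_{\vvec}(t)^T H \Vvec_{\vvec}(t) = -2\phi V(t),
\end{equation}
so that $V(t) \leq e^{-2\phi t}V(0)$ by Gr\"onwall. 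Sandwiching $V$ between $\lambda_{\min}(H)\norm{\cdot}^2$ and $\lambda_{\max}(H)\norm{\cdot}^2$, and recalling that $\cond(H) = \lambda_{\max}(H)/\lambda_{\min}(H)$ for $H > 0$, yields $\norm{\Vvec_{\vvec}(t)}^2 \leq \cond(H)\, e^{-2\phi t}\norm{\vvec}^2$. Raising both sides to the power $\rho/2$ and using that $\bbE\left[\norm{\Vvec_{\vvec}(t)}^\rho\right] = \norm{\Vvec_{\vvec}(t)}^\rho$ by determinism completes the argument.

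The computation is essentially routine, and I expect no serious obstacle once one observes that linearity of $b$ and constancy of $\sigma$ annihilate every stochastic term and every second-order term. The only point requiring mild care is the passage from the matrix Lyapunov inequality to a bound in the Euclidean norm, where $\cond(H)$ enters precisely as the distortion between the $H$-weighted norm in which the decay is clean and the standard $2$-norm in which the statement is phrased.
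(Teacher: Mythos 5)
Your proof is correct and follows essentially the same route as the paper: both arguments observe that linearity of $b$ and constancy of $\sigma$ reduce the derivative flows to deterministic linear ODEs (so $\Vvec_{\vvec}(t)=e^{At}\vvec$ and $\Uvec_{\vvec,\vvec'}\equiv 0$), and both extract the decay $\norm{e^{At}\vvec}\leq \cond(H)^{1/2}e^{-\phi t}\norm{\vvec}$ from the Lyapunov inequality. The only cosmetic difference is that you differentiate the quadratic form $\Vvec^TH\Vvec$ and apply Gr\"onwall, whereas the paper conjugates by $\sqrt{H}$ and bounds $\norm{e^{\sqrt{H}A\sqrt{H}^{-1}t}}\leq e^{-\phi t}$ directly; these are the same computation in different coordinates and yield identical constants.
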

    \begin{proof}
        Under the assumptions of the Lemma, \eqref{eq:first_dir_flow} becomes
        \begin{equation}
            d\Vvec_{\vvec}(t) = \frac{\partial F(\vvec^*)}{\partial \vvec}\Vvec_{\vvec}(t) dt , \quad \Vvec_{\vvec}(0) = \vvec,
        \end{equation}
    and thus $\Vvec_{\vvec}(t) = e^{At}\vvec$. Therefore, $\bbE\left[\Vvec_{\vvec}(t)^\rho\right] = \norm{e^{At}\vvec}^\rho$. To bound this quantity, observe that the condition
        \begin{equation}
            HA+A^TH \leq -2\phi H,
        \end{equation}
        implies that
        \begin{equation}
            \sqrt{H}A\sqrt{H}^{-1}+\left(\sqrt{H}A\sqrt{H}^{-1}\right)^T \leq -2\phi I,
        \end{equation}
        where $\sqrt{H}$ is the principle square root of $H$. It then follows from standard arguments from linear systems theory \cite{desoer1972measure,lohmiller2000nonlinear} that for all $t\geq 0$,
        \begin{equation}
             \norm{e^{\sqrt{H}A\sqrt{H}^{-1} t}}  \leq e^{-\phi t}.
        \end{equation}
        We then have that for all $t\geq 0$ and $\vvec\in\bbR^n$,
        \begin{align}
             \norm{e^{A t}\vvec} & = \norm{\sqrt{H}^{-1}e^{\sqrt{H}A\sqrt{H}^{-1} t}\sqrt{H}\vvec} \\
            & \leq \norm{\sqrt{H}^{-1}}\norm{e^{\sqrt{H}A\sqrt{H}^{-1} t}}\norm{\sqrt{H}}\norm{\vvec}\\
            & \leq \sqrt{\mathrm{cond}(H)}e^{-\phi t}\norm{\vvec}
        \end{align}
        which establishes the desired bound on $\norm{e^{At}\vvec}$. To complete the proof, observe that under the assumptions of the lemma, \eqref{eq:second_dir_flow} becomes
    \begin{equation}
        d\Uvec_{\vvec,\vvec'}(t) = \grad b(\Zvec_{\xvec}(t))\Uvec_{\vvec,\vvec'},\; \Uvec_{\vvec,\vvec'}(0) = 0,
    \end{equation}
    and thus $\Uvec_{\vvec,\vvec'}(t) = 0$.
    \end{proof}
    With this specialization, we now trace the construction of $C_2$ and $C_3(\zeta)$ through the same argument used in proof of Theorem 5 in \cite{gorham2019measuring} to obtain the claimed expressions. Using Lemma \ref{lem:linear_derivative_flow_bounds} we make the specializations described in Lemmas \ref{lem:M2_bound} and \ref{lem:M3_bound}, which we state and prove before proceeding. In our case, the bound on $M_2(f)$ given by Lemma 15 in \cite{gorham2019measuring} can be replaced by the following lemma.
    \begin{lemma}\label{lem:M2_bound}
        Consider an It\^{o} diffusion given by \eqref{eq:ito_diffusion} with transition semigroup $(P_t)_{t\geq 0}$ and assume that $b(\zvec) = A\zvec$ where $A \in \bbR^{n\times n}$ satisfies
        \begin{equation}
            HA+A^TH \leq -2\phi H,
        \end{equation}
        For a matrix $H >0$ and a real number $\phi>0$. Assume also that $\sigma(\zvec) = \sigma$ is a constant with right inverse $\sigma^{-1}$. Then, for all $t>0$ and $f\in C^2$ with bounded first and second derivatives, $P_tf$ satisfies
        \begin{equation}
            M_2(P_tf) \leq \inf_{t_0 \in (0,t]} M_1(f) r(t-t_0) M_0(\sigma^{-1})e^{-\phi t_0}\mathrm{cond}(H)\frac{1}{\sqrt{t_0}},
        \end{equation}
        where $r(t) = \norm{e^{At}}$.
    \end{lemma}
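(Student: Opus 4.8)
The plan is to follow the proof of Lemma 15 in \cite{gorham2019measuring}, simplified by the facts established in Lemma \ref{lem:linear_derivative_flow_bounds}: in our setting the first directional derivative flow is the deterministic $\Vvec_{\vvec}(s) = e^{As}\vvec$, and the second directional derivative flow vanishes, $\Uvec_{\vvec,\vvec'} \equiv 0$. First I would use the semigroup property to split $P_t = P_{t_0}P_{t-t_0}$ for an arbitrary $t_0 \in (0,t]$ and set $g = P_{t-t_0}f$. Since the first derivative flow is deterministic, $\grad(P_s f)(\xvec)[\vvec] = \bbE[\grad f(\Zvec_{\xvec}(s))[e^{As}\vvec]]$, so $\abs{\grad(P_s f)(\xvec)[\vvec]} \le M_1(f)\norm{e^{As}\vvec} \le M_1(f) r(s)\norm{\vvec}$, which gives $M_1(g) \le M_1(f) r(t-t_0)$. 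The point of the split is that the regularization needed to produce a second derivative is applied only over the short time $t_0$, while the contraction over the remaining time $t-t_0$ is absorbed into $M_1(g)$.

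The key step is a Bismut--Elworthy--Li integration by parts that represents $\grad^2(P_{t_0}g)$ using only one derivative of $g$. Differentiating $P_{t_0}g$ in the direction $\vvec'$ via the Bismut formula expresses $\grad(P_{t_0}g)(\xvec)[\vvec']$ as $\tfrac{1}{t_0}\bbE[g(\Zvec_{\xvec}(t_0))\,W_{\vvec'}]$, where the Malliavin weight is $W_{\vvec'} = \int_0^{t_0}\langle \sigma^{-1}\Vvec_{\vvec'}(s), d\Wvec(s)\rangle = \int_0^{t_0}\langle \sigma^{-1}e^{As}\vvec', d\Wvec(s)\rangle$. Because $\sigma$ is constant and $\Vvec_{\vvec'}(s) = e^{As}\vvec'$ is deterministic, $W_{\vvec'}$ does not depend on the initial condition $\xvec$; differentiating once more in the direction $\vvec$ therefore acts only on $g(\Zvec_{\xvec}(t_0))$ through the first derivative flow, yielding the clean representation
\begin{equation}
    \grad^2(P_{t_0}g)(\xvec)[\vvec,\vvec'] = \frac{1}{t_0}\bbE\left[\grad g(\Zvec_{\xvec}(t_0))[e^{At_0}\vvec]\,\int_0^{t_0}\langle \sigma^{-1}e^{As}\vvec', d\Wvec(s)\rangle\right].
\end{equation}
This is exactly where the vanishing of $\Uvec_{\vvec,\vvec'}$ and the $\xvec$-independence of the weight supplied by Lemma \ref{lem:linear_derivative_flow_bounds} eliminate the extra terms present in the general bound.

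It then remains to estimate. I would bound $\abs{\grad g(\Zvec_{\xvec}(t_0))[e^{At_0}\vvec]} \le M_1(g)\norm{e^{At_0}\vvec}$ pointwise, pull the deterministic factor out, and apply Cauchy--Schwarz followed by the It\^o isometry to the weight, giving $\bbE[\abs{W_{\vvec'}}] \le (\int_0^{t_0}\norm{\sigma^{-1}e^{As}\vvec'}^2 ds)^{1/2} \le M_0(\sigma^{-1})(\int_0^{t_0}\norm{e^{As}\vvec'}^2 ds)^{1/2}$. The flow bound of Lemma \ref{lem:linear_derivative_flow_bounds} supplies $\norm{e^{As}\vvec'}^2 \le \cond(H)e^{-2\phi s}\norm{\vvec'}^2$ and $\norm{e^{At_0}\vvec} \le \sqrt{\cond(H)}e^{-\phi t_0}\norm{\vvec}$; bounding $\int_0^{t_0}e^{-2\phi s}ds \le t_0$ then collapses the estimate, since the $\tfrac{1}{t_0}$ from the Bismut weight and the $\sqrt{t_0}$ from the isometry combine into $\tfrac{1}{\sqrt{t_0}}$, while the two factors of $\sqrt{\cond(H)}$ combine into $\cond(H)$. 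Taking the supremum over unit $\vvec,\vvec'$ and over $\xvec$ bounds $M_2(P_{t_0}g) = M_2(P_t f)$, and since $t_0 \in (0,t]$ was arbitrary we pass to the infimum, producing the stated inequality.

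I expect the main obstacle to be justifying the Bismut--Elworthy--Li representation together with the differentiation under the expectation, in particular confirming that the weight is genuinely independent of $\xvec$ and that the integrability required to differentiate twice under the expectation holds for $g = P_{t-t_0}f$ when $f$ is only $C^2$ with bounded derivatives. This is handled by importing the framework of \cite{gorham2019measuring}, with the smoothing by $P_{t-t_0}$ ensuring $g$ is sufficiently regular; the specialization to linear drift and constant diffusion is precisely what forces every error term to either vanish or reduce to the single deterministic flow factor, so the remaining work is the bookkeeping above rather than any genuinely new analytic estimate.
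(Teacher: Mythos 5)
Your proposal is correct and takes essentially the same route as the paper, whose own proof simply defers to Lemma 15 of \cite{gorham2019measuring} with the specialized derivative-flow facts of Lemma \ref{lem:linear_derivative_flow_bounds} (deterministic first flow $e^{As}\vvec$, vanishing second flow) substituted for their Lemma 16. Your reconstruction via the splitting $P_t = P_{t_0}P_{t-t_0}$, the Bismut--Elworthy--Li weight, and the Cauchy--Schwarz/It\^{o}-isometry bookkeeping reproduces the stated constant $M_1(f)\,r(t-t_0)\,M_0(\sigma^{-1})\,e^{-\phi t_0}\,\mathrm{cond}(H)\,t_0^{-1/2}$ exactly.
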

    \begin{proof}
        First, observe that under the assumptions of the lemma, the It\^{o} diffusion has a Wasserstein decay rate of $r(t) = \norm{e^{At}}$. Thus, we can invoke Lemma 15 directly to obtain that $P_tf$ is twice continuously differentiable. The remainder of the proof follows that of Lemma 15 in \cite{gorham2019measuring}, with the substitution of our form of $b$ and $\sigma$, and using Lemma \ref{lem:linear_derivative_flow_bounds} instead of Lemma 16 in \cite{gorham2019measuring} for bounds on the derivative flows.
    \end{proof}
    In our special case, the bound on $M_3(P_tf)$ from Lemma 20 in \cite{gorham2019measuring} can be replaced with the following lemma.
    \begin{lemma}\label{lem:M3_bound}
        Consider an It\^{o} diffusion given by \eqref{eq:ito_diffusion} with transition semigroup $(P_t)_{t\geq 0}$ and assume that $b(\zvec) = A\zvec$ where $A \in \bbR^{n\times n}$ satisfies
        \begin{equation}
            HA+A^TH \leq -2\phi H,
        \end{equation}
        For a matrix $H >0$ and a real number $\phi>0$. Assume also that $\sigma(\zvec) = \sigma$ is a constant with right inverse $\sigma^{-1}$. Then, for all $t>0$ and $f\in C^3$ with bounded second and third derivatives, $P_tf$ satisfies
        \begin{equation}
            M_3(P_t f) \leq \inf_{t_0\in(0,t]} \frac{2}{t_0}M_1(f)r(t-t_0)M_0^2(\sigma^{-1})\mathrm{cond}(H)^{\frac{5}{2}}e^{-\frac{3}{2}\phi t_0},
        \end{equation}
        where $r(t) = \norm{e^{At}}$.
    \end{lemma}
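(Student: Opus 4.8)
The plan is to follow the proof of Lemma 20 in \cite{gorham2019measuring} verbatim at the structural level, substituting our linear drift $b(\zvec) = A\zvec$ and constant diffusion $\sigma(\zvec) = \sigma$ and, crucially, replacing every appeal to the generic derivative-flow bounds of Lemma 16 in \cite{gorham2019measuring} by the sharp flow bounds of Lemma \ref{lem:linear_derivative_flow_bounds}. As in the proof of Lemma \ref{lem:M2_bound}, the first observation is that the linear Hurwitz drift endows the diffusion with Wasserstein decay rate $r(t) = \norm{e^{At}}$, so that the hypotheses of Theorem \ref{thm:gorham5} (and of Lemma 20 in \cite{gorham2019measuring}) are met and $P_t f$ is three-times continuously differentiable with the integration-by-parts representation of its third derivative available. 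The whole argument then rests on the semigroup splitting $P_t f = P_{t - t_0}(P_{t_0} f)$ for an arbitrary $t_0 \in (0,t]$, which separates a \emph{smoothing} window $[0,t_0]$, over which regularity is purchased from $f$, from a \emph{contraction} window of length $t - t_0$, over which the derivative flows decay.

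The key steps, in order, are as follows. First, I would write the third directional derivative $\grad^3 (P_t f)(\xvec)[\vvec_1,\vvec_2,\vvec_3]$ using the derivative-flow representation of Gorham and Mackey. Here the special structure collapses the representation: since $b$ is linear we have $\grad^2 b = 0$, and since $\sigma$ is constant we have $\grad \sigma = \grad^2 \sigma = 0$, so by Lemma \ref{lem:linear_derivative_flow_bounds} the second directional derivative flow vanishes, $\Uvec_{\vvec,\vvec'}(t) = 0$, and the first flow is the deterministic map $\Vvec_{\vvec}(t) = e^{At}\vvec$. Consequently all terms of the generic expansion that involve $\Uvec$ or higher flows drop out, leaving only the contributions in which copies of the deterministic flow $e^{As}$ propagate the three directions. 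Second, over the smoothing window I would apply the Bismut--Elworthy--Li integration-by-parts formula (exactly as in Lemma 20 of \cite{gorham2019measuring}) to transfer two of the three spatial derivatives off the test function, at the cost of two Malliavin weight factors, each a stochastic integral of the form $\int_0^{t_0} a(s)\langle \sigma^{-1} \Vvec(s), d\Wvec(s)\rangle$; this leaves a single gradient $\grad f$ inside the expectation, which is controlled by $M_1(f)$.

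Third, I would bound the two weight factors. Because the flow is deterministic here, the It\^{o} isometry reduces the $L^2$ norm of each weight to an integral of $\norm{\sigma^{-1} e^{As}\vvec}^2$, which is controlled by $M_0(\sigma^{-1})^2$ together with the contraction estimate $\norm{e^{As}} \leq \sqrt{\cond(H)}\, e^{-\phi s}$ from the proof of Lemma \ref{lem:linear_derivative_flow_bounds}; the deterministic weight $a(s)$ supplies the $1/\sqrt{t_0}$ singularity from each of the two integration-by-parts steps. Combining the pieces with the Cauchy--Schwarz inequality then assembles the surviving gradient of $f$ (giving $M_1(f)$), the propagation of the remaining direction over the contraction window (giving $r(t-t_0)$), the two weights (giving $M_0^2(\sigma^{-1})$, the factor $2/t_0$, and the powers of $\cond(H)$), and the exponential decay accumulated over $[0,t_0]$ (giving $e^{-\frac{3}{2}\phi t_0}$); taking the infimum over $t_0 \in (0,t]$ yields the stated bound.

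I expect the main obstacle to be the precise bookkeeping of the Malliavin weights in the third step rather than any conceptual difficulty: one must verify that the two integration-by-parts steps combine to give exactly the time singularity $2/t_0$ (rather than a worse power of $t_0^{-1}$), that the decay accumulated in the weights and the surviving flow is exactly $e^{-\frac{3}{2}\phi t_0}$ after inserting the weight $a(s)$, and that the condition-number factors from the applications of $\norm{e^{As}} \leq \sqrt{\cond(H)}\,e^{-\phi s}$ and from the change of coordinates by $\sqrt{H}$ accumulate to exactly $\cond(H)^{5/2}$. All of these are routine once the flows have been specialized via Lemma \ref{lem:linear_derivative_flow_bounds}, since that lemma removes every stochastic or nonlinear term that makes the generic Gorham--Mackey estimate delicate.
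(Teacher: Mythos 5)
Your proposal is correct and follows essentially the same route as the paper: the paper's proof is precisely to rerun Lemma 20 of \cite{gorham2019measuring} with the Wasserstein decay rate $r(t)=\norm{e^{At}}$ and the specialized flow bounds of Lemma \ref{lem:linear_derivative_flow_bounds} (which kill $\Uvec_{\vvec,\vvec'}$ and reduce $\Vvec_{\vvec}$ to $e^{At}\vvec$). The only cosmetic difference is that the paper routes the argument through the already-established $M_2$ bound of Lemma \ref{lem:M2_bound} rather than unrolling both integration-by-parts steps explicitly as you do, but the bookkeeping you flag (the $2/t_0$ singularity, $e^{-\frac{3}{2}\phi t_0}$, and $\cond(H)^{5/2}$) works out identically either way.
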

    \begin{proof}
        First, observe that under the assumptions of the lemma, the It\^{o} diffusion has a Wasserstein decay rate of $r(t) = \norm{e^{At}}$. Then, the proof follows that of Lemma 20 in \cite{gorham2019measuring}, but with the substitution of the bound on $M_2(P_tf)$ given by Lemma \ref{lem:M2_bound}, and the derivative flow bounds of Lemma 16 in \cite{gorham2019measuring} replaced by Lemma \ref{lem:linear_derivative_flow_bounds}.
    \end{proof}
    We now finish our proof of Lemma \ref{lem:derivative_bounds}. Observe that in this case we have $A = \frac{\partial F}{\partial \xvec}(\xvec^*)$ and $\sigma = \Sigma$, where $\Sigma$ has a right inverse $\Sigma^{-1}$. Let $0 < \zeta < 1$. We have for $f_h$ solving $\calA f_h = h - \bbE_P\left[h(\Zvec)\right]$ that by the Dominated Convergence Theorem and Jensen's inequality that
    \begin{equation}
        M_{\zeta-1}(\grad ^2 f_h) = M_{\zeta-1}\left(-\int_0^\infty \grad^2 P_th dt\right) \leq \int_0^\infty M_{1-\zeta}\left(\grad^2 P_t h\right)dt.
    \end{equation}
    Splitting the last integral into the interval $[0,1]$ and $[1,\infty]$ we have
    \begin{equation}
        M_{\zeta-1}(\grad ^2 f_h) \leq T_1 + T_2,
    \end{equation}
    where
    \begin{equation}
        T_1 = \int_0^1 M_{1-\zeta}\left(\grad^2 P_t h\right)dt,
    \end{equation}
    and
    \begin{equation}
        T_2 = \int_1^\infty M_{1-\zeta}\left(\grad^2 P_t h\right)dt.
    \end{equation}
    Applying the seminorm interpolation result of Lemma 19 in \cite{gorham2019measuring}, we have
    \begin{align}
        M_{1-\zeta}\left(\grad^2 P_t h\right) & \leq   2^\zeta \left( M_0(\grad^2P_t h) \right)^\zeta \left( M_1(\grad^2P_t h) \right)^{1-\zeta}\\
        & = 2^\zeta \left( M_2(P_t h) \right)^\zeta \left( M_3(P_t h) \right)^{1-\zeta}.
    \end{align}
    Under the assumptions of the lemma, we can apply Lemmas \ref{lem:M2_bound} and \ref{lem:M3_bound}. For $t \leq 1$ we set $t_0 = t$ to obtain a bound on $T_1$ as follows.
    \begin{align}
        T_1 & = \int_0^1 M_{1-\zeta}\left(\grad^2 P_t h\right)dt\\
        & \leq \int_0^1 2^\zeta \left( M_2(P_t h) \right)^\zeta \left( M_3(P_t h) \right)^{1-\zeta} dt\\
        & \leq \int_0^1 2^\zeta \left( r(0)M_0(\Sigma^{-1})e^{-\phi t} \mathrm{cond}(H) \frac{1}{\sqrt{t}} \right)^\zeta \left( \frac{2}{t}r(0)M_0^2(\Sigma^{-1})\mathrm{cond}(H)^{\frac{5}{2}}e^{-\frac{3}{2}\phi t} \right)^{1-\zeta} dt\\
        & = 2 r(0)M_0^{2-\zeta}(\Sigma^{-1})\mathrm{cond}(H)^{\frac{5}{2}-\frac{3}{2}\zeta} \int_0^1 e^{-(\frac{3}{2}-\frac{1}{2}\zeta)\phi t} t^{-1 + \frac{1}{2}\zeta} dt\\
        & \leq 2 r(0)M_0^{2-\zeta}(\Sigma^{-1})\mathrm{cond}(H)^{\frac{5}{2}-\frac{3}{2}\zeta}\frac{2}{\zeta}\\
        & = \frac{4}{\zeta} \max\left\{ \norm{\Sigma^{-1}}, \norm{\Sigma^{-1}}^2 \right\} \mathrm{cond}(H)^{\frac{5}{2}},
    \end{align}
    where we have used the observations that $r(0) = 1$ and $\mathrm{cond}(H) \geq 1$. turning to $T_2$, for $t\geq 1$ we set $t_0 = 1$ in Lemmas \ref{lem:M2_bound} and \ref{lem:M3_bound}, obtaining
    \begin{align}
        T_2 & = \int_1^\infty M_{1-\zeta}\left(\grad^2 P_t h\right)dt\\
        & \leq \int_1^\infty 2^\zeta \left( M_2(P_t h) \right)^\zeta \left( M_3(P_t h) \right)^{1-\zeta} dt\\
        & \leq \int_1^\infty \left( 2 r(t-1)M_0(\Sigma^{-1})e^{-\phi} \mathrm{cond}(H) \right)^\zeta \left( 2r(t-1)M_0^2(\Sigma^{-1})\mathrm{cond}(H)^{\frac{5}{2}}e^{-\frac{3}{2}\phi} \right)^{1-\zeta} dt\\
        & = 2 M_0^{2-\zeta}(\Sigma^{-1})\mathrm{cond}(H)^{\frac{5}{2}-\frac{3}{2}\zeta} e^{-(\frac{3}{2}-\frac{1}{2}\zeta)\phi} \int_1^\infty r(t-1) dt\\
        & \leq 2 M_0^{2-\zeta}(\Sigma^{-1})\mathrm{cond}(H)^{\frac{5}{2}-\frac{3}{2}\zeta} e^{-\phi} \int_0^\infty r(t) dt\\
        & = 2 \max\left\{ \norm{\Sigma^{-1}}, \norm{\Sigma^{-1}}^2 \right\} \mathrm{cond}(H)^{\frac{5}{2}} e^{-\phi} \int_0^\infty r(t) dt.
    \end{align}
    Finally, by the definition of $M_{1-\zeta}$ we have 
    \begin{equation}
        \forall \xvec,\xvec' \in \bbR^n, \; \norm{\grad^2f_h(\xvec)-\grad^2f_h(\xvec')} \leq (T_1+T_2)\norm{\xvec-\xvec'}^{1-\zeta},
    \end{equation}
    which proves the desired bound when combined with the bounds on $T_1$ and $T_2$ derived above.

\subsection{Proof of Lemma \ref{lem:wasserstein_C3}}\label{sec:C3_proof}
\begin{proof}
    We have that 
    \begin{equation}
        \sup_{h \in \Lip(1)} \abs*{\bbE_{\Xvec\sim \nu}[h(\Xvec)] - \bbE_{\Yvec\sim\rho}[h(\Yvec)]} \geq \sup_{h \in \calH}\abs*{\bbE_{\Xvec\sim \nu}[h(\Xvec)] - \bbE_{\Yvec\sim\rho}[h(\Yvec)]}.
    \end{equation}
    We will show that 
    \begin{equation}
        \sup_{h \in \Lip(1)} \abs*{\bbE_{\Xvec\sim \nu}[h(\Xvec)] - \bbE_{\Yvec\sim\rho}[h(\Yvec)]} \leq \sup_{h \in \calH}\abs*{\bbE_{\Xvec\sim \nu}[h(\Xvec)] - \bbE_{\Yvec\sim\rho}[h(\Yvec)]}.
    \end{equation}
    Our proof is based on mollifying $h\in\Lip(1)$. Let $\varphi_\ep(\xvec)$ be the bump function defined by
    \begin{equation}
        \varphi_\ep (\xvec) = \left\{\begin{array}{cc}
             \frac{1}{\ep^n}Ce^{\frac{-1}{1 - \norm{\xvec/\ep}^2}}, & \norm{\xvec/\ep}\leq 1, \\
             0, & \text{else},
        \end{array}\right.
    \end{equation}
    Where $C = \left(\int_{\bbR^n} e^{\frac{-1}{1 - \norm{\xvec}^2}}d\xvec\right)^{-1}$. Suppose for contradiction that 
    \begin{equation}
        \sup_{h \in \Lip(1)} \abs*{\bbE_{\Xvec\sim \nu}[h(\Xvec)] - \bbE_{\Yvec\sim\rho}[h(\Yvec)]} > \sup_{h \in \calH}\abs*{\bbE_{\Xvec\sim \nu}[h(\Xvec)] - \bbE_{\Yvec\sim\rho}[h(\Yvec)]},
    \end{equation}
    then, there exists $h^*\in \Lip(1)$ such that $h^* \notin \calH$, and
    \begin{equation}\label{eq:Lip_C3_gap}
        \abs*{\bbE_{\Xvec\sim \nu}[h^*(\Xvec)] - \bbE_{\Yvec\sim\rho}[h^*(\Yvec)]} - \sup_{h \in \calH}\abs*{\bbE_{\Xvec\sim \nu}[h(\Xvec)] - \bbE_{\Yvec\sim\rho}[h(\Yvec)]} =: a > 0.
    \end{equation}
    Let $h^*_{a/4} = h^* \star \varphi_{a/4}$. We will show that $h^*_{a/4} = h^* \star \varphi_{a/4} \in \calH$, and $\norm{h^* - h^*_{a/4}}_\infty \leq a/4$. Let $\ep = a/4$. Recall that 
    \begin{equation}
        h^*_{\ep}(\xvec) = \left(\varphi_\ep\star h^*\right)(\xvec) = \int_{\bbR^n} \varphi_\ep(\xvec-\yvec)h(\yvec)\mu(d\yvec),
    \end{equation}
    where $\mu$ is the Lebesgue measure. Observe that
    \begin{align}
        \abs*{h^*_{\ep}(\xvec) - h^*_{\ep}(\xvec')} & = \abs*{\int_{\bbR^n} \varphi_\ep(\yvec)\left(h(\xvec-\yvec)-h(\xvec'-\yvec)\right)\mu(d\yvec)},\\
        & \leq \int_{\bbR^n} \abs*{\varphi_\ep(\yvec)\left(h(\xvec-\yvec)-h(\xvec'-\yvec)\right)}\mu(d\yvec),\\
        & \leq \int_{\bbR^n} \varphi_\ep(\yvec)\abs*{h(\xvec-\yvec)-h(\xvec'-\yvec)}\mu(d\yvec),\\
        & \leq \norm{\xvec-\xvec'} \int_{\bbR^n} \varphi_\ep(\yvec)\mu(d\yvec),\\
        & = \norm{\xvec-\xvec'},
    \end{align}
    and thus $h^*_\ep \in \Lip(1)$. Next, let $\frac{\partial^{\alvec}}{\partial \xvec^{\alvec}}$ with $\alvec \in \bbZ_{\geq 0}^n$ denote $\prod_{j=1}^n \frac{\partial^{\al_j}}{\partial x_j^{\al_j}}$, and observe that by the Dominated Convergence Theorem,
    \begin{equation}
        \frac{\partial^{\alvec}h^*_\ep(\xvec)}{\partial \xvec^{\alvec}}  = \int_{\bbR^n} \frac{\partial^{\alvec} \varphi_\ep(\xvec-\yvec)}{\partial \xvec^{\alvec}}h(\yvec) \mu(d\yvec),
    \end{equation}
    showing that $h^*_\ep(\xvec) \in C^\infty(\bbR^n)$, and that furthermore that $\norm*{\frac{\partial^{\alvec}h^*_\ep(\xvec)}{\partial \xvec^{\alvec}}}$ is bounded. We have shown that $h^*_\ep \in \calH$. Now, consider
    \begin{align}
        \abs*{h^*(\xvec) - h^*_{a/4}(\xvec)} & = \abs*{\int_{\bbR^n}\varphi_{a/4}(\yvec)h^*(\xvec) \mu(d\yvec) - \int_{\bbR^n} \varphi_{a/4}(\yvec)h^*(\xvec-\yvec)\mu(d\yvec)}, \\
        & \leq \int_{\bbR^n} \abs*{\varphi_{a/4}(\yvec)\left(h^*(\xvec)-h^*(\xvec-\yvec)\right)}\mu(d\yvec),\\
        & \leq \int_{\norm{\yvec}\leq a/4} \varphi_{a/4}(\yvec)\norm{\yvec}\mu(d\yvec), \\
        & \leq \frac{a}{4}\int_{\norm{\yvec}\leq a/4} \varphi_{a/4}(\yvec)\mu(d\yvec), \\
        & = \frac{a}{4},
    \end{align}
    where we have used the fact that $\varphi_{a/4}(\yvec) = 0$ for all $\norm{\yvec} > \frac{a}{4}$, and the fact that $h^*\in\Lip(1)$. We have that
    \begin{align}\label{eq:Lip_mollified_gap}
        & \abs*{\bbE_{\Xvec\sim \nu}[h^*(\Xvec)]\hspace{-1pt} -\hspace{-1pt} \bbE_{\Yvec\sim\rho}[h^*(\Yvec)]} \hspace{-1pt}-\hspace{-1pt} \abs*{\bbE_{\Xvec\sim \nu}[h^*_{a/4}(\Xvec)] \hspace{-1pt}- \hspace{-1pt}\bbE_{\Yvec\sim\rho}[h^*_{a/4}(\Yvec)]} \\
        \begin{split}
        & = \left|\bbE_{\Xvec\sim \nu}[h^*(\Xvec)]\hspace{-1pt} - \hspace{-1pt}\bbE_{\Yvec\sim\rho}[h^*(\Yvec)]\hspace{-1pt} -\hspace{-1pt} \bbE_{\Xvec\sim \nu}[h^*_{a/4}(\Xvec)]\hspace{-1pt} + \hspace{-1pt}\bbE_{\Yvec\sim\rho}[h^*_{a/4}(\Yvec)]\hspace{-1pt} + \hspace{-1pt}\bbE_{\Xvec\sim \nu}[h^*_{a/4}(\Xvec)]\hspace{-1pt}\hspace{-1pt}\right. \\ &\quad\quad\quad\left. -\bbE_{\Yvec\sim\rho}[h^*_{a/4}(\Yvec)]\right| - \abs*{\bbE_{\Xvec\sim \nu}[h^*_{a/4}(\Xvec)] - \bbE_{\Yvec\sim\rho}[h^*_{a/4}(\Yvec)]},
        \end{split}\\
        & \leq \abs*{\bbE_{\Xvec\sim \nu}[h^*(\Xvec) - h^*_{a/4}(\Xvec)]} + \abs*{\bbE_{\Yvec\sim \rho}[h^*(\Yvec) - h^*_{a/4}(\Yvec)]},\\
        & \leq \bbE_{\Xvec\sim \nu}[\abs*{h^*(\Xvec) - h^*_{a/4}(\Xvec)}] + \bbE_{\Yvec\sim \rho}[\abs*{h^*(\Yvec) - h^*_{a/4}(\Yvec)}] \leq \frac{a}{2},
    \end{align}
    where the second to last inequality follows from Jensen's Inequality and the last follows from the fact that $\norm{h^*-h^*_{a/4}}_\infty \leq \frac{a}{4}$. Now, we have from \eqref{eq:Lip_C3_gap} and \eqref{eq:Lip_mollified_gap} that
    \begin{multline}
        \sup_{h \in \calH}\abs*{\bbE_{\Xvec\sim \nu}[h(\Xvec)] - \bbE_{\Yvec\sim\rho}[h(\Yvec)]} + a = \abs*{\bbE_{\Xvec\sim \nu}[h^*(\Xvec)] - \bbE_{\Yvec\sim\rho}[h^*(\Yvec)]} \\ \leq \abs*{\bbE_{\Xvec\sim \nu}[h^*_{a/4}(\Xvec)] - \bbE_{\Yvec\sim\rho}[h^*_{a/4}(\Yvec)]} + a/2,
    \end{multline}
    and thus
    \begin{equation}
        \sup_{h \in \calH}\abs*{\bbE_{\Xvec\sim \nu}[h(\Xvec)] - \bbE_{\Yvec\sim\rho}[h(\Yvec)]} < \abs*{\bbE_{\Xvec\sim \nu}[h^*_{a/4}(\Xvec)] - \bbE_{\Yvec\sim\rho}[h^*_{a/4}(\Yvec)]},
    \end{equation}
    which cannot be true since $h^*_{a/4}\in \calH$. Thus, by contradiction,
    \begin{equation}
        \sup_{h \in \Lip(1)} \abs*{\bbE_{\Xvec\sim \nu}[h(\Xvec)] - \bbE_{\Yvec\sim\rho}[h(\Yvec)]} \leq \sup_{h \in \calH}\abs*{\bbE_{\Xvec\sim \nu}[h(\Xvec)] - \bbE_{\Yvec\sim\rho}[h(\Yvec)]},
    \end{equation}
    which completes the proof.
\end{proof}

\subsection{Proof of Theorem \ref{thm:RRE_GES}}\label{sec:RRE_GES_proof}
We first give a lemma that uses a Foster-Lyapunov criteria \cite{meyn1993stability} to conclude that Condition \ref{cond:moment} holds.
\begin{lemma}\label{lem:FL_condition}
    Consider an SCRN $\Xvec_\Om(t)$ satisfying Assumption \ref{assum:propensities} and \ref{assum:stoich_full_dim}, and suppose there exists $V:\{\xvec\in\bbR^n|\xvec + \vvec^* \in \calX\}\rightarrow\bbR$, $a,b,K >0$, and $\Om_0$ such that $a\norm*{\xvec}^2 \leq V(\xvec) \leq b\norm*{\xvec}^2$ and for all $\Om \in \left\{\Om\in \calO\middle|\Om \geq \Om_0\right\}$ and all $\xvec\in\{\xvec\in\bbR^n|\Om(\xvec + \vvec^*) \in \calX_\Om\}$,
    \begin{equation}
        G_{\tilde\Xvectil_\Om}V(\xvec) \leq - \ga \norm{\xvec}^2 + \frac{K}{\Om}.
    \end{equation}
    Then, Condition \ref{cond:moment} is satisfied.
\end{lemma}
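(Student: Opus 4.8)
The plan is to work directly with the concentration-scaled chain $\tilde{\Xvectil}_\Om(t)$, because by Remarks \ref{rem:concentration_moments} and \ref{rem:just_second_moment} it suffices to exhibit a single constant $M_2$ (independent of $\Om$) and an $\Om_0$ such that, for all $\Om \geq \Om_0$, $\tilde{\Xvectil}_\Om(t)$ has a unique stationary distribution $\tilde{\tilde\pi}$ satisfying $\bbE_{\tilde{\Xvectil}_\Om^\infty \sim \tilde{\tilde\pi}}\left[\norm{\tilde{\Xvectil}_\Om^\infty}^2\right] \leq M_2/\Om$; the matching first-moment bound follows from Jensen's inequality. First I would establish existence and uniqueness. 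Combining the hypothesized drift inequality with $V(\xvec) \leq b\norm{\xvec}^2$ (so that $\norm{\xvec}^2 \geq V(\xvec)/b$) puts the drift in the standard geometric form $G_{\tilde{\Xvectil}_\Om}V(\xvec) \leq -\tfrac{\ga}{b}V(\xvec) + \tfrac{K}{\Om}$. Since $V(\xvec) \geq a\norm{\xvec}^2 \to \infty$ as $\norm{\xvec}\to\infty$, this right-hand side is strictly negative outside the finite lattice set $C_\Om = \{\xvec : \norm{\xvec}^2 < 2K/(\ga\Om)\}$, so the Foster--Lyapunov criterion for continuous-time Markov chains \cite{meyn1993stability}, together with the irreducibility from Assumption \ref{assum:stoich_full_dim}, yields non-explosivity and positive recurrence, hence the existence of a unique stationary distribution $\tilde{\tilde\pi}$.

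The formal computation that gives the quantitative bound is short: integrating $G_{\tilde{\Xvectil}_\Om}V(\xvec) \leq -\ga\norm{\xvec}^2 + K/\Om$ against $\tilde{\tilde\pi}$ and invoking the Basic Adjoint Relationship $\bbE_{\tilde{\tilde\pi}}\left[G_{\tilde{\Xvectil}_\Om}V\right] = 0$ would give $0 \leq -\ga\,\bbE_{\tilde{\tilde\pi}}\left[\norm{\tilde{\Xvectil}_\Om^\infty}^2\right] + K/\Om$, i.e. $\bbE_{\tilde{\tilde\pi}}\left[\norm{\tilde{\Xvectil}_\Om^\infty}^2\right] \leq K/(\ga\Om)$. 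Taking $M_2 = K/\ga$ and translating through Remark \ref{rem:concentration_moments} would then discharge Condition \ref{cond:moment}. The difficulty is that this step is only formal as written.

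The main obstacle is precisely making that passage rigorous, since applying Lemma \ref{lem:glynn} with $f = V$ requires the a priori integrability $\sum_\xvec \tilde{\tilde\pi}(\xvec)\abs*{G_{\tilde{\Xvectil}_\Om}(\xvec,\xvec)}\abs*{V(\xvec)} < \infty$, which presupposes control of exactly the moments we are trying to bound (and we do not have the polynomial propensity bound of Condition \ref{cond:k_1_moment_finite} available here). I would sidestep this circularity by localization. Noting that $K/\Om \leq \tfrac{\ga}{2}\norm{\xvec}^2$ off $C_\Om$, the drift can be written for all $\xvec$ as $G_{\tilde{\Xvectil}_\Om}V(\xvec) \leq -\tfrac{\ga}{2}\norm{\xvec}^2 + \tfrac{K}{\Om}\bbone_{C_\Om}(\xvec)$. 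Let $\tau_n$ be the first exit time of $\tilde{\Xvectil}_\Om(t)$ from the ball of radius $n$; Dynkin's formula on $[0,t\wedge\tau_n]$ is valid because the stopped process and its generator are bounded, and using $V \geq 0$ to drop the boundary term yields
\begin{equation}
    \tfrac{\ga}{2}\,\bbE_{\xvec}\left[\int_0^{t\wedge\tau_n}\norm{\tilde{\Xvectil}_\Om(s)}^2\,ds\right] \leq V(\xvec) + \tfrac{Kt}{\Om}.
\end{equation}
Letting $n\to\infty$ by monotone convergence (non-explosivity gives $t\wedge\tau_n \uparrow t$), then dividing by $t$ for a fixed deterministic initial state $\xvec$ and sending $t\to\infty$, kills the term $V(\xvec)/t$ and bounds the $\liminf$ of the time averages by $2K/(\ga\Om)$. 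Combining this with the ergodic theorem for the positive recurrent chain, which identifies the almost-sure limit of these time averages as $\bbE_{\tilde{\tilde\pi}}\left[\norm{\tilde{\Xvectil}_\Om^\infty}^2\right] \in [0,\infty]$, and applying Fatou's lemma gives $\bbE_{\tilde{\tilde\pi}}\left[\norm{\tilde{\Xvectil}_\Om^\infty}^2\right] \leq 2K/(\ga\Om)$. This route uses only Assumptions \ref{assum:propensities} and \ref{assum:stoich_full_dim}; taking $M_2 = 2K/\ga$ and $M_1 = \sqrt{2K/\ga}$ and invoking Remark \ref{rem:concentration_moments} establishes Condition \ref{cond:moment}, with the choice of $\Om_0$ and the uniqueness of $\tilde{\tilde\pi}$ being routine bookkeeping.
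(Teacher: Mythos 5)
Your proof is correct, and at the top level it is the same Foster--Lyapunov strategy the paper uses: put the drift in the geometric form $G_{\tilde{\Xvectil}_\Om}V \leq -\tfrac{\ga}{b}V + \tfrac{K}{\Om}$ to get non-explosivity and a unique stationary distribution from \cite{meyn1993stability} (the paper cites Theorem 7.1 there), and then convert the drift inequality $G_{\tilde{\Xvectil}_\Om}V \leq -\ga\norm{\xvec}^2 + K/\Om$ into the stationary bound $\bbE\bigl[\norm{\tilde{\Xvectil}_\Om^\infty}^2\bigr] \lesssim K/(\ga\Om)$, which via Jensen and Remark \ref{rem:concentration_moments} is exactly Condition \ref{cond:moment}. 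The one place you diverge is the quantitative step: the paper rewrites the drift as $G_{\tilde{\Xvectil}_\Om}V \leq -\ga(\norm{\xvec}^2+1) + \ga + K/\Om$ and invokes Theorem 4.3 of \cite{meyn1993stability} to conclude $\bbE\bigl[\norm{\tilde{\Xvectil}_\Om^\infty}^2\bigr] \leq K/(\ga\Om)$ in one line, whereas you re-derive this by hand (Dynkin's formula on $[0,t\wedge\tau_n]$, monotone convergence, time-averaging, the ergodic theorem, and Fatou), arriving at the same bound up to a harmless factor of $2$. Your worry about circularity in applying Lemma \ref{lem:glynn} with $f=V$ is legitimate --- one cannot simply assert the BAR for an unbounded $V$ without a priori integrability --- but this is precisely the issue that the cited Meyn--Tweedie theorem resolves internally (by essentially the localization argument you wrote out). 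So your version buys self-containedness at the cost of length; the paper's buys brevity by leaning on the reference. Both are sound.
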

\begin{proof}
    We begin by showing that for all $\Om\in \left\{\Om\in\calO\middle|\Om\leq \Om_0\right\}$, $\Xvec_\Om(t)$ has a unique stationary distribution and is nonexplosive. If $|\calX_\Om|$ is finite, then since $\Xvec_\Om(t)$ is irreducible by Assumption \ref{assum:stoich_full_dim}, $\Xvec_\Om(t)$ has a unique stationary distribution and is nonexplosive. Otherwise, observe that $V(\xvec)$ is radially unbounded due to the assumption that $a\norm*{\xvec}^2 \leq V(\xvec)$. Additionally, by using the bound $V(\xvec) \leq b \norm{\xvec}^2$, we have that
    \begin{equation}
        G_{\tilde\Xvectil_\Om}V(\xvec) \leq - \frac{\ga}{b} V(\xvec) + \frac{K}{\Om}.
    \end{equation}
    Thus, by Theorem 7.1 in \cite{meyn1993stability}, $\tilde\Xvectil_\Om(t)$ is exponentially ergodic, and thus has a unique stationary distribution and is nonexplosive. Now we show the desired moment bounds. We have that for all $\Om \in \left\{\Om \in \calO\middle|\Om\geq \Om_0\right\}$,
    \begin{equation}
        G_{\tilde\Xvectil_\Om}V(\xvec) \leq - \ga( \norm{\xvec}^2+1) +\ga+ \frac{K}{\Om}.
    \end{equation}
    By Theorem 4.3 in \cite{meyn1993stability}, we therefore have that $\bbE\left[\norm*{\tilde\Xvectil_\Om^\infty}^2\right] + 1 \leq \frac{\ga + \frac{K}{\Om}}{\ga}$, which implies that
    \begin{equation}
        \bbE\left[\norm*{\tilde\Xvectil_\Om^\infty}^2\right] \leq \frac{K}{\ga \Om}.
    \end{equation}
    By Jensen's inequality, $\bbE\left[\norm*{\tilde\Xvectil_\Om^\infty}\right] \leq \sqrt{\frac{K}{\ga \Om}}$. Thus, the SCRN satisfies Condition \ref{cond:moment}.
\end{proof}
Building on Lemma \ref{lem:FL_condition}, we now prove lemma that relates the existence of a quadratic type Lyapunov function that for the RRE to Conditions \ref{cond:k_1_moment_finite} and \ref{cond:moment}, by using the Lyapunov function as a Foster-Lyapunov function for $\tilde{\Xvectil}_\Om(t)$.
\begin{lemma}\label{lem:bounded_hessian_lyap}
    Consider an SCRN $\Xvec_\Om(t)$ satisfying Assumptions \ref{assum:propensities}, \ref{assum:C2b}, and \ref{assum:stoich_full_dim}, and let $\vvec^* \in \mathrm{int}(\calX)$ be an equilibrium point of \eqref{eq:LNA_RRE}. Suppose that there exists an open set $\calD \supseteq\left\{\xvec\in \bbR^n \middle| \xvec+\vvec^*\in\calX\right\}$ and a Lyapunov function for the RRE $V:\calD\rightarrow\bbR_{\geq 0}$ twice continuously differentiable and $a,b,\ga >0$ such that for all $\xvec \in \left\{\xvec\in \bbR^n \middle| \xvec+\vvec^*\in\calX\right\}$, $a\norm{\xvec}^2 \leq V(\xvec) \leq b\norm{\xvec}^2$ and
    \begin{equation}
        \frac{\partial V}{\partial \zvec}F(\xvec + \vvec^*) \leq - \gamma \norm{\xvec}^2,
    \end{equation}
    and additionally, there exists $B$ such that $\sup_{\xvec\in\left\{\xvec\in \bbR^n \middle| \xvec+\vvec^*\in\calX\right\}}\norm{\grad^2V(\xvec)} \leq B$. Then, Condition \ref{cond:moment} is satisfied.
\end{lemma}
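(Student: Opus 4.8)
The plan is to use the RRE Lyapunov function $V$ directly as a Foster--Lyapunov function for the concentration-scaled chain $\tilde\Xvectil_\Om(t)$ and then invoke Lemma \ref{lem:FL_condition}. Since the hypotheses already supply $a\norm{\xvec}^2 \le V(\xvec) \le b\norm{\xvec}^2$, the whole task reduces to verifying the drift inequality $G_{\tilde\Xvectil_\Om}V(\xvec) \le -\ga'\norm{\xvec}^2 + K/\Om$ for some $\ga',K>0$ and all sufficiently large $\Om$; once this is in hand, Lemma \ref{lem:FL_condition} immediately yields Condition \ref{cond:moment}, including existence, uniqueness, and the two moment bounds.

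First I would apply the generator $G_{\tilde\Xvectil_\Om}V(\xvec) = \sum_{i=1}^r \Om\labar_i(\xvec+\vvec^*)\left(V(\xvec + \tfrac1\Om \xivec_i) - V(\xvec)\right)$ and Taylor expand each increment to second order with the Lagrange remainder:
\begin{equation}
V(\xvec + \tfrac1\Om\xivec_i) - V(\xvec) = \tfrac1\Om \tfrac{\partial V}{\partial\zvec}(\xvec)\xivec_i + \tfrac{1}{2\Om^2}\xivec_i^T\grad^2 V(\xvec + \tht_i\tfrac1\Om\xivec_i)\xivec_i,
\end{equation}
with $\tht_i \in [0,1]$. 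Multiplying by $\Om\labar_i(\xvec+\vvec^*)$ and summing, the leading term collapses to $\tfrac{\partial V}{\partial\zvec}(\xvec)\sum_i \labar_i(\xvec+\vvec^*)\xivec_i = \tfrac{\partial V}{\partial\zvec}(\xvec)F(\xvec+\vvec^*)$, since $F=S\lavecbar$ means $\sum_i \labar_i(\vvec)\xivec_i = F(\vvec)$; the Lyapunov hypothesis bounds this by $-\ga\norm{\xvec}^2$. The surviving second-order piece is $\tfrac{1}{2\Om}\sum_i \labar_i(\xvec+\vvec^*)\,\xivec_i^T \grad^2V(\xvec+\tht_i\tfrac1\Om\xivec_i)\xivec_i$, which I would bound in absolute value by $\tfrac{B}{2\Om}\sum_i \labar_i(\xvec+\vvec^*)\norm{\xivec_i}^2$ using the Hessian bound $B$.

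The key quantitative input is that each $\labar_i$ grows at most quadratically. By Assumption \ref{assum:C2b} the second derivatives are bounded on $\calX$ by some $Q_i$, and since $\calX$ is convex (a product of intervals) with $\vvec^*\in\calX$, Taylor's theorem along the segment from $\vvec^*$ gives $\labar_i(\xvec+\vvec^*) \le \labar_i(\vvec^*) + \norm{\grad\labar_i(\vvec^*)}\norm{\xvec} + \tfrac{Q_i}{2}\norm{\xvec}^2$. Substituting, the remainder is at most $\tfrac1\Om(\al_0 + \al_1\norm{\xvec} + \al_2\norm{\xvec}^2)$ for explicit constants depending on $B$, the $Q_i$, the $\xivec_i$, and $\labar_i(\vvec^*)$. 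Choosing $\Om_0$ large enough that $\al_2/\Om_0 \le \ga/4$ removes the quadratic growth (leaving a spare $-\tfrac{3\ga}{4}\norm{\xvec}^2$), and applying Young's inequality $\tfrac{\al_1}{\Om}\norm{\xvec} \le \tfrac{\ga}{4}\norm{\xvec}^2 + \tfrac{\al_1^2}{\ga\Om^2}$ absorbs the linear term; since $\Om^{-2}\le\Om^{-1}$, all terms not proportional to $-\norm{\xvec}^2$ collect into a single $K/\Om$. This gives $G_{\tilde\Xvectil_\Om}V(\xvec) \le -\tfrac{\ga}{2}\norm{\xvec}^2 + K/\Om$, exactly the hypothesis of Lemma \ref{lem:FL_condition}.

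The step requiring the most care is the second-order remainder, for two reasons. First, the rates $\labar_i(\xvec+\vvec^*)$ are unbounded, so the remainder is not naively $O(1/\Om)$; it is precisely their at-most-quadratic growth combined with taking $\Om$ large that lets the $\norm{\xvec}^2/\Om$ contribution be dominated by the strictly negative RRE drift. Second, the Hessian bound $B$ is assumed only on $\{\xvec : \xvec+\vvec^*\in\calX\}$, whereas the remainder evaluates $\grad^2 V$ at the interior point $\xvec + \tht_i\tfrac1\Om\xivec_i$. I would resolve this by noting that only reactions with valid transitions contribute: by the construction of the state space $\calX_\Om$ (property $2$), any term whose transition would leave the state space has $\labar_i(\xvec+\vvec^*)=0$, so whenever the $i$-th term is nonzero both $\xvec+\vvec^*$ and $\xvec+\tfrac1\Om\xivec_i+\vvec^*$ lie in $\calX$, and convexity of $\calX$ places the entire segment, hence the evaluation point, inside $\calX$ where $B$ applies.
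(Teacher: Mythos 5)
Your proposal is correct and follows essentially the same route as the paper: Taylor-expand the generator applied to $V$, identify the first-order term with $\frac{\partial V}{\partial \zvec}F(\xvec+\vvec^*)$, bound the second-order remainder using the Hessian bound $B$ together with the at-most-quadratic growth of the $\labar_i$ guaranteed by Assumption \ref{assum:C2b}, absorb the resulting $\norm{\xvec}^2/\Om$ term into the negative drift for large $\Om$, and conclude via Lemma \ref{lem:FL_condition}. Your explicit justification that the Hessian evaluation point $\xvec+\tht_i\tfrac{1}{\Om}\xivec_i$ stays in the region where $B$ applies (via the state-space construction and convexity of $\calX$) is a detail the paper's proof passes over silently, and is a welcome addition.
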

\begin{proof}
    Let $\eta = \frac{1}{\Om}$, and for $\zvec \in \calX_\Om$ let $\xvec = \eta\zvec - \vvec^*$. Consider 
    \begin{equation}
        G_{\tilde\Xvectil_\Om}V(\xvec) = \sum_{i=1}^r \labar_i(\xvec+\vvec^*) \frac{1}{\eta}\left(V(\xvec+\eta \xivec_i)-V(\xvec)\right).
    \end{equation}
    If we take the Taylor expansion in $\eta$ using the Lagrange form of the remainder we obtain
    \begin{equation}
        G_{\tilde\Xvectil_\Om}V(\xvec) = \sum_{i=1}^r \labar_i(\xvec+\vvec^*) \left( \frac{\partial V}{\partial \xvec}\xivec_i + \frac{1}{2}\eta \xivec_i^T \grad^2{V}(\xvec+ \ep\xivec_i)\xivec_i \right),
    \end{equation}
    where $0 \leq \ep \leq \eta$ depends on $\xvec$. Thus, for sufficiently large $\Om$ we obtain
    \begin{equation}
        G_{\tilde\Xvectil_\Om}V(\xvec) \leq \frac{\partial V(\xvec)}{\partial \zvec} F(\xvec+\vvec^*) + \frac{K}{\Om}(1+\norm{\xvec}^2) \leq - \frac{\ga}{2} \norm{\xvec}^2 + \frac{K}{\Om},
    \end{equation}
    for some $K>0$. Here we have used the fact that by Assumption \ref{assum:C2b}, there exists $K'$ such that $\abs*{\sum_{i=1}^r \labar_i(\xvec+\vvec^*)}\leq K'(1+\norm{\xvec}^2)$. Hence, by Lemma \ref{lem:FL_condition}, Condition \ref{cond:moment} is satisfied.
\end{proof}
We now prove Theorem \ref{thm:RRE_GES}. We note that we only need to verify Condition \ref{cond:k_1_moment_finite} for $\kappa = 2$, since we are assuming that the Hessian of $\labar_i(\Xvec)$ is bounded.
\begin{proof}
We begin by checking that Condition \ref{cond:k_1_moment_finite} holds. Let us consider the scaled Markov chain $\Zvec_\Om(t) = \frac{1}{\Om}\Xvec_\Om(t)$ and the function $V(\vvec) = \left(\cvec^T\vvec\right)^{\kappa + 1}$. We have that
\begin{align}
    G_{\Zvec_\Om}V(\vvec) & = \sum_{i=1}^r \la_i\left(\Om\vvec\right) \left( V(\vvec + \frac{1}{\Om}\xivec_i) - V(\vvec) \right)\\
    & = \sum_{i=1}^r \Om \labar_i\left(\vvec\right) \left( V(\vvec + \frac{1}{\Om}\xivec_i) - V(\vvec) \right)\\
    & = \sum_{i=1}^r \Om \labar_i\left(\vvec\right) \left( \left(\cvec^T(\vvec + \frac{1}{\Om}\xivec_i)\right)^{\kappa+1} - \left(\cvec^T\vvec\right)^{\kappa+1} \right)
\end{align}
For $\kappa = 2$, we have 
\begin{align}
    \begin{split}
    \Om \left( \left(\cvec^T(\vvec + \frac{1}{\Om}\xivec_i)\right)^{\kappa+1} - \left(\cvec^T\vvec\right)^{\kappa+1} \right) & = \Om \left( (\cvec^T\vvec)^3 + \frac{3}{\Om}(\cvec^T\vvec)^2\cvec^T\xivec_i \right. \\ &\left. + \frac{3}{\Om^2} \cvec^T\vvec (\cvec^T\xivec_i)^2 + \frac{1}{\Om^3}(\cvec^T\xivec_i)^3 -(\cvec^T\vvec)^3\right)
    \end{split}\\
    & = 3(\cvec^T\vvec)^2\cvec^T\xivec_i + \frac{3}{\Om} \cvec^T\vvec (\cvec^T\xivec_i)^2 + \frac{1}{\Om^2}(\cvec^T\xivec_i)^3
\end{align}
Noting that $\frac{\partial}{\partial \vvec}V(\vvec) = (\kappa+1)(\cvec^T\vvec)^\kappa \cvec^T$, we have
\begin{align}
    G_{\Zvec_\Om}V(\vvec) & = \sum_{i=1}^r \labar_i\left(\vvec\right) \left(3(\cvec^T\vvec)^2\cvec^T\xivec_i + \frac{3}{\Om} \cvec^T\vvec (\cvec^T\xivec_i)^2 + \frac{1}{\Om^2}(\cvec^T\xivec_i)^3\right)\\
    & = \frac{\partial V(\vvec)}{\partial \vvec} F(\vvec) + \frac{3}{\Om}\sum_{i=1}^r \labar_i\left(\vvec\right)\cvec^T\vvec (\cvec^T\xivec_i)^2 + \frac{1}{\Om^2}\sum_{i=1}^r\labar_i\left(\vvec\right)(\cvec^T\xivec_i)^3
\end{align}
Thus, for sufficiently large $\Om$, and $\vvec \geq 0$ such that $\cvec^T\vvec \geq d$, we have that $G_{\Zvec_\Om}V(\vvec) \leq - \frac{1}{2}\ga_2 (\cvec^T\vvec)^3$. Therefore, 
\begin{equation}
    G_{\Zvec_\Om}V(\vvec) \leq -\left( \frac{1}{2}\ga_2 (\cvec^T\vvec)^3 + 1\right) + \max\limits_{\vvec\in\frac{1}{\Om}\calX_\Om: \cvec^T\vvec \leq d} \left(G_{\Zvec_\Om}V(\vvec) + \frac{1}{2}\ga_2 (\cvec^T\vvec)^3 + 1\right),
\end{equation}
which shows by Theorem 7.1 in \cite{meyn1993stability} that $\Zvec_\Om(t)$ has a unique stationary distribution and that by Theorem 4.3 in \cite{meyn1993stability} $\bbE[(\cvec^T\Zvec_\Om^\infty)^3] < \infty$, and thus $\tilde{\Xvectil}_\Om(t)$ has a unique stationary distribution and $\bbE[(\cvec^T\tilde{\Xvectil}_\Om^\infty)^3] < \infty$. Therefore, Condition \ref{cond:k_1_moment_finite} is satisfied. We now verify that Condition \ref{cond:moment} holds. We will need the following Lemma, which is a slight variation on the standard converse Lyapunov theorem, see e.g. \cite{khalil2002nonlinear}.
\begin{lemma}\label{lem:converse_lyap}
    Let $\calX \subseteq \bbR^n_{\geq 0}$ be closed and let $\calD \subset \bbR^n$ be an open set such that $\calX \subset \calD$. Let $\bar{F}:\calD\rightarrow\bbR^n$ be a $C^2$ function defining a vector field for which $\calX$ is positively invariant. Suppose that for all $\xvec \in \calX$, the solution to
    \begin{equation}
        \dot{\xvec}' = \bar{F}(\xvec'),\quad \xvec'(0)=\xvec,
    \end{equation}
    $\phi(t;\xvec)$, exists uniquely on $[0,\infty)$ and that there exists $\bar{K},\gabar_1 >0$ and $\xvec^* \in \calX$ such that for all $\xvec \in \calX$ and all $t\geq 0$,
    \begin{equation}
        \norm{\phi(t;\xvec)-\xvec^*} \leq \bar{K}e^{-\gabar_1 t}\norm{\xvec-\xvec^*},
    \end{equation}
    and furthermore, there exist $\cvec \in \bbR^n_{>0}$, $\bar{d}>0,\gabar_2>0$ such that for all $\xvec\in\left\{\xvec \in \calX \middle| \bar{\cvec}^T\xvec \geq \bar{d} \right\}$,
    \begin{equation}
        \bar{\cvec}^T\bar{F}(\xvec) \leq -\ga_2 \bar{\cvec}^T \xvec.
    \end{equation}
    Then, for each $d'\geq \bar{d}$, there exists $\overline{W}_1:\calD\rightarrow\bbR$ twice continuously differentiable and $\bar{a}_1, \bar{b}_1,\gabar >0$ such that for all $\xvec \in \left\{\xvec \in \calX \middle| \bar{\cvec}^T\xvec \leq d' \right\}$,
    \begin{align}
        \bar{a}_1 \norm{\xvec-\xvec^*}^2 \leq \overline{W}_1(\xvec) & \leq \bar{b}_1 \norm{\xvec-\xvec^*}^2,\\
    \frac{\partial \overline{W}_1 (\xvec)}{\partial \vvec}\bar{F}(\xvec) & \leq - \gabar \norm*{\xvec-\xvec^*}^2.
    \end{align}
\end{lemma}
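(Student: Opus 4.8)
The plan is to construct $\overline{W}_1$ by the classical Massera-type converse-Lyapunov recipe, integrating the squared distance of the flow to $\xvec^*$ along trajectories over a finite horizon, and then to patch the result onto all of $\calD$ with a smooth cutoff. Write $\calK_{d'} = \{\xvec \in \calX \mid \bar{\cvec}^T\xvec \le d'\}$. The first and most important structural observation is that $\calK_{d'}$ is compact and positively invariant for $\bar{F}$. Compactness is immediate: since $\bar{\cvec} \in \bbR^n_{>0}$ and $\calX \subseteq \bbR^n_{\ge 0}$, every $\xvec \in \calK_{d'}$ satisfies $0 \le x_j \le d'/\bar c_j$, so $\calK_{d'}$ is a closed, bounded subset of the closed set $\calX$. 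Positive invariance follows from the dissipation hypothesis: for a trajectory with $g(t) := \bar{\cvec}^T\phi(t;\xvec)$ and $g(0)\le d'$, whenever $g(t)=d'\ge\bar{d}$ we have $\dot g(t)=\bar{\cvec}^T\bar{F}(\phi(t;\xvec))\le -\ga_2 d' <0$, so $g$ cannot cross the level $d'$ from below, whence $\phi(t;\xvec)\in\calK_{d'}$ for all $t\ge 0$. In particular $\xvec^*$, being the limit of the convergent trajectories in the closed set $\calK_{d'}$, lies in $\calK_{d'}$, and $\bar{F}(\xvec^*)=0$ because the exponential bound forces $\phi(t;\xvec^*)=\xvec^*$.

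I then fix a finite horizon $T$ and set $W_T(\xvec)=\int_0^T \norm{\phi(t;\xvec)-\xvec^*}^2\,dt$; three estimates, all for $\xvec\in\calK_{d'}$, give the required inequalities. The upper bound is a direct consequence of global exponential stability: $\norm{\phi(t;\xvec)-\xvec^*}^2\le \bar{K}^2 e^{-2\gabar_1 t}\norm{\xvec-\xvec^*}^2$, so $W_T(\xvec)\le \tfrac{\bar K^2}{2\gabar_1}\norm{\xvec-\xvec^*}^2$. The lower bound comes from a reverse Gr\"onwall estimate: since $\bar{F}$ is $C^2$ on the compact set $\calK_{d'}$ it is Lipschitz there with some constant $L$, and using $\bar F(\xvec^*)=0$ together with positive invariance, $\tfrac{d}{dt}\norm{\phi-\xvec^*}^2 \ge -2L\norm{\phi-\xvec^*}^2$, so $\norm{\phi(t;\xvec)-\xvec^*}^2\ge e^{-2Lt}\norm{\xvec-\xvec^*}^2$ and $W_T(\xvec)\ge \tfrac{1-e^{-2LT}}{2L}\norm{\xvec-\xvec^*}^2$. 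For the dissipation inequality I use the semigroup property $\phi(t;\phi(s;\xvec))=\phi(t+s;\xvec)$, which gives $W_T(\phi(s;\xvec))=\int_s^{T+s}\norm{\phi(\tau;\xvec)-\xvec^*}^2\,d\tau$; differentiating at $s=0$ yields $\frac{\partial \overline{W}_1}{\partial \xvec}\bar{F}(\xvec)=\norm{\phi(T;\xvec)-\xvec^*}^2-\norm{\xvec-\xvec^*}^2$, and choosing $T$ large enough that $\bar K^2 e^{-2\gabar_1 T}\le \tfrac12$ makes this $\le -\tfrac12\norm{\xvec-\xvec^*}^2$. This fixes $\bar a_1$, $\bar b_1$, and $\gabar=\tfrac12$.

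It remains to establish $C^2$ regularity and to extend $W_T$ to the full open set $\calD$. Because $\bar{F}\in C^2(\calD)$, the flow depends $C^2$ on its initial condition over the compact interval $[0,T]$ wherever it remains in $\calD$ (smooth dependence on initial conditions, as in \cite{khalil2002nonlinear}), so $W_T$, a finite integral of a $C^2$ integrand, is itself $C^2$ on the set where it is defined. Using compactness of $\calK_{d'}$ and continuity of the flow, I choose an open neighborhood $\calN$ with $\calK_{d'}\subset\calN\subset\calD$ small enough that $\phi(t;\xvec)$ stays in a compact subset of $\calD$ for all $t\in[0,T]$ and all $\xvec\in\calN$, so that $W_T\in C^2(\calN)$. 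Finally I pick $\chi\in C^\infty(\calD)$ with $\chi\equiv 1$ on a neighborhood of $\calK_{d'}$ and $\mathrm{supp}\,\chi\subset\calN$, and set $\overline{W}_1=\chi\,W_T$ (extended by zero), which is $C^2$ on all of $\calD$; since $\chi\equiv 1$ and $\grad\chi=0$ on $\calK_{d'}$, it agrees with $W_T$ together with its gradient there, so all three inequalities persist on $\calK_{d'}$.

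The main obstacle is the interaction between the boundary of $\calX$ and the smoothness requirement: trajectories are controlled only inside the positively invariant compact set $\calK_{d'}$, yet showing that $W_T$ is genuinely $C^2$ on an open subset of $\bbR^n$ (rather than merely along $\calX$) forces me to flow initial conditions that leave $\calX$ into $\calD\setminus\calX$, where neither exponential stability nor the dissipation bound is available. Compactness of $\calK_{d'}$ and continuous dependence of the flow are exactly what confine these nearby trajectories to $\calD$ on $[0,T]$, and the cutoff then quarantines the region where the estimates could fail; verifying these containment and neighborhood statements carefully is the technical heart of the argument.
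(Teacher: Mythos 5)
Your proof is correct and is essentially the paper's own argument: the paper simply defers to the finite-horizon Massera construction of Theorem 4.14 in \cite{khalil2002nonlinear} with the domain replaced by $\left\{\xvec\in\calX\,\middle|\,\bar{\cvec}^T\xvec\leq d'\right\}$, which is exactly the integral $\int_0^T\norm{\phi(t;\xvec)-\xvec^*}^2\,dt$ with the three estimates you give, and obtains $C^2$ regularity from smooth dependence of the flow on initial conditions. Your additional care with the compactness and positive invariance of $\calK_{d'}$ and with the cutoff extension to all of $\calD$ fills in details the paper leaves implicit rather than departing from its route.
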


\begin{proof}
    The proof that there exists $\overline{W}_1$ satisfying the quadratic bounds proceeds analogously to the proof of Theorem 4.14 in \cite{khalil2002nonlinear}, with the only adjustment that we use $\left\{\xvec\in \calX\middle|\cvec^T\xvec \leq d'\right\}$ as the domain instead of a ball around the origin. That $\overline{W}_1 \in C^2$ is assured by the form
    \begin{equation}
        \overline{W}_1(\xvec) = \int_0^s \phi(t;\xvec)^T \phi(t;\xvec) dt,
    \end{equation}
    where $\phi(t;\xvec)$ is the solution to $\dot{\xvec}' = \bar{F}(\xvec')$ with $\xvec'(0) = \xvec$ and $\phi(t;\xvec)\in C^2$ by $\bar{F}\in C^2$ and e.g. Theorem 8.43 in \cite{walter2010theory}.
\end{proof}

We now use Lemma \ref{lem:converse_lyap} to construct a Lyapunov Function $W_1$. Recall that $\gamma_1$, $K$, $\cvec$, and $d$ are assumed to exist by the assumptions of Theorem \ref{thm:RRE_GES}. We apply Lemma \ref{lem:converse_lyap} to $\bar{F}(\xvec) = F(\xvec)$ with $\gabar_1 = \gabar$, $\bar{K}=K$, $\bar{\cvec}=\cvec$, $\bar{d} = d$, $\xvec^*=\vvec^*$, and $\gabar_2 = \ga_2$, picking $d'>d$ to obtain a function $\overline{W}_1$ with associated values $\bar{a}_1$, $\bar{b}_1$, and $\gabar$. Let $W_1(\vvec) = \overline{W}_1(\vvec)$. In order to create a Lyapunov function on all of $\calX$, we define $W_2:\calD\rightarrow\bbR$ as
\begin{equation}
    W_2(\vvec) = 2\bar{b}_1\left(\left(\cvec^T\vvec\right)^2 - (d')^2\right).
\end{equation}
We construct a global Lyapunov function by merging $W_1$ near the origin with $W_2$ far from the origin. We must do this in a way that creates a $C^2$ function with bounded second derivative so that we can apply Lemma \ref{lem:bounded_hessian_lyap}. Our approach is similar to a technique used to prove Theorem B.1 in \cite{lin1996smooth}. To this end, let $W:\calD\rightarrow\bbR$ be defined by
\begin{equation}
    W(\vvec) = \max\left\{W_1(\vvec), W_2(\vvec)\right\}.
\end{equation}
We have that for all $\vvec\in\left\{\vvec\in\calX \middle| \cvec^T\vvec < d'\right\}$, $W(\vvec) = W_1(\vvec)$, and hence in this region,
\begin{align}
    \frac{\partial W (\vvec)}{\partial \vvec}F(\vvec) & \leq - \gabar \norm*{\vvec-\vvec^*}^2.
\end{align}
Observe that due to the quadratic upper bound on $W_1(\vvec)$, there exists $d'' > d'$ such that for all $\vvec\in \left\{\vvec\in\calX \middle| \cvec^T\vvec > d''\right\}$, $W_2(\vvec)> W_1(\vvec)$, and thus in this region
\begin{align}
    \frac{\partial W (\vvec)}{\partial \vvec}F(\vvec) & \leq - 4\bar{b}_1\ga_2 (\cvec^T\vvec)^2,\\
    & \leq - 4\bar{b}_1\ga_2 \sum_i c_i^2 v_i^2,\\
    & \leq - 4\bar{b}_1\ga_2 \left(\min_i c_i^2 \right) \sum_i v_i^2,\\
    & \leq - 4\bar{b}_1\ga_2 \left(\min_i c_i^2 \right) \sum_i (v_i^2 - 2v_iv_i^*),
\end{align}
where we have used that $\calX \subseteq \bbR_{\geq 0}^n$ in the last line. hence, there exists $\tilde{\ga}_2 >0$ such that for all $\vvec \in \left\{\vvec\in\calX \middle| \cvec^T\vvec > d''\right\}$,
\begin{align}
    \frac{\partial W (\vvec)}{\partial \vvec}F(\vvec) & \leq - \tilde{\ga}_2 \sum_i (v_i^2 - 2v_iv_i^* + (v^*_i)^2),\\
    & = - \tilde{\ga}_2 \norm*{\vvec-\vvec^*}^2.
\end{align}

Let $d_1,d_2,d_3,d_4$ be such that $\cvec^T\vvec^* < d_1 < d_2 < d'$ and $d'' < d_3 < d_4$. Let $S_1 =  \left\{\vvec \in \calX\middle| \cvec^T\vvec < d_2\right\}$, $S_2 = \left\{\vvec \in \calX\middle| d_1 < \cvec^T\vvec < d_4\right\}$, and $S_3 = \left\{\vvec \in \calX\middle| d_3 < \cvec^T\vvec \right\}$. Now consider $\vvec\in S_2$. Let $\varphi_\ep(\xvec)$ be the bump function defined by
    \begin{equation}
        \varphi_\ep (\xvec) = \left\{\begin{array}{cc}
             \frac{1}{\ep^n}Ce^{\frac{-1}{1 - \norm{\xvec/\ep}^2}}, & \norm{\xvec/\ep}\leq 1, \\
             0, & \text{else},
        \end{array}\right.
    \end{equation}
    where $C = \left(\int_{\bbR^n} e^{\frac{-1}{1 - \norm{\xvec}^2}}d\xvec\right)^{-1}$, and let $W_\ep (\vvec) = W \star \varphi_\ep$, where we treat $W(\vvec) = 0$ for all $\vvec\notin \calD$. We have that
\begin{align}
    \frac{\partial W_\ep (\vvec)}{\partial \vvec} F(\vvec) & = \frac{\partial}{\partial \vvec} \int_{\bbR^n}W(\vvec-\svec)\varphi_\ep(\svec)\mu(d\svec) \cdot F(\vvec),
\end{align}
where $\mu$ is the Lebesgue measure on $\bbR^n$. For $\tauvec \in \bbR^n$, we have that
\begin{equation}
    \frac{\partial}{\partial \vvec} W(\vvec-\svec)\varphi_\ep(\svec) \cdot \tauvec = \lim_{\eta \rightarrow 0} \frac{W(\vvec-\svec + \eta \tauvec) - W(\vvec-\svec)}{\eta}\varphi_\ep(\svec),
\end{equation}
and for $\eta >0$ we have
\begin{equation}
    \norm*{\frac{W(\vvec-\svec + \eta \tauvec) - W(\vvec-\svec)}{\eta}\varphi_\ep(\svec)} \leq L\norm{\tauvec} \cdot \max_{\svec\in \bbR^n}\norm{\varphi_\ep(\svec)},
\end{equation}
where $L$ is the Lipschitz constant of $W$ on $S_2$, and $\max_{\svec\in \bbR^n}\norm{\varphi_\ep(\svec)} = C/(e\ep^n)$. By the Dominated Convergence Theorem and the fact that $\lim_{\eta\rightarrow 0} \frac{W(\vvec-\svec + \eta \tauvec) - W(\vvec-\svec)}{\eta}\varphi_\ep(\svec) = \frac{\partial W(\vvec)}{\partial \vvec}\varphi_\ep(\svec)\cdot \tauvec$ almost everywhere,
\begin{equation}
     \frac{\partial W_\ep (\vvec)}{\partial \vvec} F(\vvec) =  \int_{\bbR^n} \frac{\partial W(\vvec-\svec)}{\partial \vvec} \cdot F(\vvec) \varphi_\ep(\svec)\mu(d\svec).
\end{equation}
For sufficiently small $\ep$, there exists $\ga > 0$ such that for all $\norm{\svec} \leq \ep$ and $\vvec\in S_2$ such that $W$ is differentiable at $\vvec-\svec$,
\begin{equation}
    \frac{\partial W(\vvec-\svec)}{\partial \vvec}F(\vvec) \leq -\ga \norm{\vvec-\vvec^*}^2.
\end{equation}
Thus, for $\vvec \in S_2$, we have
\begin{equation}
    \frac{\partial W_\ep (\vvec)}{\partial \vvec} F(\vvec) \leq -\ga \norm*{\vvec-\vvec^*}^2.
\end{equation}
Now, let $\beta_i:\calD \rightarrow [0,1]$ for $i=\{1,2,3\}$ be a smooth partition of unity subordinate to $\{S_1,S_2,S_3\}$, and let
\begin{equation}
    W_s(\vvec) = \beta_1(\vvec)W_1(\vvec) + \beta_2(\vvec)W_\ep(\vvec) + \beta_3(\vvec)W_2(\vvec).
\end{equation}
We have that
\begin{align}
    \begin{split}
    \frac{\partial W_s(\vvec)}{\partial \vvec}F(\vvec) & = \left(\beta_1(\vvec)\frac{\partial W_1(\vvec)}{\partial \vvec} + \beta_2(\vvec)\frac{\partial W_\ep(\vvec)}{\partial \vvec} + \beta_3(\vvec)\frac{\partial W_3(\vvec)}{\partial \vvec}\right) F(\vvec)
    \\&\quad + \left( \frac{\partial \beta_1(\vvec)}{\partial \vvec} W_1(\vvec) + \frac{\partial \beta_2(\vvec)}{\partial \vvec} W_\ep(\vvec) + \frac{\partial \beta_3(\vvec)}{\partial \vvec} W_2(\vvec) \right)F(\vvec),
    \end{split}\\
    \begin{split}
    \frac{\partial W_s(\vvec)}{\partial \vvec}F(\vvec) & = \left(\beta_1(\vvec)\frac{\partial W_1(\vvec)}{\partial \vvec} + \beta_2(\vvec)\frac{\partial W_\ep(\vvec)}{\partial \vvec} + \beta_3(\vvec)\frac{\partial W_3(\vvec)}{\partial \vvec}\right) F(\vvec)
    \\&\quad + \left( \frac{\partial \beta_1(\vvec)}{\partial \vvec} (W_1(\vvec)-W(\vvec)) + \frac{\partial \beta_2(\vvec)}{\partial \vvec} (W_\ep(\vvec)-W(\vvec)) \right.\\& \left. + \frac{\partial \beta_3(\vvec)}{\partial \vvec} (W_2(\vvec)-W(\vvec)) \right)F(\vvec),
    \end{split}
\end{align}
where we have used the fact that $\sum_{j=1}^3 \frac{\partial \beta_j(\vvec)}{\partial \vvec} = 0$.

Let us consider $\vvec \in \calX \cap (S_1 \setminus S_2)$. In this region, $W(\vvec) = W_1(\vvec)$, $\beta_2(\vvec)=\beta_3(\vvec) = 0$, and $\frac{\partial \beta_2(\vvec)}{\partial \vvec}=\frac{\partial \beta_3(\vvec)}{\partial \vvec} = 0$. Thus,
\begin{equation}
    \frac{\partial W_s(\vvec)}{\partial \vvec}F(\vvec) \leq -\ga_1 \norm*{\vvec-\vvec^*}^2.
\end{equation}

Now, let us consider $\vvec \in \overline{\calX \cap S_1 \cap S_2}$. In this region, $W(\vvec) = W_1(\vvec)$, $\beta_3(\vvec)=0$, and $\frac{\partial \beta_3(\vvec)}{\partial \vvec}= 0$, implying that
\begin{equation}
    \frac{\partial W_s(\vvec)}{\partial \vvec}F(\vvec)  \leq -\min\{\ga,\ga_1\} \norm*{\vvec-\vvec^*}^2  + \norm*{\frac{\partial \beta_2(\vvec)}{\partial \vvec} (W_\ep(\vvec)-W(\vvec))}.
\end{equation}
By e.g. Lemma B.2 in \cite{lin1996smooth}, $W_\ep(\vvec)\rightarrow W(\vvec)$ uniformly on the compact set $\overline{\calX \cap S_1 \cap S_2}$. Additionally, observe that $\min\limits_{\vvec \in\overline{\calX \cap S_1 \cap S_2}} \norm*{\vvec-\vvec^*}^2 > 0$, and thus for sufficiently small $\ep$, for all $\vvec \in \overline{\calX \cap S_1 \cap S_2}$,
\begin{equation}
    \frac{\partial W_s(\vvec)}{\partial \vvec}F(\vvec)  \leq - \frac{1}{2} \min\{\ga,\ga_1\} \norm*{\vvec-\vvec^*}^2.
\end{equation}

Now, let us consider $\vvec \in \overline{\calX \cap (S_2\setminus S_1 \setminus S_3)}$. In this region, $\beta_1(\vvec) = \beta_3(\vvec) = 0$ and $\frac{\partial \beta_1(\vvec)}{\partial \vvec} = \frac{\partial \beta_3(\vvec)}{\partial \vvec} = 0$, and thus
\begin{equation}
    \frac{\partial W_s(\vvec)}{\partial \vvec}F(\vvec)  \leq -\ga \norm*{\vvec-\vvec^*}^2  + \norm*{\frac{\partial \beta_2(\vvec)}{\partial \vvec} (W_\ep(\vvec)-W(\vvec))}.
\end{equation}
By noting that $\overline{\calX \cap (S_2\setminus S_1 \setminus S_3)}$ is compact, and again using Lemma B.2 from \cite{lin1996smooth} and the fact that\\ $\min\limits_{\vvec\in \overline{\calX \cap (S_2\setminus S_1 \setminus S_3)}} \norm*{\vvec-\vvec^*}^2 > 0$, we have that for sufficiently small $\ep$,
\begin{equation}
    \frac{\partial W_s(\vvec)}{\partial \vvec}F(\vvec)  \leq -\frac{1}{2} \ga \norm*{\vvec-\vvec^*}^2.
\end{equation}

Let us consider $\vvec \in \overline{\calX \cap (S_3 \cap S_2)}$. In this region, $\bt_1(\vvec)=0$ and $W(\vvec) = W_2(\vvec)$. Thus,
\begin{equation}
    \frac{\partial W_s(\vvec)}{\partial \vvec}F(\vvec)  \leq -\min\{\ga,\tilde{\ga}_2\} \norm*{\vvec-\vvec^*}^2 + \norm*{\frac{\partial \beta_2(\vvec)}{\partial \vvec} (W_\ep(\vvec)-W(\vvec))}.
\end{equation}
Again using Lemma B.2 from \cite{lin1996smooth}, and the fact that $\min\limits_{\vvec \in\overline{\calX \cap (S_3 \cap S_2)}} \norm*{\vvec-\vvec^*}^2 > 0$, we have that for sufficiently small epsilon,
\begin{equation}
    \frac{\partial W_s(\vvec)}{\partial \vvec}F(\vvec)  \leq -\frac{1}{2}\min\{\ga,\tilde{\ga}_2\} \norm*{\vvec-\vvec^*}^2.
\end{equation}

Finally, consider $\vvec \in \overline{\calX \cap (S_3 \setminus S_2)}$. In this region, $W(\vvec) = W_2(\vvec)$, $\bt_1(\vvec)=\bt_2(\vvec)=0$, and $\frac{\partial \beta_1(\vvec)}{\partial \vvec} = \frac{\partial \beta_2(\vvec)}{\partial \vvec} = 0$. Thus,
\begin{equation}
    \frac{\partial W_s(\vvec)}{\partial \vvec}F(\vvec)  \leq - \tilde{\ga}_2 \norm*{\vvec-\vvec^*}^2.
\end{equation}
Combining the preceding results for each subset of $\calX$, we have that for sufficiently small $\ep$, for all $\vvec\in \calX$,
\begin{equation}
     \frac{\partial W_s(\vvec)}{\partial \vvec}F(\vvec)  \leq - \frac{1}{2}\min\{\ga,\ga_1,\tilde{\ga}_2\} \norm*{\vvec-\vvec^*}^2.
\end{equation}
Observe that
\begin{equation}
    \sup_{\vvec\in\calX} \norm*{\grad^2{W_s}(\vvec)} = \max \left\{ \max_{\vvec \in \overline{\calX \setminus S_1 \setminus S_2}} \norm*{\grad^2{W_s}(\vvec)}, 4\bar{b}_1 \norm*{\cvec\cvec^T} \right\},
\end{equation}
where $\max_{\vvec \in \overline{\calX \setminus S_1 \setminus S_2}} \norm*{\grad^2{W_s}(\vvec)}$ exists by the smoothness of $W_s$ and the compactness of $\overline{\calX \setminus S_1 \setminus S_2}$. Therefore, for some sufficiently small $\ep$, we can apply Lemma  \ref{lem:bounded_hessian_lyap} with $V(\vvec) = W_s(\vvec)$ to conclude that the SCRN satisfies Condition \ref{cond:moment}.

We have now verified Conditions \ref{cond:k_1_moment_finite} and \ref{cond:moment}. Observe that $0 = F(\vvec^*)$, i.e. $\vvec^*$ is an equilibrium point of \eqref{eq:LNA_RRE}, and by Corollary 4.3 in \cite{khalil2002nonlinear}, the global exponential stability of $\vvec^*$ implies that $\frac{\partial F(\vvec^*)}{\partial \vvec}$ is Hurwitz. Thus, we can apply Theorem \ref{thm:moments_imply_convergence} to obtain the desired result.
\end{proof}

\section{Examples}\label{sec:examples}
\subsection{Antithetic motif}
The antithetic feedback motif is a common SCRN encountered in synthetic biology because it can approximately implement an integrator \cite{qian2018realizing,briat2016antithetic}. The antithetic motif is given by the following set of chemical reactions:
\begin{equation}\label{eq:ex1_reactions}
	\vcenter{\hbox{\schemestart
		\subscheme{$\emptyset$}
    		\arrow(z--x1){<=>[$k_2$][$k_1$]}[135]
		\subscheme{$\mathrm{X}_1$}
		\arrow(@z--x2){<=>[$k_3$][$k_4$]}[90]
		\subscheme{$\mathrm{X}_2$}
		\arrow(@z--x12){<-[$k_5$]}[45]
		\subscheme{$\mathrm{X}_1 + \mathrm{X}_2$}
	\schemestop}}
\end{equation}
The reaction vectors and propensities are:
\begin{align}
    \la_1(\xvec) & = \Om k_1, & \xivec_1 = \begin{bmatrix}+1 & 0 \end{bmatrix}^T,\\
    \la_2(\xvec) & = k_2x_1, & \xivec_2 = \begin{bmatrix}-1 & 0 \end{bmatrix}^T,\\
    \la_3(\xvec) & = \Om k_3, & \xivec_3 = \begin{bmatrix} 0 & +1 \end{bmatrix}^T,\\
    \la_4(\xvec) & = k_4x_2, & \xivec_4 = \begin{bmatrix} 0 & -1 \end{bmatrix}^T,\\
    \la_5(\xvec) & = \frac{k_5}{\Om}x_1x_2, & \xivec_5 = \begin{bmatrix} -1 & -1 \end{bmatrix}^T.\\
\end{align}
This SCRN satisfies Assumptions \ref{assum:propensities}, \ref{assum:C2b}, and \ref{assum:stoich_full_dim}, with $\calX_\Om = \bbZ_{\geq 0}^2$. We show that we can apply Theorem \ref{thm:RRE_GES}. Observe that as shown in \cite{blanchini2011structurally}, the RREs
\begin{align}
    \dot{v}_1 & = k_1 - k_2v_1 - k_5v_1v_2,\\
    \dot{v}_2 & = k_3 - k_4v_2 - k_5v_1v_2,
\end{align}
have a unique, globally exponentially stable equilibrium point $\vvec^*$ in $\bbR^n$, with $\vvec^* > 0$. Letting $\cvec = \begin{bmatrix} 1 & 1 \end{bmatrix}^T$ we have that $\cvec^TF(\vvec) = k_1 + k_2 - k_2v_1 - k_4v_2 - 2k_5v_1v_2 \leq k_1+k_2 - \min\{k_2,k_4\}(v_1+v_2)$, and thus there exists $d,\ga>0$ such for all $\vvec \in \left\{\vvec \in \bbR_{\geq 0}^2 \middle| \cvec^T\vvec \geq d \right\}$, $\cvec^TF(\vvec) \leq -\ga \cvec^T\vvec$. Therefore, by Theorem \ref{thm:RRE_GES}, there exists $C,\Om_0'>0$ such that for all $\Omega \geq \Omega_0'$,
    \begin{equation}
        \Wass\left(\tilde{\Xvec}_\Om^\infty, \Yvec^\infty\right)
        \leq C \frac{\ln\Omega}{\sqrt{\Omega}}.
    \end{equation}

\subsection{Transcriptional activation}
We consider a simple model of a protein $P$ that is transcriptionally regulated by $R$ binding to the promoter $T$ to form complex $C$ \cite{BFS}:
\begin{align}\label{eq:ex2_reactions}
	&\vcenter{\hbox{\schemestart
		\subscheme{$\mathrm{R}+\mathrm{T}$}
    		\arrow(z--x1){<=>[$k_1$][$k_2$]}[0]
		\subscheme{$\mathrm{C}$}
	\schemestop}}\\
    &\vcenter{\hbox{\schemestart
		\subscheme{$\mathrm{C}$}
    		\arrow(z--x1){->[$k_3$]}[0]
		\subscheme{$\mathrm{C}+\mathrm{P}$}
	\schemestop}}\\
    &\vcenter{\hbox{\schemestart
		\subscheme{$\mathrm{P}$}
    		\arrow(z--x1){->[$k_4$]}[0]
		\subscheme{$\emptyset$}
	\schemestop}}
\end{align}
This SCRN has four species, however, we can use the two conservation laws $R+C = R_{tot}$ and $T+C = T_{tot}$ to obtain the following reduced model:
\begin{align}
    X_1 & = C\\
    X_2 & = P
\end{align}
with reactions
\begin{align}
    \la_1(\xvec) & = \frac{k_1}{\Om}(R_{tot}-x_1)(T_{tot}-x_1), & \xivec_1 = \begin{bmatrix}+1 & 0 \end{bmatrix}^T,\\
    \la_2(\xvec) & = k_2x_1, & \xivec_2 = \begin{bmatrix}-1 & 0 \end{bmatrix}^T,\\
    \la_3(\xvec) & = k_3x_1, & \xivec_3 = \begin{bmatrix} 0 & +1 \end{bmatrix}^T,\\
    \la_4(\xvec) & = k_4x_2, & \xivec_4 = \begin{bmatrix} 0 & -1 \end{bmatrix}^T,\\
\end{align}
resulting in the RREs
\begin{align}
    \dot{v}_1 & = k_1(\bar{R}_{tot}-v_1)(\bar{T}_{tot}-v_1) - k_2v_1,\\
    \dot{v}_2 & = k_3v_1 - k_4v_2.
\end{align}
If we set $R_{tot} = \Om \bar{R}_{tot}$ and $T_{tot} = \Om\bar{T}_{tot}$, with $\bar{R}_{tot},\bar{T}_{tot} \in \bbQ$, the SCRN satisfies Assumptions \ref{assum:propensities} and \ref{assum:C2b} with $\calX = [0, \min\{\bar{R}_{tot},\bar{T}_{tot}\}]\times [0, \infty)$. Additionally, the SCRN satisfies Assumption \ref{assum:stoich_full_dim}. The RREs have a unique, globally exponentially stable equilibrium point in $\calX$, and letting $\cvec = \begin{bmatrix} 0 & 1 \end{bmatrix}^T$, we have $\cvec^TF(\vvec) = k_3v_1 - k_4v_2$, and so there exists $\ga,d>0$ such that for all $\vvec \in \left\{\vvec\in\calX\middle|\cvec^T\vvec \geq d\right\}$, $\cvec^TF(\vvec) \leq - \ga \cvec^T\vvec$. Therefore, by Theorem \ref{thm:RRE_GES}, there exists $C,\Om_0'>0$ such that for all $\Om\in\left\{\Om\in\calO\middle|\Omega \geq \Omega_0'\right\}$,
    \begin{equation}
        \Wass\left(\tilde{\Xvec}_\Om^\infty, \Yvec^\infty\right)
        \leq C \frac{\ln\Omega}{\sqrt{\Omega}}.
    \end{equation}

\section{Conclusion}\label{sec:conclusion}
In this work we studied the relationship between the stationary distributions of appropriately scaled Markov chains corresponding to SCRNs, and the steady state behavior of the RRE and LNA models which are commonly used as approximations. Using Stein's Method, we obtained a bound on the 1-Wasserstein distance between $\Xvectil_\Om^\infty = \frac{1}{\sqrt{\Om}}\left(\Xvec_\Om^\infty - \Om\vvec^*\right)$ and $\Yvec^\infty$, the fluctuation term of the LNA, that is proportional to $\frac{\ln \Om}{\sqrt{\Om}}$, and a bound on the 1-Wasserstein distance between $\frac{1}{\Om}\Xvec_\Om^\infty$ and $\vvec^*$ that is proportional to $\frac{1}{\sqrt{\Om}}$. Our main result requires one to control the second moment of $\Xvectil_\Om^\infty$, and we presented a Foster-Lyapunov based method to do so by analyzing just the RREs. Our condition requires global exponential stability of $\vvec^*$ in the RRE, along with the existence of a linear function with exponential decay far from the origin. While proving the global exponential stability of an equilibrium point of the RRE is in and of itself a challenging task, a large body of literature on the topic is available, since the RRE have been the subject on intense study over the years. In fact, recent advances in global stability of equilibria of the RRE can potentially be leveraged via our results to analyze the error in the RRE and LNA for large volume \cite{al2015new,al2023graphical}. The authors foresee this work applying to model reduction via timescale separation of SCRNs, and system identification algorithms, where the computational simplicity of the LNA makes it an attractive model. In both cases, the fact that we can obtain non-asymptotic error bounds on the LNA is critical to analyzing the error in the reduced models and the identified system respectively.

\section{Acknowledgments}
This work was supported in part by the U.S. AFOSR under grant FA9550-22-1-0316. The authors thank Dylan Hirsch for his comments on the proofs of Lemma \ref{lem:wasserstein_C3} and Theorem \ref{thm:RRE_GES}, and thank Ruth J Williams for her helpful feedback on a preliminary version of this work.

\bibliographystyle{plain} %
\bibliography{myReferences}       %

\begin{thebibliography}{10}

\bibitem{al2023graphical}
M~Ali Al-Radhawi.
\newblock Graphical characterizations of robust stability in biological
  interaction networks.
\newblock {\em Mathematics of Control, Signals, and Systems}, pages 1--33,
  2023.

\bibitem{al2015new}
Muhammad~Ali Al-Radhawi and David Angeli.
\newblock New approach to the stability of chemical reaction networks:
  Piecewise linear in rates lyapunov functions.
\newblock {\em IEEE Transactions on Automatic Control}, 61(1):76--89, 2015.

\bibitem{Alon:2006aa}
Uri Alon.
\newblock {\em An introduction to systems biology: design principles of
  biological circuits}.
\newblock Chapman and Hall/CRC, 2006.

\bibitem{anderson2010product}
David~F Anderson, Gheorghe Craciun, and Thomas~G Kurtz.
\newblock Product-form stationary distributions for deficiency zero chemical
  reaction networks.
\newblock {\em Bulletin of mathematical biology}, 72:1947--1970, 2010.

\bibitem{anderson2015stochastic}
David~F Anderson and Thomas~G Kurtz.
\newblock {\em Stochastic analysis of biochemical systems}, volume 674.
\newblock Springer, 2015.

\bibitem{arkin1998stochastic}
Adam Arkin, John Ross, and Harley~H McAdams.
\newblock {Stochastic Kinetic Analysis of Developmental Pathway Bifurcation in
  Phage $\lambda$-Infected Escherichia coli Cells}.
\newblock {\em Genetics}, 149(4):1633--1648, 08 1998.

\bibitem{barbour1988stein}
Andrew~D Barbour.
\newblock Stein's method and poisson process convergence.
\newblock {\em Journal of Applied Probability}, 25(A):175--184, 1988.

\bibitem{barbour1990stein}
Andrew~D Barbour.
\newblock Stein's method for diffusion approximations.
\newblock {\em Probability theory and related fields}, 84(3):297--322, 1990.

\bibitem{blanchini2011structurally}
Franco Blanchini and Elisa Franco.
\newblock Structurally robust biological networks.
\newblock {\em BMC Systems Biology}, 5(1):74, 2011.

\bibitem{braverman2017steinMPH}
Anton Braverman and JG~Dai.
\newblock Stein's method for steady-state diffusion approximations of m/ph/n+ m
  systems.
\newblock {\em Annals of Applied Probability}, 27(1):550--581, 2017.

\bibitem{braverman2017stein}
Anton Braverman, JG~Dai, and Jiekun Feng.
\newblock Stein's method for steady-state diffusion approximations: an
  introduction through the erlang-a and erlang-c models.
\newblock {\em Stochastic Systems}, 6(2):301--366, 2017.

\bibitem{briat2016antithetic}
Corentin Briat, Ankit Gupta, and Mustafa Khammash.
\newblock Antithetic integral feedback ensures robust perfect adaptation in
  noisy biomolecular networks.
\newblock {\em Cell Systems}, 2(1):15--26, 2016.

\bibitem{chatterjee2014short}
Sourav Chatterjee.
\newblock A short survey of stein's method.
\newblock In {\em Proceedings of the Proceedings of the International Congress
  of Mathematicians}, pages 1--24, 2014.

\bibitem{BFS}
D.~{Del Vecchio} and R.~M. Murray.
\newblock {\em Biomolecular Feedback Systems}.
\newblock Princeton University Press, 1st edition, 2014.

\bibitem{desoer1972measure}
Charles Desoer and Hiromasa Haneda.
\newblock The measure of a matrix as a tool to analyze computer algorithms for
  circuit analysis.
\newblock {\em IEEE Transactions on Circuit Theory}, 19(5):480--486, 1972.

\bibitem{doob1945markoff}
Joseph~L Doob.
\newblock Markoff chains---denumerable case.
\newblock {\em Transactions of the American Mathematical Society}, 58:455--473,
  1945.

\bibitem{ehm1991binomial}
Werner Ehm.
\newblock Binomial approximation to the poisson binomial distribution.
\newblock {\em Statistics \& Probability Letters}, 11(1):7--16, 1991.

\bibitem{elf2003fast}
Johan Elf and M{\aa}ns Ehrenberg.
\newblock Fast evaluation of fluctuations in biochemical networks with the
  linear noise approximation.
\newblock {\em Genome research}, 13(11):2475--2484, 2003.

\bibitem{elowitz2000synthetic}
M.~B. Elowitz and S.~Leibler.
\newblock A synthetic oscillatory network of transcriptional regulators.
\newblock {\em Nature}, 403:339--342, 2000.

\bibitem{elowitz2002stochastic}
Michael~B Elowitz, Arnold~J Levine, Eric~D Siggia, and Peter~S Swain.
\newblock Stochastic gene expression in a single cell.
\newblock {\em Science}, 297(5584):1183--1186, 2002.

\bibitem{ethier2009markov}
Stewart~N Ethier and Thomas~G Kurtz.
\newblock {\em Markov processes: characterization and convergence}.
\newblock John Wiley \& Sons, 2009.

\bibitem{fearnhead2014inference}
Paul Fearnhead, Vasilieos Giagos, and Chris Sherlock.
\newblock Inference for reaction networks using the linear noise approximation.
\newblock {\em Biometrics}, 70(2):457--466, 2014.

\bibitem{finkenstadt2013quantifying}
B{\"a}rbel Finkenst{\"a}dt, Dan~J Woodcock, Michal Komorowski, Claire~V Harper,
  Julian~RE Davis, Mike~RH White, and David~A Rand.
\newblock Quantifying intrinsic and extrinsic noise in gene transcription using
  the linear noise approximation: An application to single cell data.
\newblock {\em The Annals of Applied Statistics}, pages 1960--1982, 2013.

\bibitem{gardner2000construction}
T.S. Gardner, C.R. Cantor, and J.J. Collins.
\newblock Construction of the genetic toggle switch in \it {E}scherichia
  {C}oli\rm.
\newblock {\em Nature}, 403:339--342, 2000.

\bibitem{gibbs2002choosing}
Alison~L Gibbs and Francis~Edward Su.
\newblock On choosing and bounding probability metrics.
\newblock {\em International statistical review}, 70(3):419--435, 2002.

\bibitem{gillespie1976general}
Daniel~T Gillespie.
\newblock A general method for numerically simulating the stochastic time
  evolution of coupled chemical reactions.
\newblock {\em Journal of computational physics}, 22(4):403--434, 1976.

\bibitem{gillespie1992rigorous}
Daniel~T Gillespie.
\newblock A rigorous derivation of the chemical master equation.
\newblock {\em Physica A: Statistical Mechanics and its Applications},
  188(1-3):404--425, 1992.

\bibitem{gillespie2000chemical}
Daniel~T. Gillespie.
\newblock The chemical langevin equation.
\newblock {\em The Journal of Chemical Physics}, 113(1):297--306, 2000.

\bibitem{glynn2008bounding}
Peter~W Glynn and Assaf Zeevi.
\newblock Bounding stationary expectations of markov processes.
\newblock In {\em Markov processes and related topics: a Festschrift for Thomas
  G. Kurtz}, pages 195--214. Institute of Mathematical Statistics, 2008.

\bibitem{gorham2019measuring}
Jackson Gorham, Andrew~B Duncan, Sebastian~J Vollmer, and Lester Mackey.
\newblock Measuring sample quality with diffusions.
\newblock {\em The Annals of Applied Probability}, 29(5):2884--2928, 2019.

\bibitem{grima2015linear-noise}
Ramon Grima.
\newblock Linear-noise approximation and the chemical master equation agree up
  to second-order moments for a class of chemical systems.
\newblock {\em Phys. Rev. E}, 92:042124, Oct 2015.

\bibitem{gupta2014scalable}
Ankit Gupta, Corentin Briat, and Mustafa Khammash.
\newblock A scalable computational framework for establishing long-term
  behavior of stochastic reaction networks.
\newblock {\em PLoS computational biology}, 10(6):e1003669, 2014.

\bibitem{gurvich2014diffusion}
Itai Gurvich.
\newblock Diffusion models and steady-state approximations for exponentially
  ergodic markovian queues.
\newblock {\em Annals of Applied Probability}, 24(6):2527--2559, 2014.

\bibitem{hirsch2023error}
Dylan Hirsch, Theodore~W Grunberg, and Domitilla Del~Vecchio.
\newblock Error bound for hill-function approximations in a class of stochastic
  transcriptional network models.
\newblock In {\em Proceedings of the 62st IEEE Conference on Decision and
  Control}, Singapore, 2023. IEEE.

\bibitem{kampen1961power}
NG~van Kampen.
\newblock A power series expansion of the master equation.
\newblock {\em Canadian Journal of Physics}, 39(4):551--567, 1961.

\bibitem{kang2013separation}
Hye-Won Kang and Thomas~G Kurtz.
\newblock Separation of time-scales and model reduction for stochastic reaction
  networks.
\newblock {\em The Annals of Applied Probability}, 23(2):529--583, 2013.

\bibitem{khalil2002nonlinear}
Hassan~K Khalil.
\newblock {\em Nonlinear Systems}.
\newblock Prentice Hall, Upper Saddle River, NJ, 3\textsuperscript{rd} edition,
  2002.

\bibitem{komorowski2009bayesian}
Micha{\l} Komorowski, B{\"a}rbel Finkenst{\"a}dt, Claire~V Harper, and David~A
  Rand.
\newblock Bayesian inference of biochemical kinetic parameters using the linear
  noise approximation.
\newblock {\em BMC bioinformatics}, 10:1--10, 2009.

\bibitem{kurtz1972relationship}
Thomas~G Kurtz.
\newblock The relationship between stochastic and deterministic models for
  chemical reactions.
\newblock {\em The Journal of Chemical Physics}, 57(7):2976--2978, 1972.

\bibitem{kurtz1976limit}
Thomas~G Kurtz.
\newblock Limit theorems and diffusion approximations for density dependent
  markov chains.
\newblock {\em Stochastic Systems: Modeling, Identification and Optimization,
  I}, pages 67--78, 1976.

\bibitem{kurtz1978strong}
Thomas~G Kurtz.
\newblock Strong approximation theorems for density dependent markov chains.
\newblock {\em Stochastic Processes and their Applications}, 6(3):223--240,
  1978.

\bibitem{leite2019constrained}
Saul~C Leite and Ruth~J Williams.
\newblock A constrained langevin approximation for chemical reaction networks.
\newblock {\em The Annals of Applied Probability}, 29(3):1541--1608, 2019.

\bibitem{ioannis2008noise}
Ioannis Lestas, Johan Paulsson, Nicholas~E. Ross, and Glenn Vinnicombe.
\newblock Noise in gene regulatory networks.
\newblock {\em IEEE Transactions on Automatic Control}, 53(Special
  Issue):189--200, 2008.

\bibitem{lin1996smooth}
Yuandan Lin, Eduardo~D Sontag, and Yuan Wang.
\newblock A smooth converse lyapunov theorem for robust stability.
\newblock {\em SIAM Journal on Control and Optimization}, 34(1):124--160, 1996.

\bibitem{lohmiller2000nonlinear}
Winfried Lohmiller and Jean-Jacques~E Slotine.
\newblock Nonlinear process control using contraction theory.
\newblock {\em AIChE journal}, 46(3):588--596, 2000.

\bibitem{mcadams1997stochastic}
Harley~H. McAdams and Adam Arkin.
\newblock Stochastic mechanisms in gene expression.
\newblock {\em Proceedings of the National Academy of Sciences},
  94(3):814--819, 1997.

\bibitem{melykuti2014equilibrium}
Bence M{\'e}lyk{\'u}ti, Joao~P Hespanha, and Mustafa Khammash.
\newblock Equilibrium distributions of simple biochemical reaction systems for
  time-scale separation in stochastic reaction networks.
\newblock {\em Journal of The Royal Society Interface}, 11(97):20140054, 2014.

\bibitem{meng2017recursively}
X~Flora Meng, Ania-Ariadna Baetica, Vipul Singhal, and Richard~M Murray.
\newblock Recursively constructing analytic expressions for equilibrium
  distributions of stochastic biochemical reaction networks.
\newblock {\em Journal of The Royal Society Interface}, 14(130):20170157, 2017.

\bibitem{meyn1993stability}
Sean~P Meyn and Richard~L Tweedie.
\newblock Stability of markovian processes {II}: Continuous-time processes and
  sampled chains.
\newblock {\em Advances in Applied Probability}, 25(3):487--517, 1993.

\bibitem{paulsson2004summing}
Johan Paulsson.
\newblock Summing up the noise in gene networks.
\newblock {\em Nature}, 427(6973):415--418, 2004.

\bibitem{paulsson2005models}
Johan Paulsson.
\newblock Models of stochastic gene expression.
\newblock {\em Physics of life reviews}, 2(2):157--175, 2005.

\bibitem{paulsson2000random}
Johan Paulsson and M{\aa}ns Ehrenberg.
\newblock Random signal fluctuations can reduce random fluctuations in
  regulated components of chemical regulatory networks.
\newblock {\em Physical review letters}, 84(23):5447, 2000.

\bibitem{qian2018realizing}
Yili Qian and Domitilla Del~Vecchio.
\newblock Realizing integral control in living cells: how to overcome leaky
  integration due to dilution?
\newblock {\em Journal of The Royal Society Interface}, 15(139), 2018.

\bibitem{schnoerr2014complex}
David Schnoerr, Guido Sanguinetti, and Ramon Grima.
\newblock The complex chemical langevin equation.
\newblock {\em The Journal of chemical physics}, 141(2), 2014.

\bibitem{singh2011negative}
Abhyudai Singh.
\newblock Negative feedback through mrna provides the best control of
  gene-expression noise.
\newblock {\em IEEE transactions on nanobioscience}, 10(3):194--200, 2011.

\bibitem{singh2009optimal}
Abhyudai Singh and Joao~P Hespanha.
\newblock Optimal feedback strength for noise suppression in autoregulatory
  gene networks.
\newblock {\em Biophysical journal}, 96(10):4013--4023, 2009.

\bibitem{sontag2015exact}
Eduardo~D Sontag and Abhyudai Singh.
\newblock Exact moment dynamics for feedforward nonlinear chemical reaction
  networks.
\newblock {\em IEEE Life Sciences Letters}, 1(2):26--29, 2015.

\bibitem{stein1986approximate}
C~Stein.
\newblock Approximate computation of expectations, institute of mathematical
  statistics lecture notes---monograph series, 7, institute of mathematical
  statistics, hayward, ca, 1986.
\newblock {\em IMS Lecture Notes and Monograph Series}, 7, 1986.

\bibitem{stein1972bound}
Charles Stein.
\newblock A bound for the error in the normal approximation to the distribution
  of a sum of dependent random variables.
\newblock In {\em Proceedings of the Sixth Berkeley Symposium on Mathematical
  Statistics and Probability, Volume 2: Probability Theory}, volume~6, pages
  583--603. University of California Press, 1972.

\bibitem{tao2007effect}
Yi~Tao, Xiudeng Zheng, and Yuehua Sun.
\newblock Effect of feedback regulation on stochastic gene expression.
\newblock {\em Journal of theoretical biology}, 247(4):827--836, 2007.

\bibitem{tian2006stochastic}
Tianhai Tian and Kevin Burrage.
\newblock Stochastic models for regulatory networks of the genetic toggle
  switch.
\newblock {\em Proceedings of the national Academy of Sciences},
  103(22):8372--8377, 2006.

\bibitem{van1992stochastic}
Nicolaas~Godfried Van~Kampen.
\newblock {\em Stochastic processes in physics and chemistry}, volume~1.
\newblock Elsevier, 1992.

\bibitem{villani2009optimal}
C{\'e}dric Villani.
\newblock {\em Optimal transport: old and new}, volume 338.
\newblock Springer, 2009.

\bibitem{walter2010theory}
G~Kelley Walter and C~Peterson Allan.
\newblock The theory of differential equations classical and qualitative, 2010.

\bibitem{weinberger2005stochastic}
Leor~S. Weinberger, John~C. Burnett, Jared~E. Toettcher, Adam~P. Arkin, and
  David~V. Schaffer.
\newblock Stochastic gene expression in a lentiviral positive-feedback loop:
  Hiv-1 tat fluctuations drive phenotypic diversity.
\newblock {\em Cell}, 122(2):169--182, 2005.

\bibitem{whittle1986systems}
Peter Whittle.
\newblock {\em Systems in stochastic equilibrium}.
\newblock John Wiley \& Sons, Inc., 1986.

\bibitem{wilkinson2009stochastic}
Darren~J Wilkinson.
\newblock Stochastic modelling for quantitative description of heterogeneous
  biological systems.
\newblock {\em Nature Reviews Genetics}, 10(2):122--133, 2009.

\end{thebibliography}

\end{document}